\DeclareMathOperator{\disj}{disj}
\DeclareMathOperator{\dist}{dist}
\renewcommand{\geq}{\geqslant}
\newcommand{\A}{\mathcal{A}}
\newcommand{\B}{\mathcal{B}}
\newcommand{\C}{\mathcal{C}}
\newcommand{\G}{\mathcal{G}}
\newcommand{\cL}{\mathcal{L}}
\newcommand{\W}{\mathcal{W}}
\newcommand{\X}{\mathcal{X}}
\newcommand{\Y}{\mathcal{Y}}
\newcommand{\barz}{{\bar z}}
\newcommand{\barD}{{\bar D}}
\newcommand{\barH}{{\bar H}}
\newcommand{\barF}{{\bar F}}
\newcommand{\barFd}{{\bar F_{\disj}}}
\newcommand{\barFdk}{{\bar F_{\disj,k}}}
\newcommand{\indeg}{\text{indeg}}
\newcommand{\outdeg}{\text{outdeg}}
\newcommand{\IN}{\Gamma^{\textsc{in}}}
\newcommand{\OUT}{\Gamma^{\textsc{out}}}
\newcommand{\inedges}{{\cal E}^{\textsc{in}}}
\newcommand{\outedges}{{\cal E}^{\textsc{out}}}
\DeclareMathOperator{\depth}{\textsc{depth}}
\DeclareMathOperator{\domtree}{\textsc{DomTree}}
\newtheorem{theorem}{Theorem}[section] 
\newtheorem{definition}[theorem]{Definition}
\newtheorem{lemma}[theorem]{Lemma}
\newtheorem{claim}[theorem]{Claim}
\title{
Efficient Algorithms for Disjoint Shortest Paths Problem and its Extensions
}
\author{
  Keerti Choudhary\thanks{Department of Computer Science and Engineering, IIT Delhi, India. Email: keerti@cse.iitd.ac.in} \and
  Amit Kumar\thanks{Department of Computer Science and Engineering, IIT Delhi, India. Email: amitk@cse.iitd.ac.in} \and
  Lakshay Saggi\thanks{Department of Computer Science and Engineering, IIT Delhi, India. Email: csz228231@cse.iitd.ac.in}
}
\date{}
\begin{document}
\maketitle
\thispagestyle{empty}

\begin{abstract}
We study the \emph{2-Disjoint Shortest Paths (2-DSP)} problem: given a directed weighted graph and two terminal pairs $(s_1,t_1)$ and $(s_2,t_2)$, decide whether there exist vertex-disjoint shortest paths between each pair.

Building on recent advances in disjoint shortest paths for DAGs and undirected graphs~(Akmal et al. 2024), we present an $O(mn \log n)$-time algorithm for this problem in weighted directed graphs that do not contain negative or zero weight cycles. This algorithm presents a significant improvement over the previously known $O(m^5n)$-time bound~(Berczi et al. 2017). Our approach exploits the algebraic structure of polynomials that enumerate shortest paths between terminal pairs. A key insight is that these polynomials admit a recursive decomposition, enabling efficient evaluation via dynamic programming over fields of characteristic~two. Furthermore, we demonstrate how to report the corresponding paths in $O(mn^2 \log n)$-time.

In addition, we extend our techniques to a more general setting: given two terminal pairs $(s_1, t_1)$ and $(s_2, t_2)$ in a directed graph, find the minimum possible number of vertex intersections between any shortest path from $s_1$ to $t_1$ and $s_2$  to $t_2$. We call this the \emph{Minimum 2-Disjoint Shortest Paths (Min-2-DSP)} problem. We provide in this paper the first efficient algorithm for this problem, including an $O(m^2 n^3)$-time algorithm for directed graphs with positive edge weights, and an $O(m+n)$-time algorithm for DAGs
and undirected graphs. 
Moreover, if the number of intersecting vertices is at least one, we show that it is possible to report the paths in the same $O(m+n)$-time. 
This is somewhat surprising, as there is no known $o(mn)$ time algorithm for explicitly reporting the paths if they are vertex-disjoint, and is left as an open problem in~(Akmal et al. 2024).
\end{abstract}

\newpage

\setcounter{page}{1} 
\section{Introduction} 
The {\em 2-Disjoint Shortest Paths (2-DSP)} problem seeks to determine whether a graph contains two vertex-disjoint paths such that each path is a shortest path between its designated source and target vertices. More formally, given a weighted graph $G = (V, E)$ with edge weights $\ell: E \to \mathbb{R}$ and two source-target pairs $(s_1, t_1)$ and $(s_2, t_2)$, the goal is to determine whether there exist vertex-disjoint paths $P_1$ and $P_2$, where $P_1$ is a shortest $s_1$--$t_1$ path and $P_2$ is a shortest $s_2$--$t_2$ path. 

The 2-DSP problem is a fundamental problem in combinatorial optimization with applications in network flows, routing, and circuit design~\cite{Thompson80, SrinivasM2005,  Ogier93}. Beyond its practical relevance, the problem has attracted significant attention because its complexity varies  across graph classes, giving rise to a rich algorithmic landscape. 

The 2-DSP problem has a rich history of algorithmic results. Eilam-Tzoreff~\cite{Tzoreff98} gave the first polynomial-time algorithm with running time $O(n^8)$. This was improved by Akhmedov~\cite{Akhmedov20} to $O(n^7)$ for the weighted case and $O(n^6)$ for the unweighted case. Bentert et al.~\cite{BentertNRZ21} obtained an $O(mn)$ algorithm for unweighted graphs. More recently, Akmal, Vassilevska Williams, and Wein~\cite{AkmalVW24} achieved an optimal $O(m+n)$ algorithm for weighted undirected graphs, and extended their techniques to obtain the same bound for weighted DAGs. Their algorithm employs an algebraic framework based on path-enumerating polynomials, where cancellations over a field of characteristic two are used to detect disjoint shortest paths.  


The situation is markedly different for general directed graphs. Here, the first polynomial-time algorithm was given only recently by Bérczi and Kobayashi~\cite{BercziK17}, whose method requires strictly positive edge weights and runs in $O(m^5n)$ time. Thus, despite decades of progress in the undirected setting, the directed case has remained far less understood, with a substantial gap between known upper bounds and what might be achievable.


Our work addresses the central question: \textbf{Can 2-DSP be solved efficiently in general directed weighted graphs?} We give an affirmative answer, presenting new algorithms that both broaden the class of graphs to general weighted graphs without negative or zero weight cycles and significantly improve the running time. Our work builds on the framework of~\cite{AkmalVW24} by employing algebraic techniques  for suitable enumerating polynomials. However, the complexity of path intersections in general directed graphs poses new challenges and we require significantly new ideas to obtain a recursive  decomposition of these polynomials. 


We also introduce and study a natural relaxation of 2-DSP. When no pair of vertex-disjoint shortest paths exists, one may instead ask for two shortest paths whose overlap is as small as possible. This motivates the \emph{Minimum 2-Disjoint Shortest Paths (Min-2-DSP)} problem: given two source–target pairs $(x_1,y_1)$ and $(x_2,y_2)$, find shortest paths $P_1$ and $P_2$ minimizing $|V(P_1) \cap V(P_2)|$, where $V(P)$ denotes the vertex set of path $P$. Our techniques extend naturally to this more general problem, yielding the first polynomial-time algorithms for Min-2-DSP across several graph classes. The problem is inherently more intricate than 2-DSP, since it requires reasoning not only about the existence of disjoint paths but also about quantifying and optimizing the extent of their intersections.



\subsection{Our Contributions}

Our first result settles the complexity of 2-DSP in directed weighted graphs without negative or zero weight cycles, giving the first efficient algorithm in this general setting. Further, it significantly improves the previously best-known $O(m^5n)$-time bound  even for the special case of directed graphs with strictly positive edge weights.  

\begin{restatable}{theorem}{twoDSPdirected}
\label{thm:twoDSPdirected-intro}
The \textsf{2-DSP} problem in directed weighted graphs having no cycles of negative or zero weight is solvable in $O(mn\log n)$ time. 
Moreover, we can report the explicit paths, if they exist, in $O(mn^2\log n)$ time.
\end{restatable}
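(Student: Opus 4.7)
My plan is to extend the algebraic polynomial framework of Akmal, Vassilevska Williams, and Wein from DAGs to general directed graphs without non-positive cycles. The argument will proceed in three phases: algebraic setup, efficient evaluation via a recursive decomposition, and path recovery through self-reducibility.

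First, I would compute the four distance arrays $d(s_i,\cdot)$ and $d(\cdot,t_i)$ for $i \in \{1,2\}$ in $O(mn)$ time using Bellman--Ford (valid since no cycle has weight $\leq 0$). Let $D_i$ denote the subgraph of edges $(u,v)$ satisfying $d(s_i,u)+w(u,v)+d(v,t_i)=d(s_i,t_i)$. Each $D_i$ carries exactly the shortest $s_i$-$t_i$ paths, and because every cycle has strictly positive weight, each $D_i$ is itself a DAG. Next, I would define an enumerating polynomial $\Phi$ in indeterminates $\{x_v : v \in V\}$ over a field of characteristic two, whose terms range over all pairs $(Q_1,Q_2)$ of shortest $s_i$-$t_i$ paths in $D_i$, with each pair contributing the monomial $\prod_{v \in V(Q_1) \cap V(Q_2)} x_v$. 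The key structural claim, adapted from the DAG setting, is that non-disjoint pairs come in twins: swapping the suffixes of the two paths at their first common vertex produces a partner pair with an identical monomial, so these contributions cancel over $\mathbb{F}_2$. Consequently $\Phi \neq 0$ if and only if a vertex-disjoint pair exists, reducing the decision problem to polynomial identity testing via Schwartz--Zippel over $\mathbb{F}_{2^k}$ for suitable $k = O(\log n)$.

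The main technical obstacle, and where I expect the bulk of the work to lie, is computing (the evaluation of) $\Phi$ in $O(mn \log n)$ time. A naive dynamic program over joint traversal states $(u,v)$ uses $\Omega(n^2)$ states and $\Omega(n^4)$ transitions, and the usual DAG topological-order recursion cannot be applied directly to $D_1 \cup D_2$, which may contain cycles --- an edge $(u,v) \in D_1$ can coexist with $(v,u) \in D_2$. The heart of the proof will be a new recursive decomposition that exploits the fact that each $D_i$ is separately a DAG. I plan to select a pivot vertex $v$ (for instance along the shortest $s_1$-$t_1$ path) and partition the terms of $\Phi$ according to whether $v$ appears in $Q_1$, in $Q_2$, in both, or in neither; each case reduces to a subproblem on strictly smaller shortest-path DAGs defined by slicing at $v$ or forbidding $v$. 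The crux is to show, via a charging argument across $O(n)$ pivot choices, that the total work sums to $O(mn \log n)$ field operations.

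Finally, to report explicit paths in $O(mn^2 \log n)$ time, I would apply standard self-reducibility: iterate over the $O(n)$ vertices in some shortest-path ordering and, for each one, use a single polynomial-identity test under the constraint that forces the vertex into $Q_1$, into $Q_2$, or into neither. Each test costs $O(mn \log n)$ by the evaluation procedure above, yielding the stated $O(mn^2 \log n)$ bound; committing to a consistent extension at every step reconstructs the two disjoint shortest paths edge by edge.
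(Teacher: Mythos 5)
Your high-level plan (algebraic framework over characteristic two, Schwartz--Zippel, a recursive decomposition for fast evaluation, self-reducibility for path recovery) matches the paper's strategy, but two concrete pieces of your proposal do not work as stated, and the second one is exactly the central new difficulty the paper has to overcome.

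First, your choice of polynomial is unsound. You define $\Phi$ with a monomial $\prod_{v \in V(Q_1)\cap V(Q_2)} x_v$ per pair $(Q_1,Q_2)$, over characteristic two. Every vertex-disjoint pair then contributes the same monomial $1$, so if the number of disjoint pairs is even, $\Phi$'s constant term vanishes and Schwartz--Zippel reports ``no'' even though a disjoint pair exists. The paper instead uses edge indeterminates $z_e$ and assigns to each pair the monomial $f(P_1)f(P_2)=\prod_{e\in P_1}z_e\cdot\prod_{e\in P_2}z_e$; distinct internally vertex-disjoint pairs yield distinct monomials (each such pair is recoverable from its edge set), so the enumerating polynomial $F_{\disj}$ is nonzero precisely when a disjoint pair exists. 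You need monomials that separate distinct \emph{disjoint} pairs, not monomials that only record the intersection.

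Second, and more fundamentally, the cancellation lemma you invoke --- ``swap the suffixes at the first common vertex, getting a twin with the same monomial'' --- is the DAG argument, and it breaks in general directed graphs. If $v$ is the first common vertex along $P_1$, the swapped walk $P_1[s_1,v]\cdot P_2[v,t_2]$ need not be a simple path: since $D_1\cup D_2$ can have cycles, the prefix $P_1[s_1,v]$ may intersect the suffix $P_2[v,t_2]$, in which case the ``twin'' is not a valid pair of shortest paths and the cancellation does not happen. This is precisely the structural complication the paper isolates: in a directed graph, $v$ being the first intersection along $P_1$ is equivalent to \emph{three} segments, $P_1[s_1,v]$, $P_2[s_2,v]$, and $P_2[v,t_2]$, being pairwise internally disjoint, not just the first two. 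The paper builds its recursion around this three-way condition through the helper polynomials $H_v$ and $H_{av}$ (which reduce to $F_{\disj}$ on the $O(m)$ quadruples $(s_1,v,v,t_2)$ and $(s_1,a,v,t_2)$), and uses a characteristic-two cancellation that swaps only the single edge into $v$ rather than an entire suffix. Your ``pivot vertex with four cases'' sketch does not yet contain this idea, and without it the subproblems you generate are not well-defined over shortest-path DAGs, nor is it clear why only $O(m)$ distinct subproblems arise. The path-reporting step via self-reducibility is close in spirit to the paper's (which peels off one vertex of $P_2$ at a time by testing modified quadruples in $G\setminus\{s_2\}$), but it presupposes a correct detection routine, which your current decision procedure is not.
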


We next consider the Min-2-DSP problem in directed graphs with positive edge weights. This result provides the first algorithmic framework for minimizing path intersections in the directed setting, showing that even this more general objective remains polynomial-time solvable.
\begin{restatable}{theorem}{mintwoDSPdirected}
\label{thm:mintwoDSPdirected-intro}
The \textsf{Min-2-DSP} problem in directed graphs with positive edge weights is solvable in $O(m^2n^3)$ time.
Moreover, we can report the explicit paths in $O(m^2n^4)$ time bound.
\end{restatable}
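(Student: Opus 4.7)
The plan is to extend the path-enumerating polynomial machinery behind Theorem~\ref{thm:twoDSPdirected-intro} so that it not only detects vertex-disjoint shortest-path pairs but also tracks the cardinality of the intersection. The 2-DSP algorithm relies on the fact that, over a characteristic-$2$ field, any pair $(P_1,P_2)$ that shares vertices cancels out; for Min-2-DSP we instead want to keep each pair, weighted by an indicator of its intersection size, so that the smallest surviving exponent reveals the optimum.

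Concretely, I would introduce a formal indeterminate $y$ that marks shared vertices. For each ordered pair of shortest paths $(P_1,P_2)$, assign the monomial $y^{|V(P_1)\cap V(P_2)|}$ and sum these monomials, with the usual path-enumerating weights, over a suitable characteristic-$2$ field to form a generating polynomial $F(y)$. Positivity of edge weights ensures the shortest paths are simple, bounding the degree of $F$ by $n$. The answer is then the smallest $k$ with $[y^k]F(y)\neq 0$. To evaluate $F$ efficiently, I would establish a recursive decomposition of $F$ paralleling the one used for Theorem~\ref{thm:twoDSPdirected-intro}, but with the split anchor also recorded in the dynamic-programming state. Because every anchor splits both paths simultaneously, the state space grows from single vertices to pairs, accounting for an $O(n^2)$ factor in the final bound.

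To arrive at $O(m^2 n^3)$, the natural recipe is to evaluate $F$ at $O(n)$ distinct scalars and interpolate. For each evaluation the polynomial DP collapses into a scalar DP over an $O(n^2)$ state space with $O(m^2)$ transitions per state, where each transition encodes a pair of shortest-path edges, one advancing each path. This is exactly where the directed case deviates from the DAG/undirected case: in the latter the two paths can be synchronized by a single edge advancement, which is why the Akmal--Vassilevska Williams--Wein bound collapses to $O(m+n)$. Summed over all $O(n)$ evaluation points, one reaches the target complexity.

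The main obstacle will be justifying the recursive decomposition itself. In a general directed graph, the shared vertices of $P_1$ and $P_2$ need not appear in the same order along the two paths, so the ``canonical anchor'' must be chosen with care, both to avoid overcounting configurations and to ensure that once an anchor is fixed the remaining subpaths inherit the shortest-path structure. This is precisely the structural obstacle that drives the complexity gap between directed 2-DSP and its DAG/undirected counterparts, and the algebraic cancellations of Theorem~\ref{thm:twoDSPdirected-intro} will have to be adapted, not merely reused, in the intersection-counting regime. To report the paths within the additional factor of $n$, I would apply standard self-reducibility: iterate over the $O(n)$ vertices, each time forcing a candidate vertex to lie on one of the two paths, rerun the decision algorithm, and retain the choice whenever the optimum intersection count is preserved, giving the claimed $O(m^2 n^4)$ reconstruction bound.
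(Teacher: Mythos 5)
Your proposal hinges on defining a generating polynomial $F(y) = \sum_{(P_1,P_2)} f(P_1)f(P_2)\,y^{|V(P_1)\cap V(P_2)|}$ over a characteristic-$2$ field and reading off the smallest exponent $k$ with $[y^k]F(y)\neq 0$. This breaks at a point you half-notice but do not resolve: the coefficient $[y^{k^*}]F(y)$ can vanish identically even when $k^*$ is the true optimum. Concretely, if an optimal pair $(P_1,P_2)$ is twin-crossing at some $(a,b)$, the swap $\phi_{a,b}$ produces $(Q_1,Q_2)$ with $f(Q_1)f(Q_2)=f(P_1)f(P_2)$, and a case analysis shows $V(Q_1)\cap V(Q_2)\subseteq V(P_1)\cap V(P_2)$; since $(P_1,P_2)$ is optimal the inclusion cannot be strict, so $|V(Q_1)\cap V(Q_2)|=k^*$ as well. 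The two pairs therefore land in the same $y$-coefficient with identical edge monomials and cancel. When \emph{every} optimal pair is twin-crossing --- which can certainly happen --- the coefficient $[y^{k^*}]F(y)$ is the zero polynomial, and your algorithm reports a strictly larger value. This is the central obstruction the paper works to avoid, and "adapting the cancellations" is not cosmetic: the cancellation that \emph{helps} 2-DSP is precisely what destroys the naive Min-2-DSP polynomial.

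The paper's actual route is structurally different. Rather than enumerating by raw intersection size, it defines \emph{concordant pairs} $\C(P_1,P_2)$ (pairs $(v,w)$ that bracket a region of possible overlap, with disjoint prefixes up to $v$ and disjoint suffixes from $w$), proves that these are vertex-disjoint and order-reversed between the two paths (Lemma~\ref{lemma:concordant-pairs-1}), and then replaces intersection size by the \emph{interaction complexity} $\gamma(P_1,P_2)=\sum_{(v,w)\in\C(P_1,P_2)}\delta(v,w)$, where $\delta(v,w)$ counts $(v,w)$-distance-critical vertices. Observation~\ref{observation:path-intesection-critical-nodes} shows that between $v$ and $w$ one can always choose two shortest paths meeting only at distance-critical vertices, so $\gamma$ is achievable, and $\gamma$ is invariant under the troublesome swaps (the middle segment contributes $F(v,w)^2$, not a choice of a specific pair). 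The enumerating polynomial $F_{\disj,k}$ is then built from helpers $D_{v,w,k}$, $H_{v,w,k}$, $H_{av,w,k}$, etc., which is where the $O(n^2)$ state space and $O(m^2)$ transition cost that you guessed actually come from; the extra factor of $n$ is looping over $k\le n$, not over evaluation points of a polynomial in $y$. Your univariate-$y$ interpolation idea and your "forcing a candidate vertex onto a path" reporting scheme are both plausible in spirit, but without the concordant-pair/distance-critical machinery the recursion you are asking for does not exist: you have no canonical, cancellation-safe decomposition of a pair of shortest paths by intersection count in a general digraph. To make progress you would need, at minimum, to (i) identify a swap-invariant quantity that lower-bounds intersection size, (ii) show it is tight for some optimal pair, and (iii) prove a recursive identity for its enumerating polynomial; these are exactly the three ingredients supplied by $\gamma$, $\C(\cdot,\cdot)$, and Lemmas~\ref{lem:Dvwk}--\ref{lem:min2dsp-id4}.
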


For structured graph classes, we obtain significantly faster algorithms. In particular, for DAGs we achieve an optimal linear-time bound. 
\begin{restatable}{theorem}{mintwoDSPDAGs}
\label{thm:mintwoDSPDAGs-intro}
The \textsf{Min-2-DSP} problem in weighted DAGs is solvable in $O(m+n)$ time. Moreover, if we are guaranteed that no disjoint paths exist, then there exists an algorithm which also reports the explicit paths in the same time. 
\end{restatable}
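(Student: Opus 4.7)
The plan is to exploit the topological structure of DAGs: once we restrict to the two shortest-path sub-DAGs $G_1$ and $G_2$, any pair of candidate paths $(P_1,P_2)$ lives in $G_1\times G_2$, and the intersection pattern inherits a strong monotone structure that makes both the minimum computable and the witness paths recoverable in linear time. The first phase spends $O(m+n)$ time computing the distances $d(s_i,\cdot)$ and $d(\cdot,t_i)$ for $i\in\{1,2\}$ by standard topological-order dynamic programming, and extracts $G_1$ and $G_2$ as the edges $(u,v)$ satisfying $d(s_i,u)+w(u,v)+d(v,t_i)=d(s_i,t_i)$. I would then invoke the linear-time Akmal--Vassilevska Williams--Wein (AVW) DAG routine to check in $O(m+n)$ whether vertex-disjoint shortest paths exist; if so we output $0$, a branch where the theorem does not require path reporting.

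In the remaining case the optimum $k^\star$ is at least $1$. A key structural observation is that, on a DAG, for any pair $(P_1,P_2)$ of shortest paths the intersection $V(P_1)\cap V(P_2)$ is totally ordered under the topological order of $G$, since each $P_i$ respects this order. This lets me define, for each vertex $v\in V(G_1)\cap V(G_2)$, two values: $\alpha(v)$, the minimum overlap achievable by shortest $s_1$-to-$v$ and $s_2$-to-$v$ sub-paths in $G_1$ and $G_2$ respectively, both meeting at $v$; and $\beta(v)$, the analogous quantity for the $v$-to-$t_i$ suffixes. Then $k^\star=\min_{v}[\alpha(v)+\beta(v)]-1$. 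The crucial step is to show that the vectors $(\alpha(v))_v$ and $(\beta(v))_v$ can be computed by one forward and one backward topological-order pass, using the AVW algebraic framework at each vertex to perform a parity-based cancellation test that answers ``can these prefix/suffix sub-paths be made disjoint?'' and hence determines $\alpha,\beta$ incrementally along the DAG in linear total time.

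For path reporting when $k^\star\ge 1$, I would pick a minimizing $v^\star$, at which the problem splits into two smaller instances: find a shortest $s_1$-to-$v^\star$ and a shortest $s_2$-to-$v^\star$ path achieving $\alpha(v^\star)$, and likewise for the $v^\star$-to-$t_i$ suffixes. Recursively splitting each half at its own pivot produces the full paths. Since all pivots lie on a topologically monotone chain in $G_1\cap G_2$, the successive sub-instances live on vertex-disjoint portions of $G$, and the total reconstruction work telescopes to $O(m+n)$.

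The principal obstacle is to prove that $\alpha(v)$ and $\beta(v)$ can be computed jointly in linear total time --- a naive DP on pairs of vertices would cost $\Theta(n^2)$, so the linear bound hinges on combining the AVW algebraic signatures with a scalar DP that aggregates minimum-overlap counts along the DAG, rather than tracking the product state $(u_1,u_2)$ explicitly. A second subtlety is to show that in the disjoint case ($k^\star=0$) this pivot-based decomposition provably collapses: no pivot exists to anchor the reconstruction, which matches exactly the scenario where linear-time path reporting is still open.
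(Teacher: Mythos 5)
Your high-level decomposition is essentially the one the paper uses: reduce $k^\star$ to $\min_v(\text{prefix-overlap at }v + \text{suffix-overlap at }v - 1)$, where the minimum is taken over common vertices $v$, and then recover the witness paths by splitting at a minimizing pivot. The critical gap is in your ``crucial step.'' You propose to compute $\alpha(v)$ and $\beta(v)$ for all $v$ in one forward and one backward topological pass by applying ``the AVW algebraic framework at each vertex'' as ``a parity-based cancellation test.'' That machinery only answers a Boolean question (do internally vertex-disjoint paths exist, yes or no); it does not produce an intersection \emph{count}, and there is no way to extract $\alpha(v)$ from a single evaluation of a polynomial over $\mathbb{F}_{2^q}$. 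To get a count you would have to run a separate disjointness test for each of up to $\Theta(n)$ candidate thresholds per vertex, which destroys linearity. The paper's mechanism is entirely different and combinatorial: it uses \emph{dominator trees}. The minimum overlap of any two paths from a common super-source to $v$ equals the number of cut vertices (dominators) of $v$, and dominator trees can be built in $O(m+n)$ time by Lengauer--Tarjan, after which $\alpha(v)$ is simply the depth of $v$ in that tree, readable in $O(1)$. The same applies to $\beta(v)$ via the reverse graph.

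A second, smaller, omission is that your $\alpha(v)$ lives across two different shortest-path DAGs $G_1$ and $G_2$ (the $(s_1,v)$-path is constrained to $G_1$, the $(s_2,v)$-path to $G_2$), so the dominator computation is not well-posed as stated. The paper resolves this by first restricting to the edge set $E_\cap = E(s_1,t_1)\cap E(s_2,t_2)$ and observing (via a swap argument) that any overlapping segment of $P_1$ and $P_2$ must use only edges of $E_\cap$. This lets one decompose $V$ into connected components of $(V,E_\cap)$, show that all of $V(P_1)\cap V(P_2)$ lies inside a single component, and then build, per component, an auxiliary DAG $H_B$ with a single dummy super-source $s_B$ (feeding $s_{B,1},s_{B,2}$) and super-sink $t_B$. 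It is exactly this auxiliary construction that turns ``min overlap between paths with \emph{different} origins in \emph{different} DAGs'' into ``number of dominators from a single source,'' making the dominator-tree computation applicable. Your recursive path-reporting scheme is plausible in spirit but, once $\alpha$ and $\beta$ are computed via dominator trees, the paper obtains the paths more directly by running vertex-capacitated Ford--Fulkerson between consecutive dominators of the chosen pivot.
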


Finally, we extend these results to undirected graphs with positive edge weights, again achieving optimal complexity.  
\begin{restatable}{theorem}{mintwoDSPundirected}
\label{thm:mintwoDSPundirected}
The \textsf{Min-2-DSP} problem in undirected graphs with positive edge weights is solvable in $O(m + n)$ time. 
Moreover, if we are guaranteed that no disjoint paths exist, then there exists an algorithm which also reports the explicit paths in the same time. 
\end{restatable}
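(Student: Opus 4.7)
The plan is to reduce Min-2-DSP to the decision version of 2-DSP, which is solvable in $O(m+n)$ time on undirected graphs with positive edge weights by~\cite{AkmalVW24}. As a first step, I would run their algorithm on the input instance: if vertex-disjoint shortest paths exist, the minimum intersection is $0$ and we are done. The remaining task is to handle the case where every pair of shortest paths must share at least one vertex, which is precisely the regime where we must also report explicit paths.

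The central structural lemma I would establish is that when 2-DSP is infeasible in an undirected positive-weight graph, there exist two vertices $v, w$ (possibly equal) such that the optimum intersection is realized by shortest paths $P_1, P_2$ whose common part is the vertex set of a shortest $v$-$w$ path. The proof would use the classical uncrossing property: any sub-walk of a shortest path between two shared vertices is itself a shortest path, so non-contiguous intersections between $P_1$ and $P_2$ can always be merged into a single contiguous subpath by swapping equal-length segments. The hardest part of this step is the undirected setting, where the two paths may traverse a shared segment in opposite orientations, so $v$ and $w$ cannot be read off from any inherent orientation and must be identified from the combined shortest-path DAG structure rooted at all four terminals.

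Given this lemma, the algorithm proceeds as follows. Using the distance labels obtained when invoking the 2-DSP subroutine, identify the forced vertices $v$ and $w$ as the dominators through which every shortest path pair must pass, computable in linear time on the shortest-path DAGs. Then extract, greedily via predecessor pointers, a shortest $s_i$-to-$\{v,w\}$ segment and an analogous $\{v,w\}$-to-$t_i$ segment for each $i \in \{1,2\}$, together with a shortest $v$-$w$ path used by both. Since these segments lie in disjoint regions by the structural lemma, their concatenation yields valid shortest paths $P_1, P_2$ realizing the minimum intersection. The total work is $O(m+n)$, and the construction directly produces the explicit paths. This explains why reporting is possible here despite being open in the disjoint case: the forced subpath decouples the two halves and removes the need to globally certify vertex-disjointness, which is the bottleneck left open by~\cite{AkmalVW24}.
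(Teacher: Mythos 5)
Your central structural lemma is false, and the mechanism you invoke to prove it is a misreading of how swap arguments work in this setting. You claim that when 2-DSP is infeasible, the optimum is realized by two shortest paths whose common part is \emph{the vertex set of a shortest $v$-$w$ path} (i.e., a single contiguous shared segment), and that any non-contiguous intersection can be "merged into a single contiguous subpath by swapping equal-length segments." But the swap operation $\phi_{a,b}$ leaves the multiset of edges (hence vertices) on the two paths unchanged; its role in this paper and in~\cite{AkmalVW24} is to pair up twin-crossing path pairs so that their monomial contributions cancel modulo two, not to physically reduce or reorganize the intersection of a given pair. If $P_1, P_2$ meet at $a$, diverge, and meet again at $b$, then $\phi_{a,b}(P_1,P_2)$ still meets at both $a$ and $b$ — nothing has been made contiguous. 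Forcing the paths to share the entire segment between $a$ and $b$ would only \emph{increase} the intersection. In fact, the correct structural fact (the paper's Lemma~\ref{lemma:concordant-pairs-2} and Observation~\ref{observation:path-intesection-critical-nodes}) is that the optimal intersection equals the set of \emph{distance-critical} vertices for the relevant concordant pair(s); these are cut vertices in a shortest-path DAG, which are generally not consecutive on any path. Your algorithm, which extracts "a shortest $v$-$w$ path used by both," would therefore report a strictly suboptimal pair whenever there are parallel shortest routes between consecutive cut vertices.

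There are two further gaps. First, your treatment of undirected orientation issues is hand-waving: the paper has to split into \emph{agreeing} and \emph{disagreeing} pairs (using $E_\cap^+$ vs.\ $E_\cap^-$, or equivalently swapping $s_2,t_2$) precisely because shared segments can be traversed in opposite directions, and each orientation yields a different auxiliary DAG on which the dominator-tree computation is run. Second, the singular-intersection case $|W|=1$ cannot be handled by the dominator machinery at all, because when a pair meets at only one vertex there is no block structure to exploit and one must certify internal disjointness of four path segments simultaneously; the paper handles this with a separate algebraic argument (the polynomials $F_v = A_v - (X_v + Y_v)$ evaluated for all $v$ in $O(m+n)$ total time), and only then recovers the explicit paths via a flow computation rooted at the winning $v$. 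A purely combinatorial greedy extraction as you describe does not decide this case in linear time.
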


\subsection{Related Work}


For $k \geq 3$, the $k$-DSP problem complexity landscape becomes dramatically more challenging, and the complexity landscape diverges sharply between undirected and directed graphs.  In undirected graphs, Lochet~\cite{Lochet21} provided the first polynomial-time algorithm for $k$-DSP (for  constant $k$), though with a running time of $ n^{O(k^{5^k})}$, which was later improved to $n^{O(k \cdot k!)}$ by Bentert et al.~\cite{BentertNRZ21}. It is worth noting that $n^{\Theta(k)}$ is the best we can hope for in undirected graphs, due to $W[1]$-Hardness and  fine-grained lower bounds established in ~\cite{BentertNRZ21, Lochet21, AkmalVW24}. Hence, reducing the exponent from $\Theta(k!)$ to $\Theta(\mathrm{poly}(k))$ remains an important open problem for $k$-DSP in undirected graphs.
In an interesting breakthrough, Pilipczuk, Stamoulis and Wlodarczyk~\cite{PilipczukSW25}  recently claimed an FPT algorithm for $k$-DSP in undirected planar graphs,  providing the first non-trivial graph class to admit such an algorithm.

In contrast, the situation is far less understood for directed graphs with $k \geq 3$: no polynomial-time algorithms are known, and no hardness results have been established either. The complexity of $k$-DSP in directed graphs therefore remains wide open. Nevertheless, $n^{O(k)}$-time algorithms are known for certain restricted classes, including DAGs~\cite{AkmalVW24} and planar digraphs~\cite{BercziK17}.

If zero-weight cycles are permitted, the 2-Disjoint Paths (2-DP) problem becomes a special case of 2-DSP. Since 2-DP is known to be NP-hard in directed graphs~\cite{FortuneHopcroftWyllie80}, excluding such cycles is essential: the assumption that every cycle has strictly positive weight is what ensures tractability of the 2-DSP problem.

A different line of work considers the objective of finding vertex-disjoint paths that minimize the \emph{total length} of the paths. In undirected graphs, Björklund and Husfeldt~\cite{BjorkLundH14} developed an algebraic algorithm for the case of two paths with running time $O(n^{11})$. For $k \geq 3$, the problem remains open even in undirected graphs, and since this objective generalizes $k$-DSP, it inherits all of its hardness barriers~\cite{BentertNRZ21, Lochet21}. More recently, Mari et al.~\cite{MariMPS24} obtained an FPT algorithm for rectangular grid graphs for this objective.


Another natural objective extending the $k$-DSP problem is to maximize the number of source–target pairs that can be connected by vertex-disjoint shortest paths. Clearly this problem inherits the  lower-bound barriers of $k$-DSP, making it natural to ask whether efficient approximation algorithms are possible. However, recent hardness results by Chitnis et al.~\cite{ChitnisTW24} and Bentert et al.~\cite{BentertFG25} rule out even such approximation approaches, underscoring the inherent difficulty of the problem.


To the best of our knowledge, the Min-2-DSP objective has not been studied previously. A related direction appears in the work of Funayama et al.~\cite{FunayamaKU24}, who consider the problem of finding $k$ shortest $s$–$t$ paths that minimize the total number of pairwise vertex intersections among them. We leave the Min-$k$-DSP as an open problem, for $k \geq 3$, for undirected as well as directed graphs.



\subsection{Organization of the Paper}
We introduce the necessary notations, definitions, and algebraic tools in \Cref{sec:preliminaries}. An overview of our techniques is provided in \Cref{sec:overview}. Our $O(mn \log n)$-time algorithm for the 2-DSP problem in general directed graphs is presented in \Cref{sec:2DSP}. The results for the Min-2-DSP problem are discussed separately for directed graphs, DAGs, and undirected graphs in \Cref{sec:Min-2DSP-directed}, \Cref{sec:Min-2DSP-DAGs}, and \Cref{sec:Min-2DSP-undirected}, respectively. Proofs and technical details omitted from the main exposition are included in \Cref{sec:deferred-proofs}.

\section{Preliminaries}
\label{sec:preliminaries}

Let $G = (V, E)$ be a directed or undirected graph. 
For any vertex $v\in V$, let $\IN(v)$ and $\inedges(v)$ denote the set of in-neighbors and in-edges of $v$, respectively, and let $\indeg(v)$ denote the size of these sets. Define $\OUT(v)$, $\outedges(v)$, and $\outdeg(v)$ analogously. 
For any two vertices $x, y\in V$, let $\dist(x, y)$ denote the length of the shortest path from $x$ to $y$ in $G$, and let $\Pi(x, y)$ denote the set of all shortest paths from $x$ to $y$.
We say that a vertex~$w$ is a `\emph{distance-critical vertex}' for the pair $(x, y)$ if its removal increases the distance from $x$ to $y$, i.e., $\dist(x, y, G \setminus \{w\}) > \dist(x, y, G)$. Let $\Delta(x, y)$ denote the set of all distance-critical vertices for pair $(x, y)$, including the endpoints $x$ and~$y$; and define $\delta(x,y)=|\Delta(x,y)|$. Observe that $\delta(x, x) = 1$.
 
Given a path $P$, we denote by $V(P)$ and $E(P)$ the sets of vertices and edges that appear on $P$ respectively. 
Given any two vertices $x,y\in P$, we say $x$ {\bf \em precedes} $y$ on $P$, denoted by $x \preceq_P y$,  if either $x=y$ or $x$ appears before $y$ on $P$. 
For vertices $x, y \in V$ and a path $P$ where $x$ precedes $y$ on $P$, let $P[x, y]$ denote the sub-path of $P$ from $x$ to $y$. Similarly, for a vertex $x \in V$, an edge $e = (a,b) \in E$, and path $P$ where $x$ precedes $e$, let  $P[x,e]$ denote the subpath $P[x,a]$. Define $P[e,x]$ for a suitable $e$, $x$ and $P$ analogously.  
Given two paths $P$ and $Q$, their concatenation is denoted by $P \cdot Q$, provided that the last vertex of $P$ coincides with the first vertex of $Q$.
Given vertices $x$ and $y$, let $V(x,y)$ and $E(x,y)$ denote the sets of all vertices and edges that lie on at least one shortest path between $x$ and $y$ in $G$ respectively. Formally,
$$V(x,y) = \bigcup_{P \in \Pi(x,y)} V(P), \quad \text{and} \quad E(x,y) = \bigcup_{P \in \Pi(x,y)} E(P).$$

\begin{definition}[Internally vertex-disjoint paths]
Let $(x_1, y_1)$ and $(x_2, y_2)$ be two pairs of vertices. A pair of distinct paths, one from $x_1$ to $y_1$ and the other from $x_2$ to $y_2$, are said to be internally vertex-disjoint if they do not share any vertices except those in the set $\{x_1, y_1\} \cap \{x_2, y_2\}$.
\end{definition}

In general, when two directed paths intersect at several vertices, the order in which these vertices appear in the two paths respectively could be different. The following lemma shows that there is more structure when these are shortest paths with a common destination (for proof see appendix).

\begin{restatable}{lemma}{lemorderintersect}
\label{lem:order-intersect}
Let $x_1,x_2, v$ be vertices and $P_1 \in \Pi(x_1,v)$ and $P_2 \in \Pi(x_2,v)$. Suppose $P_1$ and $P_2$ intersect internally and let $S$ be the set of  vertices in $P_1 \cap P_2$. Then the relative order of the vertices in $S$ is the same in both $P_1$ and $P_2$. 
\end{restatable}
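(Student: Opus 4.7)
The plan is to argue by contradiction: suppose two vertices $u, w \in S$ with $u \neq w$ appear in opposite orders on the two paths, i.e.\ $u \preceq_{P_1} w$ strictly while $w \preceq_{P_2} u$ strictly. I will then exploit sub-path optimality of shortest paths twice to derive a distance identity that is incompatible with the hypothesis that $G$ contains no cycle of zero or negative weight.

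The first main step uses sub-path optimality. Since $P_1 \in \Pi(x_1, v)$ and $u \preceq_{P_1} w \preceq_{P_1} v$, the suffix $P_1[u,v]$ is itself a shortest $u$-to-$v$ path and it passes through $w$; hence
\[
\dist(u,v) \;=\; \dist(u,w) + \dist(w,v).
\]
Symmetrically, from $w \preceq_{P_2} u \preceq_{P_2} v$, the suffix $P_2[w,v]$ is a shortest $w$-to-$v$ path passing through $u$, so
\[
\dist(w,v) \;=\; \dist(w,u) + \dist(u,v).
\]
Adding these two identities and cancelling the common terms $\dist(u,v)$ and $\dist(w,v)$ yields $\dist(u,w) + \dist(w,u) = 0$.

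The final step closes the argument. Concatenating any shortest $u$-to-$w$ path with any shortest $w$-to-$u$ path produces a closed walk based at $u$ of total weight $\dist(u,w) + \dist(w,u) = 0$. Since every closed walk in a directed graph decomposes into (not necessarily simple) cycles, its total weight equals the sum of the weights of those cycles. By the standing assumption that $G$ has no cycle of zero or negative weight, every such cycle has strictly positive weight, so the closed walk has strictly positive weight. This contradicts the displayed equality, so no such pair $(u,w)$ can exist and the relative order on $S$ must be the same in $P_1$ and $P_2$.

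The main obstacle is really just the bookkeeping in the first step: one has to justify writing $P_1[u,v]$ and $P_2[w,v]$ and assert that $w \in V(P_1[u,v])$ and $u \in V(P_2[w,v])$, which is precisely what the assumed order-flip provides. After that, sub-path optimality is standard and the closed-walk argument relies solely on the no-non-positive-cycle hypothesis that is already built into the setting of the paper; no additional structural property of shortest-path DAGs is needed.
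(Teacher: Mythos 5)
Your proof is correct and follows essentially the same approach as the paper's: both argue by contradiction from an order-flipped pair, apply sub-path optimality of $P_1$ and $P_2$ toward the common destination $v$, and derive an impossible distance relation ruled out by the no-nonpositive-cycle hypothesis. Your version makes the final step slightly more explicit by summing the two identities and decomposing the resulting zero-weight closed walk into cycles, whereas the paper compresses this into a pair of opposing strict inequalities, but the underlying argument is the same.
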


\begin{definition}[Twin-crossing paths]
\label{definition: twin-corssing-paths}
Two paths $P_1$ and $P_2$ are said to be {\bf \em twin-crossing} if there exist {\bf \em distinct} vertices $a,b\in V$ such that $a$ precedes $b$ in both paths, and the sub-paths
$P_1[a,b]$ and $P_2[a,b]$ are not identical to each other. Here, $(a,b)$ is referred to as {\bf \em twin-crossing pair} for paths $P_1$ and $P_2$.
\end{definition}

\begin{lemma}[Schwartz-Zippel lemma]
Let $P(x_1, x_2, \dots, x_N)$ be a non-zero polynomial of degree $d$ over a finite field $\mathbb{F}$. If a point is selected uniformly at random from $\mathbb{F}^N$, then the probability that $P$ evaluates to a non-zero value is at least $1-d/{|\mathbb{F}|}$.
\label{lem:SZ}
\end{lemma}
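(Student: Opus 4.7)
The plan is to prove the Schwartz–Zippel lemma by induction on the number of variables $N$, bounding the probability of vanishing rather than non-vanishing (the two are complementary).

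For the base case $N=1$, the polynomial $P(x_1)$ is a non-zero univariate polynomial of degree at most $d$ over a field, hence has at most $d$ roots. A uniformly random $a_1 \in \mathbb{F}$ therefore satisfies $P(a_1)=0$ with probability at most $d/|\mathbb{F}|$, as required.

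For the inductive step, assume the statement for polynomials in $N-1$ variables. I would decompose $P$ as a univariate polynomial in $x_1$ with coefficients in $\mathbb{F}[x_2,\ldots,x_N]$:
\[
P(x_1,\ldots,x_N) \;=\; \sum_{i=0}^{k} x_1^{\,i}\, Q_i(x_2,\ldots,x_N),
\]
where $k$ is the largest power of $x_1$ appearing in $P$ and $Q_k \not\equiv 0$. Since the total degree of $P$ is $d$, the total degree of $Q_k$ is at most $d-k$. I would then sample $a_1,\ldots,a_N$ uniformly from $\mathbb{F}$ in two stages: first $a_2,\ldots,a_N$, then $a_1$. By the inductive hypothesis applied to the non-zero polynomial $Q_k$ in $N-1$ variables, $\Pr[Q_k(a_2,\ldots,a_N)=0] \le (d-k)/|\mathbb{F}|$. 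Conditioned on $Q_k(a_2,\ldots,a_N) \ne 0$, the polynomial $P(x_1,a_2,\ldots,a_N)$ is a non-zero univariate polynomial in $x_1$ of degree exactly $k$, so by the base case $\Pr[P(a_1,a_2,\ldots,a_N)=0 \mid Q_k(a_2,\ldots,a_N)\ne 0] \le k/|\mathbb{F}|$.

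A union bound combining the two events yields
\[
\Pr[P(a_1,\ldots,a_N)=0] \;\le\; \frac{d-k}{|\mathbb{F}|} + \frac{k}{|\mathbb{F}|} \;=\; \frac{d}{|\mathbb{F}|},
\]
so the complementary probability that $P$ evaluates to a non-zero value is at least $1-d/|\mathbb{F}|$. There is no serious obstacle here: the only subtle point is that the inductive hypothesis must be applied to a polynomial that is guaranteed non-zero (namely the leading coefficient $Q_k$), which is why the variable-by-variable decomposition is done with respect to the highest-power coefficient rather than an arbitrary one. The degree accounting $\deg Q_k \le d-k$ is what makes the two error terms sum to exactly $d/|\mathbb{F}|$.
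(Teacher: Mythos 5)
Your proof is correct and is the standard inductive argument for the Schwartz–Zippel lemma. Note, however, that the paper itself does not prove this lemma: it is cited as a classical, well-known result in the Preliminaries and used as a black box (the paper's original contributions only start afterward, with the enumerating-polynomial machinery). So there is no in-paper proof to compare against. Your write-up is the textbook proof — decompose in the highest power of one variable, apply induction to the non-zero leading coefficient $Q_k$ of total degree at most $d-k$, then apply the univariate bound to the degree-$k$ polynomial obtained after fixing the remaining variables, and combine via a union bound. The one subtlety you flag (that the inductive hypothesis must be applied to the guaranteed-non-zero $Q_k$, and the degree bookkeeping $\deg Q_k \le d-k$ is what makes the two terms sum to $d/|\mathbb{F}|$) is exactly the point that makes the argument go through, and you have handled it correctly.
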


In this paper, we will work with fields of characteristic two. Thus $|\mathbb{F}|$ will be $2^q$, for some integer $q\geqslant 1$. We will be working with $q$ which is $O(\log n)$. This would support addition and multiplication over $\mathbb{F}$ in constant time in the Word-RAM model.

\subsection{Enumerating Polynomials}
For each edge $e = (u,v) \in E$, we introduce an indeterminate $z_{uv}$ (also denoted by $z_e$). For any simple path $P$ consisting of edges $e_1, \ldots, e_\ell$, we associate with it a monomial

$$
f(P) := \prod_{i=1}^{\ell} z_{e_i}.
$$

If $P$ is an empty-path (that is, has zero edges), then $f(P)$ is defined to be $1$.
Note that the monomials corresponding to any two distinct simple paths are distinct. We now proceed to define the concept of enumerating polynomials.

\begin{definition}[Enumerating polynomial]
Given a collection $\C$ of paths, the enumerating polynomial $F_\C$ for $\C$ is defined as
$F_\C := \sum_{P \in \C} f(P).$
Similarly, for a collection $\C$ of pairs of paths, the enumerating polynomial for the collection $\C$ is defined as $F_\C := \sum_{(P_1, P_2) \in \C} f(P_1)f(P_2).$
\end{definition}

For any $x,y\in V$, we use $F(x,y)$ to denote the enumerating polynomial $F_\C$, where $\C = \Pi(x,y)$, i.e., the set of all shortest paths from $x$ to $y$ in $G$. That is,
$$F(x,y) := \sum_{P \in \Pi(x, y)} f(P).$$

\noindent
\cite{AkmalVW24} showed that $F(x,y)$ can be evaluated efficiently, we give a proof sketch (in appendix) for sake of completeness.

\begin{restatable}{lemma}{lemFxy}
\label{lem:Fxy}
Let $G = (V, E)$ be a  directed weighted graph with no negative weight cycles. For any assignment $\{\bar z_e\}_{e \in E}$, the polynomial $F(x, y)$, for all $x,y\in V$, can be evaluated  in $O(mn \log n)$ time.
\end{restatable}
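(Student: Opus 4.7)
The plan is to reduce the task to an all-pairs shortest paths computation followed by a per-source dynamic program on the shortest-path subgraph. First, I would compute the pairwise distances $\dist(x,y)$ for all $x,y \in V$ via Johnson's algorithm: since $G$ has no negative weight cycles, one Bellman--Ford call from a virtual source produces a feasible vertex potential, after which reweighting makes every edge non-negative and $n$ runs of Dijkstra (with a suitable heap) complete the task. This step costs $O(mn + n^2\log n) = O(mn\log n)$.

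Next, for a fixed source $x$, I would exploit the fact that every shortest $x$-to-$y$ path, for $y \ne x$, ends with some in-edge $(u,y) \in \inedges(y)$ satisfying $\dist(x,u)+w(u,y)=\dist(x,y)$, and conversely, appending such an edge to any shortest $x$-to-$u$ path gives a shortest $x$-to-$y$ path. Translating this bijection to the level of monomials yields the recursion
\[
F(x,y) \;=\; \sum_{\substack{(u,y)\in \inedges(y) \\ \dist(x,u)+w(u,y)=\dist(x,y)}} F(x,u)\cdot \bar z_{uy}, \qquad F(x,x)=1.
\]
Processing the vertices $y$ in non-decreasing order of $\dist(x,\cdot)$ guarantees that each $F(x,u)$ needed on the right-hand side has already been evaluated, at a cost of $O(\indeg(y))$ work per vertex. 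Summed over all $y$ the DP costs $O(m)$ per source, hence $O(mn)$ overall, which is absorbed by the APSP bound.

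The main obstacle is arguing that processing by distance is a valid topological order for the recursion. This is clean when the shortest-path subgraph rooted at $x$ is acyclic; zero-weight cycles, however, could introduce directed cycles among vertices at the same distance. I expect the cleanest resolution to be to inherit the ambient assumption of the paper that the weighted graph has no zero (or negative) weight cycles, under which the shortest-path subgraph is genuinely a DAG and the distance order is valid. Alternatively, one can contract strongly connected components formed by zero-weight edges of the shortest-path subgraph and observe that, since $f(P)$ records the edges of a \emph{simple} path, no simple shortest path can traverse a non-trivial such component more than once, so the polynomial computed by the recursion on the contracted DAG coincides with $F(x,y)$. With this in place, correctness follows by induction on $\dist(x,y)$ and the total running time is $O(mn\log n)$, as claimed.
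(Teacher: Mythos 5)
Your overall plan matches the paper's almost exactly: compute all-pairs distances via Johnson's algorithm, then for each source $x$ run the recursion
\[
F(x,x)=1,\qquad F(x,y)=\sum_{\substack{(u,y)\in\inedges(y)\\ \dist(x,u)+w(u,y)=\dist(x,y)}} F(x,u)\cdot \bar z_{uy}
\]
over the shortest-path subgraph $G_x$. The paper processes the vertices of $G_x$ in a \emph{topological order}, whereas you process them in \emph{non-decreasing order of $\dist(x,\cdot)$}. This is where there is a genuine gap. The lemma permits negative edge weights (only negative-weight \emph{cycles} are excluded), and with a negative edge $(u,v)\in E(x,v)$ we get $\dist(x,u)>\dist(x,v)$, so the predecessor $u$ has a \emph{larger} distance label than $v$. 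For instance, with $x\to a$ of weight $5$ and $a\to b$ of weight $-3$, the only shortest-path subgraph edge into $b$ comes from $a$, yet $\dist(x,a)=5>2=\dist(x,b)$; your schedule visits $b$ before $a$ and the DP reads an uncomputed value. Thus ``$G_x$ is a DAG'' does \emph{not} imply ``distance order is a valid topological order,'' contrary to what you assert before introducing the zero-weight-cycle discussion. The fix is exactly what the paper does: compute an explicit topological order of $G_x$ in $O(m+n)$ time per source, or, equivalently, order by the Johnson-reweighted distances $\dist(x,v)+h(v)$, which are non-decreasing along every edge of $G_x$.

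A secondary issue: your alternative resolution for zero-weight cycles --- contracting the zero-weight SCCs of $G_x$ and running the recursion on the contracted DAG --- is not justified as stated. A simple shortest path may enter a non-trivial SCC at one vertex and leave at another, and its contribution depends on which edges inside the SCC it uses; the contracted-DAG recursion does not see this and hence does not directly yield $F(x,y)$. Your first resolution (inherit the no-zero-weight-cycle hypothesis, which is the regime in which the paper actually applies the lemma) is the right one, but it must be paired with a correct topological ordering as above rather than distance ordering.
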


\begin{definition}[Swap operation]  
For any pair of paths $(P_1,P_2) \in \Pi(x_1, y_1) \times \Pi(x_2, y_2)$ 
and any $a,b\in V$ such that $a$ precedes $b$ in both the paths,
we define $\phi_{a,b}(P_1,P_2)$ to be the pair of paths obtained by swapping segments between $a$ and $b$ in $P_1,P_2$. 
That is, if $P_i$ has endpoints $x_i,y_i$, then $Q_i=P_i[x_i,a]\cdot P_{3-i}[a,b]\cdot P_i[b,y_i]$, for $i=1,2$. We refer to $\phi_{a,b}$ as a swap operation at $(a,b)$.
\end{definition}

The following fact is easy to see: 
\begin{claim}
    \label{cl:crossing}
    Let $(P_1, P_2)$ be a pair of paths in $\Pi(x_1, y_1) \times \Pi(x_2, y_2)$ with $(a,b)$ as twin-crossing pair, and let $(Q_1, Q_2) = \phi_{a,b}(P_1,P_2).$ Then the pair $(Q_1, Q_2) \in \Pi(x_1, y_1) \times \Pi(x_2, y_2)$ also has $(a,b)$ as twin-crossing. Further, $\phi_{a,b}(Q_1, Q_2) = (P_1,P_2)$ and the pair $(Q_1,Q_2)$ is distinct from $(P_1,P_2)$.
\end{claim}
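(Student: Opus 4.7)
The plan is to verify in turn the four assertions packaged in the claim: that $(Q_1,Q_2)$ still lies in $\Pi(x_1,y_1)\times\Pi(x_2,y_2)$, that $(a,b)$ remains a twin-crossing pair for $(Q_1,Q_2)$, that $\phi_{a,b}$ is an involution on such pairs, and that the swap actually produces a genuinely different pair. I would fix notation at the start: $Q_1=P_1[x_1,a]\cdot P_2[a,b]\cdot P_1[b,y_1]$ and $Q_2=P_2[x_2,a]\cdot P_1[a,b]\cdot P_2[b,y_2]$, both of which are well-defined as walks since $a$ precedes $b$ in each of $P_1$ and $P_2$.

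The crux, and what I expect to be the main obstacle, is the membership assertion $(Q_1,Q_2)\in\Pi(x_1,y_1)\times\Pi(x_2,y_2)$. The length half is immediate: since $P_1$ and $P_2$ are shortest paths that both pass through $a$ and then through $b$, the subpaths $P_1[a,b]$ and $P_2[a,b]$ are shortest $a$--$b$ paths, hence share the common weight $\dist(a,b)$, so the total weights of $Q_1$ and $Q_2$ coincide with those of $P_1$ and $P_2$, namely $\dist(x_1,y_1)$ and $\dist(x_2,y_2)$. The subtlety is that each $Q_i$ is a priori only a walk, since the inserted segment $P_{3-i}[a,b]$ could in principle revisit a vertex of $P_i[x_i,a]$ or $P_i[b,y_i]$. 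Here I would invoke the standing assumption that $G$ has no cycle of negative or zero weight: any walk from $x_i$ to $y_i$ whose total weight equals $\dist(x_i,y_i)$ must in fact be simple, because any repeated vertex would create an interior closed subwalk of strictly positive weight whose removal would produce a shorter walk, contradicting the distance. This forces $Q_1$ and $Q_2$ to be simple paths, and therefore to lie in the respective shortest-path sets.

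The remaining three assertions are essentially bookkeeping from the construction. By definition $Q_1[a,b]=P_2[a,b]$ and $Q_2[a,b]=P_1[a,b]$, with $a$ still preceding $b$ in each, so the hypothesis $P_1[a,b]\neq P_2[a,b]$ transfers directly to $Q_1[a,b]\neq Q_2[a,b]$, simultaneously certifying that $(a,b)$ is twin-crossing for $(Q_1,Q_2)$ and that $(Q_1,Q_2)\neq(P_1,P_2)$. For the involution property, applying $\phi_{a,b}$ to $(Q_1,Q_2)$ once more exchanges those middle segments back, recovering $P_i[x_i,a]\cdot P_i[a,b]\cdot P_i[b,y_i]=P_i$ for $i=1,2$, as needed.
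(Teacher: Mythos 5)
Your proof is correct, and it fills in an argument the paper itself omits (the paper labels Claim~\ref{cl:crossing} as ``easy to see'' and gives no proof). You correctly identify the only non-trivial point — that the concatenations $Q_1, Q_2$ are a priori only \emph{walks} and might revisit a vertex — and close the gap by invoking the standing assumption of no zero- or negative-weight cycles, so a minimum-weight walk must be simple; combined with the observation that $P_1[a,b]$ and $P_2[a,b]$ have the same weight, this gives $Q_i\in\Pi(x_i,y_i)$. The remaining bookkeeping (the twin-crossing property of $(Q_1,Q_2)$, $\phi_{a,b}$ being an involution, and distinctness via $Q_1[a,b]=P_2[a,b]\neq P_1[a,b]$) is handled cleanly.
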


\section{Technical Overview}
\label{sec:overview}

In this section, we outline the  technical novelties of our contribution. 
\subsection{2-DSP: From DAGs to Directed Graphs}
\cite{AkmalVW24} studied the $2$-DSP problem for DAGs and undirected graphs, achieving optimal linear-time algorithms. The main idea behind their approach was to compute enumerating polynomials whose monomials are products of edge variables for each choice of vertex-disjoint $(s_1,t_1)$ and $(s_2,t_2)$ shortest paths. The polynomial $F(s_1,t_1)F(s_2,t_2)$ enumerates all shortest path pairs with no disjointness constraints.


For solving $2$-DSP, they showed that it suffices to cancel the effect of intersecting paths from this total enumeration.
Their key insight was that twin-crossing paths contribute zero in a field of characteristic~$2$: if $P_1, P_2$ are twin-crossing at vertices $w, v$, then swapping the segments between $w$ and $v$ results in paths $Q_1, Q_2$ with the same monomial contribution as $P_1, P_2$, and these contributions cancel each other in a characteristic~$2$ field. For the remaining case, namely, the scenario where non-twin crossing paths intersect for the first time at some vertex $v$, they gave techniques that, after precomputing all $F(x, y)$ values via \Cref{lem:Fxy}, require only $O(\indeg(v))$ time per vertex $v$. Summing over all vertices $v$ resulted in the $O(m+n)$ time algorithm.

\medskip

\noindent
\textbf{Challenge in Directed graphs.}
In directed graphs, the main challenge is that paths may intersect in more complicated ways, and there is no universally consistent ``first common intersection'' between $P_1$ and $P_2$ due to the absence of a global topological ordering. 
See \Cref{fig:main-idea-1}.
Unlike DAGs where intersection patterns follow predictable structures, directed shortest paths can intersect and separate multiple times in a complex manner.

\begin{figure}[!ht]
\centering
\includegraphics[height=40mm, page=1]{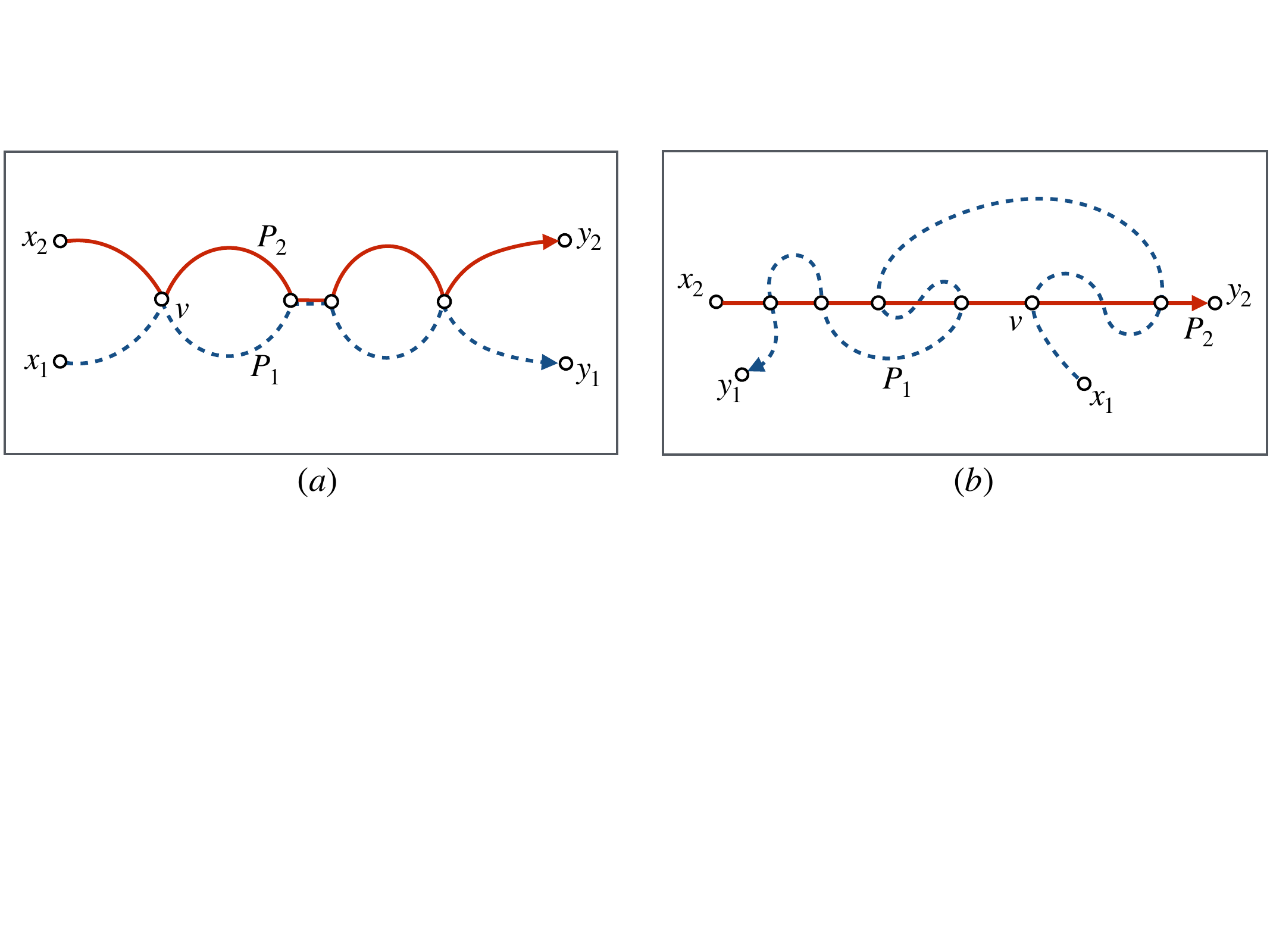}
~\quad~~
\includegraphics[height=40mm, page=2]{main-idea-1.pdf}
\caption{Illustration of interaction of shortest-paths in (a) DAGs, and (b) directed graphs having no cycles of negative or zero weight.}
\label{fig:main-idea-1}
\end{figure}

\medskip

\noindent
\textbf{Our Approach}~ To tackle this complexity, we solve the $2$-DSP problem by studying intersection patterns with respect to the first intersection point on a reference path $P_1 \in \Pi(s_1,t_1)$. Observe that in DAGs, a vertex $v$ is the unique first intersection point for $P_1$ and $P_2$ if and only if
prefixes of $P_1$ and $P_2$ up to $v$ are internally vertex-disjoint. 
In comparison, in general directed graph, a vertex $v$ is the first vertex on $P_1$ that is common between $P_1$ and $P_2$ if and only if the
prefixes of $P_1$ and $P_2$ up to $v$, and the suffix of $P_2$ from~$v$ to $t_2$, {\em all these three segments}, are  internally vertex-disjoint. See \Cref{fig:main-idea-1}.

We define $D_v(s_1,t_1,s_2,t_2)$ to be the enumerating polynomial of pairs $(P_1,P_2)\in \Pi(s_1,t_2)\times \Pi(s_2,t_2)$ such that 
\begin{equation}
P_1[s_1,v],~P_2[s_2,v],~\text{and}~P_2[v,t_2]    
\label{eq:subpaths}
\end{equation}
are internally vertex-disjoint, or equivalently, $v$ is the first vertex on $P_1$ that is common between $P_1,P_2$. Thus, if $F_{\disj}(s_1, t_1, s_2, t_2)$ denotes the enumerating polynomial of all disjoint  pairs of paths $(P_1, P_2) \in \Pi_1(s_1, t_1) \times \Pi_2(s_2,t_2)$, then 
$$
F_{\disj}(s_1, t_1, s_2, t_2)
:=
F(s_1, t_1)F(s_2, t_2) - \sum_v D_v(s_1, t_1, s_2, t_2).
$$ Indeed, the second term in the r.h.s. captures all intersecting pairs $(P_1, P_2) \in \Pi_1(s_1, t_1) \times \Pi_2(s_2,t_2)$ by cycling over all choices of the {\em first} vertex on $P_1$ where this intersection occurs. Thus,  it suffices to check if the above polynomial is non-zero. Now, the disjointness conditions in~\eqref{eq:subpaths} can be relaxed as follows: instead of requiring that $P_1[s_1, v]$ and $P_2[s_2, v]$ are disjoint, we just require that the incoming edges of $v$ in these two paths respectively are distinct. Indeed, if $P_1[s_1, v]$ and $P_2[s_2, v]$ intersect while the incoming edges of $v$ are distinct in these two prefixes, then there exist a common vertex $w \in P_1[s_1, v] \cap P_2[s_2, v]$ such that $(w,v)$ forms a twin crossing pair for $P_1, P_2$. But again, a standard swapping argument shows the contribution of such pairs $(P_1, P_2)$ in the sum $D_v(s_1,t_1,s_2,t_2)$ gets canceled in a field of characteristic $2$.

%

In order to exploit the above observation, we define two additional polynomials.
For any $v$, define $H_v(s_1,t_1,s_2,t_2)$ to be the enumerating polynomial of pairs $(P_1,P_2)\in \Pi(s_1,t_2)\times \Pi(s_2,t_2)$ containing $v$ that satisfy 
$$P_2[s_2,v]~\text{and}~P_2[v,t_2]$$ are internally vertex-disjoint.
Further, for any in-edge $(a,v)$ of $v$, define $H_{av}$ analogously with an additional requirement that the paths $P_1, P_2$ must contain the edge $(a,v)$. The discussion above shows  that $D_v(s_1, t_1, s_2, t_2)$, for any $v$, is given by (see Section~\ref{sec:2DSP} for a more formal proof):
$$
H_v(s_1, t_1, s_2, t_2)
-
\sum_{a \in \IN(v)} H_{av}(s_1, t_1, s_2, t_2).
$$

The definition of $H_v(s_1, t_1, s_2, t_2)$ shows that it can be expressed in terms of $F_{\disj}(s_1, v, v, t_2)$, and similarly for $H_{av}(s_1, t_1, s_2, t_2)$ (see Lemma~\ref{lem:Hv} and~\ref{lem:Hav}).
Thus, we are able to recursively express $F_{\disj}(s_1, t_1, s_2, t_2)$ in terms of polynomials $F_{
\disj}$ over a different set of quadruples. Finally, we show that overall only $O(m)$ quadruples are encountered in such manner, and their computation (i.e., evaluation over randomly chosen values) can be done efficiently using dynamic programming.

\subsection{Solving Min-2-DSP in general directed graphs}

The Minimum-2-Disjoint Shortest Paths (Min-2-DSP) problem is significantly more challenging than the standard 2-DSP problem. Let $(s_1, t_1)$ and $(s_2, t_2)$ be the input source-destination pairs, and let $k^*$ denote the minimum possible overlap between any pair $(P_1, P_2) \in \Pi_1(s_1, t_1) \times \Pi_2(s_2, t_2)$ (we call such a pair an {\em optimal pair
}). 

One could define a suitable enumerating polynomial of all such optimal pairs. 
The first challenge in computing $k^*$ arises when every optimal pair $(P_1, P_2)$ is twin-crossing. Indeed, 
the standard swap operation would result in another optimal pair $(Q_1, Q_2)$ such that the contributions of $(P_1, P_2)$ and $(Q_1, Q_2)$ would cancel out in the enumerating polynomial. While this cancellation property was crucial in our  algorithm for the standard 2-DSP problem, it inadvertently causes the cancellation of contributions from optimal solutions in the Min-2-DSP setting.
To overcome this limitation, we introduce the notion of a concordant pair.

\begin{definition}[Concordant pair]
Let $P_1$ and $P_2$ be two paths with endpoints $x_1, y_1$ and $x_2, y_2$, respectively. A pair of vertices $(a, b)$, not contained in the set $\{x_1, y_1\} \cap \{x_2, y_2\}$, is said to be a concordant pair for $P_1$ and $P_2$ if $a$ appears before $b$ on $P_1$ and $P_2$; and 
the subpaths $P_1[x_1,a],P_2[x_2,a]$ are internally 
vertex-disjoint, and likewise $P_1[b,y_1],P_2[b,y_2]$ are internally 
vertex-disjoint.
\end{definition}

Note that for general directed graphs,  then there can be multiple concordant pairs with respect to any pair of intersecting paths (see~\Cref{fig:main-idea-2}). We denote by $\C(P_1,P_2)$ the set of all concordant pairs with respect to paths $P_1,P_2$.




\begin{figure}[!ht]
\centering
\includegraphics[height=40mm, page=1]{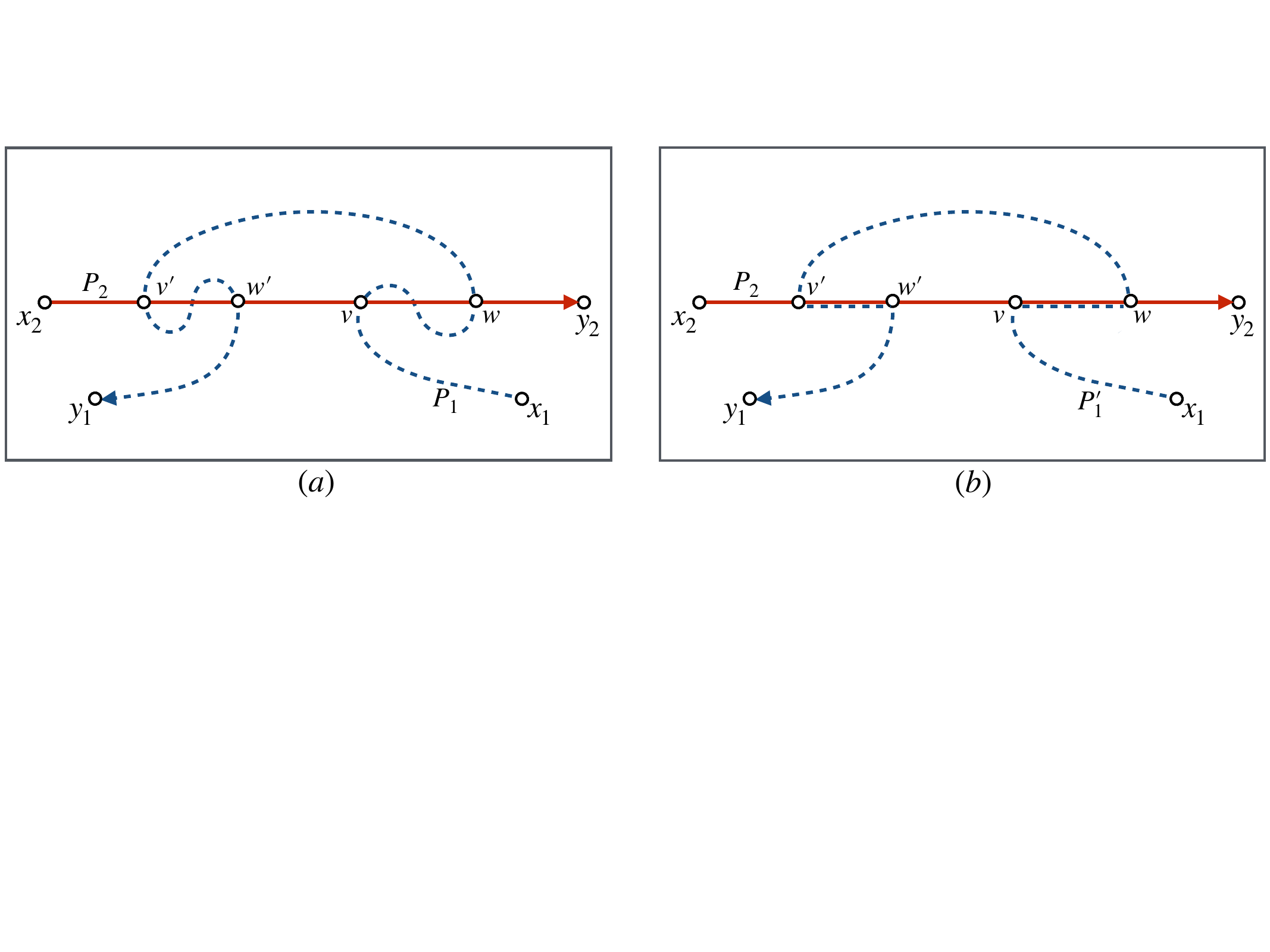}
\quad~
\includegraphics[height=40mm, page=2]{main-idea-2.pdf}
\caption{Illustration of paths $P_1,P_2$ with two concordant pairs. Observe that $\C(P_1,P_2)=\C(P_1',P_2)$. 
}
\label{fig:main-idea-2}
\end{figure}

\begin{restatable}{lemma}{lemconcordantpairs}
\label{lemma:concordant-pairs-1}
Let $(P_1,P_2)\in \Pi(s_1,t_1)\times \Pi(s_2,t_2)$ be a pair of paths, and $(v_1,w_1),\ldots,(v_r,w_r)$ be the concordant pairs with respect to $P_1,P_2$. Then, the following holds.

\begin{enumerate}


\item $($\emph{Vertex-disjointness}$)$ 
For $i\neq j$, 
$V(P_1[v_i,w_i]) \cap V(P_1[v_j,w_j])$ and 
$V(P_2[v_i,w_i]) \cap V(P_2[v_j,w_j])$ are empty sets; 
that is, the subpaths corresponding to concordant pairs on $P_1$ (resp. $P_2$) are vertex-disjoint.

\item $($\emph{Order Reversal}$)$ 
If the concordant pairs are ordered along $P_1$ as   $(v_1,w_1),\ldots,(v_r,w_r)$, then on $P_2$ this ordering is exactly reversed.

\end{enumerate}

\end{restatable}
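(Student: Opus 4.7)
\medskip
\noindent\textbf{Proof proposal.}
The plan is to first establish Part~1 (vertex-disjointness) by contradiction, then derive Part~2 (order reversal) as a corollary using the same toolkit. The central engine throughout is what I will call the \emph{crossing principle}: if $(c,d)$ is a concordant pair for $P_1,P_2$ and some $u \in V(P_1) \cap V(P_2)$ is a strictly internal vertex of $P_1[x_1,c]$, then the internal vertex-disjointness of $P_1[x_1,c]$ and $P_2[x_2,c]$ forces $u \notin V(P_2[x_2,c])$, hence $c \preceq_{P_2} u$. A symmetric statement holds on the suffix side via $P_1[d,y_1]$ and $P_2[d,y_2]$. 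This principle is what lets me translate $P_1$-orderings into $P_2$-orderings.

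For Part~1, I would suppose for contradiction that the $P_1$-intervals of two distinct pairs $(v_i,w_i)$ and $(v_j,w_j)$ share a vertex. WLOG $v_i \preceq_{P_1} v_j$, so the overlap forces either (a)~$v_i \preceq_{P_1} v_j \preceq_{P_1} w_i \preceq_{P_1} w_j$ or (b)~$v_i \preceq_{P_1} v_j \preceq_{P_1} w_j \preceq_{P_1} w_i$. Applying the crossing principle (prefix and suffix versions) to both pairs pins down the admissible relative positions of $v_i,v_j,w_i,w_j$ on $P_2$: case~(b) collapses to the single configuration $v_j \preceq_{P_2} v_i \preceq_{P_2} w_i \preceq_{P_2} w_j$, while case~(a) leaves two sub-orderings. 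In each resulting configuration I would exploit that every subpath of a shortest path is itself a shortest path to write identities such as $|P_1[v_i,w_i]| = |P_2[v_i,w_i]| = \dist(v_i,w_i)$, together with decompositions of $\dist(v_i,w_j)$ and $\dist(v_j,w_i)$ along both paths. Eliminating variables reduces each case to an equation ``weight of a closed walk equals $0$''. For instance, case~(b) yields $|P_1[v_i,v_j]| + |P_2[v_j,v_i]| + |P_2[w_i,w_j]| + |P_1[w_j,w_i]| = 0$, the sum of the weights of two non-trivial round trips $v_i \to v_j \to v_i$ and $w_i \to w_j \to w_i$. Since every cycle has positive weight, any non-trivial closed walk has strictly positive weight, delivering the contradiction.

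For Part~2, once Part~1 is in hand the pairs admit a linear order along $P_1$, and it suffices to verify that whenever $w_i \preceq_{P_1} v_j$ one has $w_j \preceq_{P_2} v_i$. The crossing principle immediately yields $v_j \preceq_{P_2} v_i$, $v_j \preceq_{P_2} w_i$, and $w_j \preceq_{P_2} w_i$, so the only alternative to $w_j \preceq_{P_2} v_i$ is the ordering $v_j \preceq_{P_2} v_i \preceq_{P_2} w_j \preceq_{P_2} w_i$. To exclude this I would compare $|P_1[v_i,w_j]|$ with $|P_2[v_i,w_j]|$ (both equal to $\dist(v_i,w_j)$) and combine with $|P_1[v_j,w_j]| = |P_2[v_j,w_j]|$ to obtain $|P_2[v_j,v_i]| + |P_1[v_i,w_i]| + |P_1[w_i,v_j]| = 0$, which is the weight of the closed walk $v_j \to v_i \to w_i \to v_j$, again contradicting positivity of cycle weights.

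The main obstacle I anticipate is the careful handling of degenerate endpoint cases. If a pair-endpoint coincides with a source or target (e.g.\ $v_i = x_1$ or $v_i = x_2$), the relevant prefix/suffix disjointness becomes vacuous and the crossing principle does not fire directly. Likewise, coincidences among the pair endpoints themselves (such as $v_i = v_j$, $w_i = w_j$, or $w_i = v_j$) must be excluded before the overlap case analysis even begins. Each such degenerate situation can be dispatched by a short closed-walk argument of the same flavor; for instance, two concordant pairs $(v,w_1),(v,w_2)$ with $w_1 \neq w_2$ would force $|P_1[w_1,w_2]| + |P_2[w_2,w_1]| = 0$, again contradicting positivity of cycle weights. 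These preliminary dispositions are routine but must be laid out cleanly before invoking the main case analysis.
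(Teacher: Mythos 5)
Your proposal is correct, but it takes a noticeably different route from the paper's. The paper first isolates a structural helper result (Lemma~\ref{lemma:concordant-pairs-0} in the appendix), stated for \emph{any} pair of vertices $v$ preceding $w$ on both paths: every internal vertex of $P_1[v,w]$ lying on $P_2$ must lie in $P_2[v,w]$. That statement is proved once via the closed-walk / zero-weight-cycle argument, and then Part~1 of the lemma follows from a short three-way case split that combines this helper with the disjointness baked into the concordance definition, while Part~2 is a two-line corollary of Part~1 plus the definition. Your argument bypasses that helper lemma entirely: your ``crossing principle'' is really only the prefix/suffix-disjointness half of the toolkit, and you supply the closed-walk contradiction directly inside each case by writing out subpath-length identities of the form $\sum(\text{walk weights})=0$ and invoking positivity of nontrivial closed walks. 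Both proofs therefore rest on the same engine (no cycles of non-positive weight), but the paper factors it out once and reuses it, whereas you redo the arithmetic per configuration; the paper's version is more modular and shorter, while yours is more elementary and self-contained. Two small points worth tidying in your write-up: (i) you argue disjointness only of the $P_1$-intervals, so you should explicitly invoke the $P_1\leftrightarrow P_2$ symmetry of the concordance definition to dispatch overlap of the $P_2$-intervals; and (ii) in Part~2, the ordering $v_j\preceq_{P_2} v_i\preceq_{P_2} w_j\preceq_{P_2} w_i$ that you exclude with an extra closed-walk computation is already ruled out by Part~1's $P_2$-disjointness, so that step is redundant (correct, but unnecessary).
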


Intuitively, the concordant pairs capture the structural overlap for any pair of paths having minimum overlap.  In order to explain this we present the following observation (for proof see appendix).

\begin{restatable}{observation}{obspathCritical}
\label{observation:path-intesection-critical-nodes}
For any $v,w\in V$, there exist two shortest paths from $v$ to $w$ in $G$ that intersect only at $(v,w)$-distance critical vertices.
\end{restatable}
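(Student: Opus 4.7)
\emph{Proof plan.} Let $\Delta(v,w)=\{v=u_0,u_1,\ldots,u_r=w\}$; by definition each $u_i$ lies on every shortest $v$-$w$ path, so it suffices to exhibit two paths $P,Q\in\Pi(v,w)$ with $V(P)\cap V(Q)=\Delta(v,w)$. The plan is to linearly order the critical vertices, then join consecutive ones via internally vertex-disjoint shortest sub-paths, and finally concatenate.

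\emph{Step 1: canonical ordering of critical vertices.} I first argue that every shortest $v$-$w$ path visits the $u_i$'s in the same order. Suppose for contradiction that two critical vertices $u_i,u_j$ appear in opposite orders on two shortest paths $P,Q$, say $u_i$ before $u_j$ on $P$ and $u_j$ before $u_i$ on $Q$. Consider the walk $W:=P[v,u_i]\cdot Q[u_i,w]$. Its length equals $\dist(v,u_i)+\dist(u_i,w)=\dist(v,w)$, the last equality because $u_i$ is critical. Since $G$ has no cycle of weight $\le 0$, every walk of length $\dist(v,w)$ from $v$ to $w$ must in fact be a simple path; indeed, removing any repeated-vertex cycle would strictly shorten the walk, an impossibility. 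Hence $W\in\Pi(v,w)$. But $u_j$ appears on $P$ only after $u_i$ and on $Q$ only before $u_i$, so $u_j\notin V(W)$, contradicting the criticality of $u_j$. I then relabel so that $u_0,u_1,\ldots,u_r$ is this common order.

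\emph{Step 2: internally disjoint sub-paths between consecutive critical vertices.} For each consecutive pair $(u_i,u_{i+1})$, I claim there exist two internally vertex-disjoint shortest paths $P_i^{(1)},P_i^{(2)}$ from $u_i$ to $u_{i+1}$ in $G$. Let $H_i$ be the shortest-path DAG from $u_i$ to $u_{i+1}$, i.e., the subgraph on vertex set $V(u_i,u_{i+1})$ and edge set $E(u_i,u_{i+1})$ as defined in \Cref{sec:preliminaries}. If some internal vertex $x$ of $H_i$ were a cut vertex separating $u_i$ from $u_{i+1}$ in $H_i$, then $x$ would lie on every shortest $u_i$-$u_{i+1}$ path; combined with Step 1, every shortest $v$-$w$ path (whose segment between $u_i$ and $u_{i+1}$ is a shortest $u_i$-$u_{i+1}$ sub-path) would contain $x$, making $x\in\Delta(v,w)$ and contradicting the choice of $u_i,u_{i+1}$ as consecutive. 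Hence $H_i$ has internal vertex connectivity at least $2$, and Menger's theorem yields the required pair $P_i^{(1)},P_i^{(2)}$. In the degenerate case where $H_i$ is just the single edge $(u_i,u_{i+1})$, I simply take $P_i^{(1)}=P_i^{(2)}$ to be that edge; the intersection remains inside $\{u_i,u_{i+1}\}\subseteq\Delta(v,w)$. Concatenating across $i=0,\ldots,r-1$ yields $P:=P_0^{(1)}\cdots P_{r-1}^{(1)}$ and $Q:=P_0^{(2)}\cdots P_{r-1}^{(2)}$ in $\Pi(v,w)$ with $V(P)\cap V(Q)=\Delta(v,w)$, as required.

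The main obstacle is Step 1: it relies essentially on the absence of cycles of non-positive weight, since otherwise the spliced walk $W$ could wind around a zero-weight cycle and need not be simple, breaking the contradiction with the criticality of $u_j$. Once this ordering is secured, Step 2 is a clean application of Menger's theorem to the shortest-path DAG.
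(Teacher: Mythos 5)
Your proof is correct and follows the same high-level approach as the paper: both pass to the shortest-path DAG and then use a Menger/max-flow argument to extract the two paths. The paper compresses your Steps~1 and~2 into a single appeal to vertex-capacitated Ford--Fulkerson on $G^{out}_v$, whereas you explicitly order the distance-critical vertices, apply Menger's theorem between consecutive ones, and concatenate---a more detailed spelling-out of the same construction.
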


As a corollary we get the following.

\begin{lemma}
\label{lemma:concordant-pairs-2}
For any $(s_1,t_1)$ and $(s_2,t_2)$ shortest paths $P_1,P_2$ intersecting at exactly $k^*$ internal vertices and any $(u,v)\in \C(P_1,P_2)$, the subpaths of $P_1,P_2$ from $u$ to $v$ have $\delta(u,v)$ vertices in common. Furthermore, these common vertices are precisely the $(u,v)$-distance-critical vertices.
\end{lemma}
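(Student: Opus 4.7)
The plan is to prove both directions of the equality separately. The lower bound $|V(P_1[u,v]) \cap V(P_2[u,v])| \geq \delta(u,v)$ is immediate: both $P_1[u,v]$ and $P_2[u,v]$ are shortest $u$-$v$ paths, so every $(u,v)$-distance-critical vertex must lie on each of them, giving $\Delta(u,v) \subseteq V(P_1[u,v]) \cap V(P_2[u,v])$. The substantive direction is the matching upper bound, which I would prove by a replacement argument that contradicts the optimality of $(P_1,P_2)$.

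The crucial preliminary is the following structural fact, which holds for any shortest paths with a concordant pair and not merely optimal ones: if $(u,v) \in \C(P_1, P_2)$, then $V(P_1[u,v]) \cap V(P_2) = V(P_1[u,v]) \cap V(P_2[u,v])$ (and symmetrically with the roles of $P_1, P_2$ swapped). Suppose, toward contradiction, that $w \in V(P_1[u,v]) \cap V(P_2[s_2,u])$ is internal to both. The walk $s_2 \xrightarrow{P_2} w \xrightarrow{P_1} v \xrightarrow{P_2} t_2$ has length at least $\dist(s_2,t_2)$; expanding via the decompositions $d_{P_2}(s_2,u)=d(s_2,w)+d(w,u)$ (since $w$ precedes $u$ on $P_2$) and $d_{P_1}(u,v)=d(u,w)+d(w,v)$ (since $w$ lies between $u$ and $v$ on $P_1$) yields $d(w,v) \geq d(w,u)+d(u,v)$. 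Combined with the triangle inequality this is an equality, and substituting back into $d(u,v)=d(u,w)+d(w,v)$ gives $d(u,w)+d(w,u)=0$, contradicting the absence of zero/negative weight cycles. The case $w \in V(P_1[u,v]) \cap V(P_2[v,t_2])$ is handled symmetrically.

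Now suppose for contradiction that $|V(P_1[u,v]) \cap V(P_2[u,v])| > \delta(u,v)$. By Observation~\ref{observation:path-intesection-critical-nodes}, pick $Q_1,Q_2 \in \Pi(u,v)$ with $V(Q_1) \cap V(Q_2) = \Delta(u,v)$, and set $P_i' := P_i[s_i,u] \cdot Q_i \cdot P_i[v,t_i]$ for $i=1,2$. Each $P_i'$ is a walk of length $\dist(s_i,t_i)$, and applying the same cycle-length reasoning to $Q_i$ versus $P_i[s_i,u]$ and $P_i[v,t_i]$ shows $P_i'$ is simple and therefore a shortest path. The outer segments are unchanged, so $(u,v)$ remains a concordant pair for $(P_1',P_2')$, and the structural fact applies to the new pair as well. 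Decomposing both $|V(P_1) \cap V(P_2)|$ and $|V(P_1') \cap V(P_2')|$ into outer-outer, outer-middle, and middle-middle contributions, the outer-outer pieces are identical (the outer segments coincide), each outer-middle piece equals exactly $\{u\}$ or $\{v\}$ in both pairs (by the structural fact together with simplicity of the paths), and only the middle-middle term differs, dropping from $|V(P_1[u,v]) \cap V(P_2[u,v])|$ to $|V(Q_1) \cap V(Q_2)| = \delta(u,v)$. Hence $|V(P_1') \cap V(P_2')| < k^*$, contradicting the optimality of $(P_1,P_2)$. This forces $|V(P_1[u,v]) \cap V(P_2[u,v])| = \delta(u,v)$, and since $\Delta(u,v)$ already accounts for all $\delta(u,v)$ common vertices, the intersection equals $\Delta(u,v)$.

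The delicate step is the structural fact: the concordant-pair definition constrains only prefix-prefix and suffix-suffix interactions, so a priori a middle vertex of $P_1$ could appear in an outer segment of $P_2$, and then the naive replacement would risk creating fresh $Q_1$-outer and $Q_2$-outer intersections that wipe out the middle-middle gain. It is precisely the no-zero/negative-cycle hypothesis that makes the closed-walk identity $d(u,w)+d(w,u)>0$ available and drives the contradiction, so this is the place where the proof genuinely uses that $G$ is a directed graph with strictly positive cycle weights rather than a DAG or an undirected graph.
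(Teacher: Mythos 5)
Your proof is correct and takes the approach the paper implicitly intends: Lemma~\ref{lemma:concordant-pairs-2} is stated there as a corollary of Observation~\ref{observation:path-intesection-critical-nodes}, and your replacement-and-count argument is the natural way to make that corollary precise. Note that what you call the ``structural fact'' is exactly Lemma~\ref{lemma:concordant-pairs-0} in the paper's appendix (proved there by the same zero-weight-cycle contradiction), and that result already holds under the weaker hypothesis that $u$ precedes $v$ on both paths, concordance not required; your accounting of the outer/middle contributions and the conclusion that $(u,v)$ remains concordant for $(P_1',P_2')$ because the outer segments are untouched are both sound.
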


This suggests a shift in strategy: rather than directly searching for an optimal path pair, we can search for a pair of paths such that the total count of distance-critical vertices across all concordant pairs is minimized. This leads us to the following definition.

\begin{definition}
The interaction complexity of two paths $P_1$ and $P_2$ is defined as
$$
\gamma(P_1, P_2) = \sum_{(v, w) \in \C(P_1, P_2)} \delta(v, w).
$$
\end{definition}

Note that if $P_1,P_2$ is an optimal solution to Min-2-DSP then $\gamma(P_1, P_2)=k^*$. We define $F_{\disj, k}(s_1,t_1,s_2,t_2)$ as the enumerating polynomial of all pair of paths in $\Pi(s_1,t_1)\times \Pi(s_2,t_2)$ that have at most $k$ interaction complexity.
Our goal is to return the smallest $k$ for which $F_{\disj, k}(s_1,t_1,s_2,t_2)$ is non-zero.

In order to compute $F_{\disj, k}$, we introduce several helper polynomials and present relations between them. These are more complicated in nature, so we defer the discussion to \Cref{sec:Min-2DSP-directed}.

\subsection{Solving Min-2-DSP in DAGs and Undirected graphs} 
For DAGs and undirected graphs, the Min-2-DSP problem admits a simpler and more efficient $O(m+n)$ time algorithm. 
First assume that given pairs $(s_1, t_1)$ and $(s_2, t_2)$, the desired minimum number of intersections $k^*$ between any pair of paths $(P_1, P_2) \in \Pi(s_1, t_1) \times \Pi(s_2,t_2)$ is strictly positive (otherwise, we can employ the results in~\cite{AkmalVW24}). For a vertex $v$, let $\alpha(v)$ denote the minimum number of intersection of any pair of paths $(P_1, P_2) \in \Pi(s_1, t_1) \times \Pi(s_2, t_2)$ such that $v$ lies on both of these paths. Since the optimal pair of paths must intersect at some vertex, it is not hard to show that $k^*$ is equal to $\min_v \alpha(v)$ (if $v$ does not lie on any pair of paths in  $\Pi(s_1, t_1) \times \Pi(s_2, t_2)$, we can assume $\alpha(v)$ to be infinity).


A naive method for calculating $\alpha(v)$ would be following: (i) find the $(v, t_1), (v, t_2)$ shortest paths with minimum number of intersections (call this number $\alpha_1(v)$) and (ii) find the $(s_1, v), (s_2, v)$ shortest paths with minimum number of intersections (call this number $\alpha_2(v)$). 
If $G$ is DAG, or if $G$ is undirected and $k^*\geqslant 2$, then it can be seen that $\alpha(v) = \alpha_1(v) + \alpha_2(v)$. 
For each vertex $v$, $\alpha_1(v)$ and $\alpha_2(v)$ can be computed in $O(m+n)$ time by building dominator trees~\cite{LT79, GeorgiadisT05} in appropriate subgraphs of $G$. This results in overall running time of $O(mn)$.

While it may seem that computing $\alpha(v)$ for any fixed $v$ takes $O(m)$ time, our main insight is that this computation can be done more efficiently, such that computing $\alpha(v)$ over all $v \in V$ takes $O(m + n)$ total time. For doing so, we take inspiration from the notion of {\em bicoloured components} defined by Lochet~\cite{Lochet21} and partitioning the vertex set $V$ into smaller components where Min-2-DSP reduces to Min-2-DP.

\section{2-DSP in Directed graphs}
\label{sec:2DSP}
For any vertex pairs $(x_1, y_1)$ and $(x_2, y_2)$, let $F_{\disj}(x_1, y_1, x_2, y_2)$ be the enumerating polynomial of the collection of all pairs $(P_1, P_2) \in \Pi(x_1, y_1) \times \Pi(x_2,y_2)$ such that  $P_1$ and $P_2$ are internally vertex-disjoint. 



Observe that the polynomial $F_{\disj}(x_1, y_1, x_2, y_2)$ is non-zero if and only if there exist two shortest paths, one from $x_1$ to $y_1$ and the other from $x_2$ to $y_2$, that are internally vertex-disjoint. This follows from the fact that each such pair contributes a distinct monomial to $F_{\disj}(x_1, y_1, x_2, y_2)$, determined by the set of edges used along both paths. Indeed, if $(P_1, P_2)$ is a pair of internally vertex-disjoint shortest paths for $(x_1, y_1)$ and $(x_2, y_2)$, then for each $i=1,2$, the path from $x_i$ to $y_i$ remains unique in the subgraph $G' = (V, E(P_1) \cup E(P_2))$, unless the two vertex pairs coincide.

\subsection{Helper polynomials}
In this subsection, we will provide construction of various helper polynomials 
that will be crucial in solving 2-DSP in directed graphs.
The helper polynomials introduced in this section will be helpful in providing a recursive formulation for the polynomial $F_{\disj}(x_1, y_1, x_2, y_2)$.

\subsubsection*{I. Disjoint Prefix-Suffix linkages up to a Shared Vertex/Edge}

For any $v\in V$, let $H_v(x_1, y_1, x_2, y_2)$ be the enumerating polynomial of all pairs $(P_1, P_2) \in \Pi(x_1,y_1) \times \Pi(x_2,y_2)$  such that
\begin{enumerate}
\item $v\in V(P_1)\cap V(P_2)$, and 
\item Prefix $P_1[x_1,v]$ and suffix $P_2[v,y_2]$ are internally vertex-disjoint.
\end{enumerate}
Similarly, for any edge $e=(a,v)\in E$, let $H_{av}(x_1, y_1, x_2, y_2)$ be the enumerating polynomial of all  all pairs $(P_1, P_2) \in \Pi(x_1,y_1) \times \Pi(x_2,y_2)$  such that
\begin{enumerate}
\item $e\in E(P_1)\cap E(P_2)$, and 
\item Prefix $P_1[x_1,e]$ and suffix $P_2[e,y_2]$ are internally vertex-disjoint.
\end{enumerate}

\smallskip

The next two lemmas (with proofs in appendix) present relations between polynomials $H_v, H_{av}$, and $F_{\disj}$.

\smallskip

\begin{restatable}{lemma}{lemHv}
\label{lem:Hv}
For any $v \in V$ satisfying $\dist(x_i, y_i) = \dist(x_i, v) + \dist(v, y_i)$, for $i=1,2$, we have 
$$H_v(x_1, y_1, x_2, y_2) = F(x_2, v) \cdot F_{\disj}(x_1, v, v, y_2) \cdot F(v, y_1).$$
\end{restatable}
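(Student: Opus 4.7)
The plan is to establish a bijection between pairs $(P_1, P_2)$ enumerated by $H_v(x_1, y_1, x_2, y_2)$ and tuples $(Q, R_1, R_2, Q')$ with $Q \in \Pi(x_2, v)$, $(R_1, R_2) \in \Pi(x_1, v) \times \Pi(v, y_2)$ internally vertex-disjoint, and $Q' \in \Pi(v, y_1)$, in such a way that monomial contributions are preserved. The bijection is the natural split-and-concatenate map: given $(P_1, P_2)$, send it to $(P_2[x_2, v],\, P_1[x_1, v],\, P_2[v, y_2],\, P_1[v, y_1])$; conversely, set $P_1 := R_1 \cdot Q'$ and $P_2 := Q \cdot R_2$. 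Under this correspondence, $f(P_1) f(P_2) = f(Q) \cdot f(R_1) f(R_2) \cdot f(Q')$, so summing over the bijection yields the desired identity $H_v(x_1, y_1, x_2, y_2) = F(x_2, v) \cdot F_{\disj}(x_1, v, v, y_2) \cdot F(v, y_1)$.

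For the forward direction, the distance hypothesis $\dist(x_i, y_i) = \dist(x_i, v) + \dist(v, y_i)$ combined with the fact that $v$ lies on the shortest path $P_i$ forces $P_i[x_i, v]$ and $P_i[v, y_i]$ to be shortest paths in their own right. The internal-disjointness condition defining $H_v$ says precisely that the extracted pair $(P_1[x_1, v], P_2[v, y_2])$ is counted by $F_{\disj}(x_1, v, v, y_2)$, while the remaining pieces $P_2[x_2, v]$ and $P_1[v, y_1]$ are unrestricted shortest paths contributing to the $F(x_2, v)$ and $F(v, y_1)$ factors.

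The subtler direction is the reverse. Given a tuple $(Q, R_1, R_2, Q')$, I must check that $P_1 := R_1 \cdot Q'$ and $P_2 := Q \cdot R_2$ are simple shortest paths between the required endpoints, and that the resulting pair lies in the collection defining $H_v$. This is where the absence of non-positive weight cycles becomes crucial. Suppose $R_1$ and $Q'$ shared a vertex $w \neq v$. Since sub-paths of shortest paths are shortest, $R_1[x_1,w]$ and $Q'[w,y_1]$ have lengths $\dist(x_1,v) - \dist(w,v)$ and $\dist(v,y_1) - \dist(v,w)$ respectively, so the walk $R_1[x_1, w] \cdot Q'[w, y_1]$ from $x_1$ to $y_1$ has length $\dist(x_1, y_1) - (\dist(w, v) + \dist(v, w))$. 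The triangle inequality forces this to be at least $\dist(x_1, y_1)$, which gives $\dist(w, v) + \dist(v, w) \leq 0$; but $\dist(w,v) + \dist(v,w)$ is the length of a closed walk through $w$ and $v$, which decomposes into simple cycles of total positive weight by hypothesis, a contradiction. The analogous argument handles $Q$ and $R_2$ for $P_2$, and the internal-disjointness constraint inherited from the $F_{\disj}$ factor gives the condition on $(P_1[x_1, v], P_2[v, y_2])$ required by $H_v$.

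The main obstacle I anticipate is purely notational bookkeeping around boundary cases where $v$ coincides with one of $x_1, y_1, x_2, y_2$. In such cases one of the sub-paths degenerates to a single vertex with $f$-value $1$, and I must verify that the standard conventions ($F(v, v) = 1$ and the internal-disjointness definition applied to pairs with coincident endpoints) align consistently with the factored product. This is a sanity check rather than a substantive difficulty, and once handled the monomial-preserving bijection immediately yields the claimed identity.
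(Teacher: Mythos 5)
Your proof is correct and follows the same prefix/suffix decomposition as the paper's: split each $P_i$ at $v$ and observe that the factor $F_{\disj}(x_1,v,v,y_2)$ captures the constrained pieces $P_1[x_1,v],P_2[v,y_2]$ while $F(x_2,v)$ and $F(v,y_1)$ capture the unconstrained ones. You are more careful than the paper in explicitly verifying the reverse direction of the bijection—namely, that concatenations like $R_1\cdot Q'$ remain simple—via the positive-cycle argument; this step is implicit in the paper's phrase "$H_v$ enumerates all such tuples" but is indeed needed for the identity to hold as stated.
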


\begin{restatable}{lemma}{lemHav}
\label{lem:Hav}
For any $e = (a, v) \in E$, satisfying $\dist(x_i, y_i) = \dist(x_i, a) + wt(a, v) + \dist(v, y_i)$, for $i = 1, 2$, we have
$$H_{av}(x_1, y_1, x_2, y_2) = F(x_2, a) \cdot z_{av}^2 \cdot F_{\disj}(x_1, a, v, y_2) \cdot F(v, y_1).$$
\end{restatable}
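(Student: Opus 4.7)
The plan is to mirror the structure of the proof of Lemma~\ref{lem:Hv}, with an extra factor of $z_{av}^2$ accounting for the fact that both paths share the edge $e = (a,v)$. Specifically, I would establish a bijection between the pairs $(P_1, P_2)$ contributing to $H_{av}(x_1, y_1, x_2, y_2)$ and tuples $(R, Q_1, Q_2, S)$ with $R \in \Pi(x_2, a)$, $(Q_1, Q_2) \in \Pi(x_1, a) \times \Pi(v, y_2)$ internally vertex-disjoint, and $S \in \Pi(v, y_1)$, and then read off the product formula by summing monomials.

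First I would set up the forward map. Since both $P_1$ and $P_2$ contain $e$, I split each path at $e$, defining $Q_1 := P_1[x_1,a]$, $S := P_1[v,y_1]$, $R := P_2[x_2,a]$, and $Q_2 := P_2[v,y_2]$. Each piece is a shortest path between its endpoints, and condition~(2) in the definition of $H_{av}$ is precisely the requirement that $(Q_1, Q_2)$ is counted by $F_{\disj}(x_1, a, v, y_2)$. The monomial then factors as $f(P_1) f(P_2) = z_{av}^2 \cdot f(R) \cdot f(Q_1) f(Q_2) \cdot f(S)$.

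The hard part will be the reverse direction, namely showing that an arbitrary tuple $(R, Q_1, Q_2, S)$ of the above form yields simple shortest paths $P_1 := Q_1 \cdot (a,v) \cdot S$ and $P_2 := R \cdot (a,v) \cdot Q_2$. The length calculation is immediate from the hypothesis, but simplicity is the real obstacle: a priori $Q_1$ and $S$ could share an internal vertex, producing a cycle in $P_1$. To rule this out, I would assume a shared vertex $w \in V(Q_1) \cap V(S)$ and combine $\dist(x_1,w) + \dist(w,a) = \dist(x_1,a)$, $\dist(v,w) + \dist(w,y_1) = \dist(v,y_1)$, the triangle inequality $\dist(x_1,w) + \dist(w,y_1) \geqslant \dist(x_1,y_1)$, and the hypothesis to derive $\dist(w,a) + wt(a,v) + \dist(v,w) \leqslant 0$. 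Since this sum is the weight of a closed walk $w \to a \to v \to w$, which decomposes into cycles of strictly positive weight under the no negative-or-zero-cycle assumption, the inequality is contradictory. The same reasoning handles degenerate cases (e.g.\ $a \in V(S)$ or $v \in V(Q_1)$) and, by symmetry, shows $V(R) \cap V(Q_2) = \emptyset$, so $P_2$ is simple as well.

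With the bijection in hand, summing $f(P_1) f(P_2) = z_{av}^2 \cdot f(R) \cdot f(Q_1) f(Q_2) \cdot f(S)$ independently over each of the four components factorizes $H_{av}(x_1, y_1, x_2, y_2)$ as $F(x_2, a) \cdot z_{av}^2 \cdot F_{\disj}(x_1, a, v, y_2) \cdot F(v, y_1)$, which is the desired identity.
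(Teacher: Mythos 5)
Your proof is correct and follows essentially the same approach as the paper: decompose each path at the shared edge $e$ into a prefix ending at $a$ and a suffix starting at $v$, observe that the monomial factors with a $z_{av}^2$ term, and read off the product formula. You are more careful than the paper about the reverse direction of the bijection — the paper simply asserts that $H_{av}/z_{av}^2$ enumerates the tuples $(Q_1,R_1,Q_2,R_2)$, whereas you explicitly verify that concatenating an arbitrary such tuple yields \emph{simple} shortest paths, ruling out a shared vertex $w$ via the observation that it would force a non-positive closed walk $w \to a \to v \to w$, contradicting the no-negative-or-zero-cycle hypothesis; that extra check is exactly the right justification and is a genuine (if minor) improvement in rigor.
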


\subsubsection*{II. Paths with First Intersection at a Given Vertex along a Reference Path}
We define $D_v(x_1, y_1, x_2, y_2)$ to be the enumerating polynomial of the collection of pairs $(P_1, P_2) \in \Pi(x_1, y_1) \times \Pi(x_2,y_2)$ whose first intersection point with respect to $P_1$ is vertex~$v$%
\footnote{Note that $v$ may not be the first/last intersection point with respect to path $P_2$.}.
More formally, the paths $P_1$ and $P_2$ satisfy the following.
\begin{enumerate}
\item $v\in V(P_1)\cap V(P_2)$, and 
\item Prefix $P_1[x_1,v]$ is internally vertex-disjoint with $P_2[x_2,v]$ as well as $P_2[v,y_2]$.
\end{enumerate}

\smallskip

It is important to observe that if $v \notin V(x_1, y_1) \cap V(x_2, y_2)$, then $D_v(x_1, y_1, x_2, y_2)$ will be a zero polynomial.
We next present a relation between polynomials $H_v, H_{av},$ and $D_v$.
The lemma below crucially uses the fact that the underlying field has  characteristic two (for proof see appendix).

\begin{restatable}{lemma}{lemDv}
\label{lem:Dv}
For any $v \in V$ satisfying $\dist(x_i, y_i) = \dist(x_i, v) + \dist(v, y_i)$ for $i = 1, 2$, we have
$$
D_v(x_1, y_1, x_2, y_2)
~=~
H_v(x_1, y_1, x_2, y_2)
-
\sum_{a \in \IN(v)} H_{av}(x_1, y_1, x_2, y_2).
$$
\end{restatable}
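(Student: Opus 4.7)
The plan is to decompose the pairs counted by $H_v$ according to the entering edges of $P_1$ and $P_2$ at $v$, and then cancel the residual terms using the characteristic-two structure of the field. First, I would observe that $\sum_{a \in \IN(v)} H_{av}$ enumerates exactly the pairs $(P_1, P_2) \in H_v$ whose two paths share the same entering edge $(a, v)$: for such a pair, $P_1[x_1, a]$ is a prefix of $P_1[x_1, v]$ and hence internally vertex-disjoint from $P_2[v, y_2]$, so $(P_1, P_2)$ lies in $H_{av}$; conversely any pair in $H_{av}$ satisfies the defining conditions of $H_v$. Consequently, $H_v - \sum_{a \in \IN(v)} H_{av}$ is the enumerating polynomial of pairs in $H_v$ whose entering edges at $v$ are distinct.

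Within this residue I would further split pairs based on whether the prefixes $P_1[x_1, v]$ and $P_2[x_2, v]$ are internally vertex-disjoint. The disjoint case is exactly the set enumerated by $D_v$, since the disjointness condition automatically forces distinct entering edges at $v$ (a shared entering edge $(a, v)$ would place $a$ in both prefixes as a common internal vertex). Thus it remains to show that the monomial contribution of pairs with distinct entering edges but intersecting prefixes vanishes in characteristic two.

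For this I would construct a fixed-point-free, monomial-preserving involution on this residual set. Given such a pair $(P_1, P_2)$, the prefixes $P_1[x_1, v]$ and $P_2[x_2, v]$ are both shortest paths to $v$, so by Lemma~\ref{lem:order-intersect} their common vertices appear in the same relative order on both paths; let $u \neq v$ be the last such common vertex. Because the entering edges at $v$ differ, $P_1[u, v]$ and $P_2[u, v]$ disagree in at least their final edge, so $(u, v)$ is a twin-crossing pair for $(P_1, P_2)$. Applying $\phi_{u, v}$ and invoking Claim~\ref{cl:crossing} yields a distinct pair $(Q_1, Q_2) \in \Pi(x_1, y_1) \times \Pi(x_2, y_2)$ with $\phi_{u, v}(Q_1, Q_2) = (P_1, P_2)$ and $f(Q_1) f(Q_2) = f(P_1) f(P_2)$, since $\phi_{u, v}$ preserves the overall multiset of edges. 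I would then check that $(Q_1, Q_2)$ still lies in the residual set: $v \in V(Q_1) \cap V(Q_2)$ trivially; $Q_1[x_1, v] = P_1[x_1, u] \cdot P_2[u, v]$ is internally disjoint from $Q_2[v, y_2] = P_2[v, y_2]$ using the $H_v$-property of $(P_1, P_2)$ together with simplicity of $P_2$; the entering edges at $v$ are merely exchanged between the two paths and hence remain distinct; and the last common vertex of $Q_1[x_1, v]$ and $Q_2[x_2, v]$ is again $u$, so the canonical swap vertex is preserved.

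The main obstacle I expect is this last structural verification, namely that $u$ remains the canonical last common prefix vertex for $(Q_1, Q_2)$, so that the map $(P_1, P_2) \mapsto (Q_1, Q_2)$ is genuinely an involution on the residual set rather than merely a swap. This step requires a careful case analysis of where common vertices of $Q_1[x_1, v]$ and $Q_2[x_2, v]$ can lie, again leaning on Lemma~\ref{lem:order-intersect} applied to the swapped prefixes and on the simplicity of the original paths $P_1$ and $P_2$. Once the involution is established, the residual contributes a sum of monomial-identical pairs that cancel in characteristic two, yielding the claimed identity $D_v = H_v - \sum_{a \in \IN(v)} H_{av}$.
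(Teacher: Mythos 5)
Your proposal is correct and follows essentially the same approach as the paper's proof: it identifies $H_v - \sum_{a\in\IN(v)} H_{av}$ as the enumerating polynomial of pairs in $H_v$ with distinct entering edges at $v$, isolates the pairs contributing to $D_v$, and cancels the remaining pairs via the swap involution $\phi_{u,v}$ at the last common vertex $u$ of the two prefixes. Your $u$ coincides with the paper's $r$ (the immediate common predecessor of $v$, well-defined by Lemma~\ref{lem:order-intersect}), and the structural preservation you flag as the main obstacle — that $u$ remains the canonical swap vertex for $(Q_1,Q_2)$ so the map is a genuine fixed-point-free involution — is exactly the step the paper asserts with similar brevity; your sketch of why it holds (simplicity of $P_1,P_2$ plus maximality of $u$) is the right argument.
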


\begin{figure}[!ht]
\centering
\includegraphics[width=0.9\textwidth]{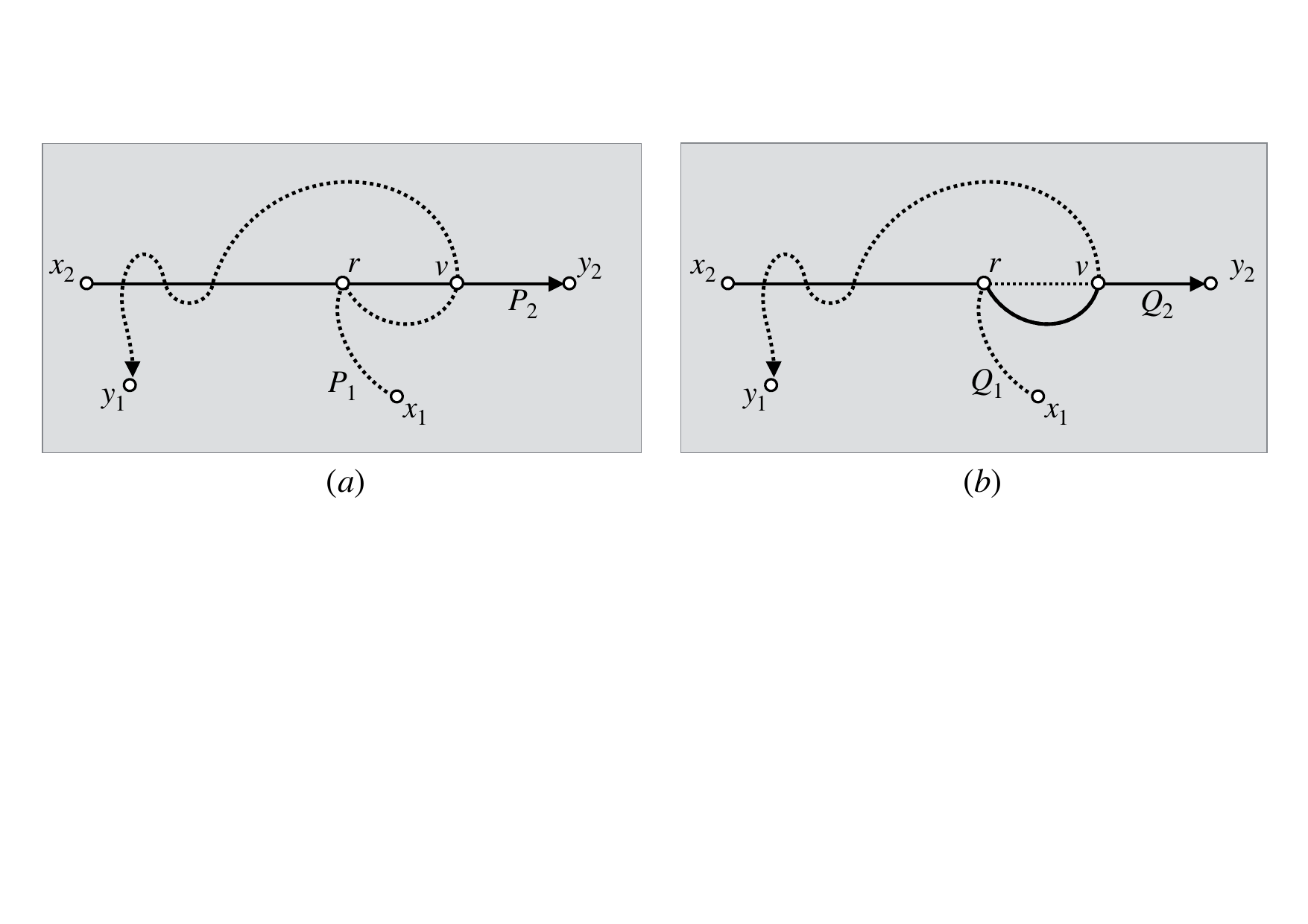}
\caption{Illustration of the swap operation on pair $(P_1,P_2)\in \A\setminus \B$.}
\label{figure:swap-2-DSP}
\end{figure}

\subsubsection*{III. Intersecting and Disjoint Shortest paths}
We define $F_{\cap}(x_1, y_1, x_2, y_2)$ to be the enumerating polynomial for all pairs of shortest paths between pairs $(x_1, y_1)$ and $(x_2, y_2)$ that are not internally vertex-disjoint.
Observe that  $$F_{\cap}(x_1, y_1, x_2, y_2) = \sum_{v \in V \setminus (\{x_1,y_1\}\cap\{x_2,y_2\})} D_v(x_1, y_1, x_2, y_2).$$ Indeed,  if two paths $P_1, P_2$ are intersecting (at an internal vertex), there must exist a unique earliest intersection point $v$ on $P_1$ (that lies outside the set $\{x_1,y_1\}\cap\{x_2,y_2\}$).
%
%
%
%
Since every pair of $(x_1,y_1)$ and $(x_2,y_2)$ shortest paths
is either internally vertex-disjoint or else intersects at some common vertex outside 
the set $\{x_1,y_1\}\cap\{x_2,y_2\}$, we obtain the following result.

\begin{lemma}
\label{lem:F-disj}
The enumerating polynomial $F_{\disj}(x_1, y_1, x_2, y_2)$ is given by
$$F_{\disj}(x_1, y_1, x_2, y_2) ~=~
F(x_1,y_1)F(x_2,y_2)
-\sum_{v \in V \setminus (\{x_1,y_1\}\cap\{x_2,y_2\})} D_v(x_1, y_1, x_2, y_2).$$
\end{lemma}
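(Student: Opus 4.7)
The plan is to read the identity as a two-step decomposition. Since the monomial $f(P_1)f(P_2)$ uniquely identifies the ordered pair $(P_1,P_2)$ of simple shortest paths (each simple path determines a distinct monomial in the edge indeterminates, and the product tags each path by whether its edges appear with exponent $1$ from the first factor or from the second), the product $F(x_1,y_1)\cdot F(x_2,y_2)$ is the enumerating polynomial of the whole set $\Pi(x_1,y_1)\times \Pi(x_2,y_2)$. Writing $T:=\{x_1,y_1\}\cap\{x_2,y_2\}$, every pair in this set either shares no vertex outside $T$, in which case it is internally vertex-disjoint and contributes to $F_{\disj}$, or it shares at least one vertex outside $T$, in which case it contributes to $F_{\cap}$. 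Hence
\[
F(x_1,y_1)F(x_2,y_2) \;=\; F_{\disj}(x_1,y_1,x_2,y_2) \;+\; F_{\cap}(x_1,y_1,x_2,y_2),
\]
and (over characteristic two) the sign in the statement of the lemma is immaterial.

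The core step is to verify $F_{\cap}(x_1,y_1,x_2,y_2)=\sum_{v\in V\setminus T} D_v(x_1,y_1,x_2,y_2)$. I would build a bijection between intersecting pairs and the disjoint union of the collections enumerated by the $D_v$. Given any intersecting pair $(P_1,P_2)$, traverse $P_1$ from $x_1$ towards $y_1$ and let $v^{*}$ be the first vertex encountered that lies on $P_2$ and is not in $T$; this exists because the pair is intersecting. I will then check that $(P_1,P_2)$ belongs to the collection enumerated by $D_{v^{*}}$: condition (1) is immediate from $v^*\in V(P_1)\cap V(P_2)$, and for condition (2), any internal vertex $u$ of $P_1[x_1,v^{*}]$ is a vertex of $P_1$ strictly before $v^{*}$, distinct from $x_1$ and $v^{*}$; by the minimality of $v^{*}$, $u$ does not lie on $P_2$, so in particular it lies on neither $P_2[x_2,v^{*}]$ nor $P_2[v^{*},y_2]$. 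Conversely, any pair in the collection of $D_v$ is intersecting at $v\notin T$, and condition (2) forces $v$ to be the first intersection vertex of $P_1$ with $P_2$ outside $T$: an earlier common internal vertex would have to belong to $\{x_1,v\}\cap\{x_2,v,y_2\}$, which is impossible for a vertex distinct from $x_1$ and $v$. Thus the map $(P_1,P_2)\mapsto v^{*}$ partitions the set of intersecting pairs over $v\in V\setminus T$, and summing monomials yields the claimed identity.

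The step I expect to demand the most care is the degenerate regime where several of $x_1,y_1,x_2,y_2$ coincide, because then the notion of ``intersecting outside $T$'' and the relaxed definition of \emph{internally vertex-disjoint} (which permits sharing of endpoints lying in $\{x_1,y_1\}\cap\{x_2,y_2\}$) have to be reconciled exactly. I would argue explicitly that (i) a vertex strictly internal to a subpath $P_i[a,b]$ cannot coincide with $a$ or $b$, and hence cannot be confused with an endpoint of $P_1$ or $P_2$ lying in $T$ via the subpath's own endpoints, and (ii) the first-intersection vertex $v^{*}$ cannot equal $y_1$ unless it is outside $T$ in which case the argument goes through unchanged. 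Once these boundary subtleties are handled, substituting the identity for $F_{\cap}$ back into the split of $F(x_1,y_1)F(x_2,y_2)$ proves the lemma.
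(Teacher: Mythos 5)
Your proposal is correct and follows essentially the same decomposition as the paper: split $F(x_1,y_1)F(x_2,y_2)$ into disjoint and intersecting contributions, then account for $F_{\cap}$ by classifying each intersecting pair by the first vertex on $P_1$ (outside $\{x_1,y_1\}\cap\{x_2,y_2\}$) that also lies on $P_2$. The paper states this observation very tersely; your write-up supplies the bijection details and the degenerate-endpoint bookkeeping explicitly, but the underlying argument is the same.
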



\subsection{Solving 2-DSP in ${O(m^2\log n)}$ time}
In this subsection, we present an algorithm for solving the 2-DSP problem in weighted directed graphs that takes $O(m^2)$ time. Let $(s_1, t_1)$ and $(s_2, t_2)$ be the input terminal pairs. By~Schwartz-Zippel lemma (\Cref{lem:SZ}), in order to check if $F_{\disj}(s_1, t_1, s_2, t_2)$ is non-zero it suffices to evaluate this polynomial  on a random assignment of the variables $\{z_e\}_{e \in E}$ drawn from a sufficiently large finite field $\mathbb{F}$. The algorithm for evaluating $F_{\disj}(s_1, t_1, s_2, t_2)$ at these random values is given in Algorithm~\ref{algo:2-DSP-directed}.

\smallskip

We begin by assigning random values $\barz_e$ from $\mathbb{F}$ to each of the edge variables $z_e$ respectively. For a polynomial $R$ involving the variables $z_e, e \in E$, we shall use ${\bar R}$ to denote the value in $\mathbb{F}$ obtained by evaluating $R$ at $\barz_e, e \in E$.  Using \Cref{lem:Fxy}, we compute $\barF(x, y)$ for all pairs of vertices in $O(mn \log n)$ time. In order to compute $\barFd(s_1,t_1, s_2, t_2),$ we would need to compute $\barFd(x_1,y_1, x_2, y_2)$ for several tuples $(x_1, y_1, x_2, y_2)$. We store all such tuples in a list $\cL$. In particular, $\cL$ contains the following: 
%
(i) tuples $(s_1, v, v, t_2)$, for each $v \in V$;
(ii) tuples $(s_1, a, v, t_2)$, for each  $(a,v) \in E$; and (iii) the query tuple $(s_1, t_1, s_2, t_2)$.

\smallskip



Now we evaluate $\barFd(x_1,y_1, x_2, y_2)$ for each such tuple in $\cL$ using dynamic programming. 
For this, we define a partial order $\prec$ on the set of vertices $V$ of the graph $G$ as follows: for any distinct $x, y \in V$, we say $x \prec y$ if there exists an $s_1$-to-$y$ shortest path in $G$ that contains $x$ (as an internal vertex). This is well-defined, as $G$ contains no cycles of zero weight.

\begin{definition}
Given the partial order $\prec$ on $V$, let $\cL^*$ denote an ordered list such that tuples of the form $(s_1, b, u, t_2)$, $(s_1, u, u, t_2)$ precede tuples like $(s_1, a, v, t_2)$, $(s_1, v, v, t_2)$ whenever $u \prec v$, i.e., whenever $u$ lies on an $(s_1,v)$ shortest path. The tuple $(s_1,t_1,s_2,t_2)$ appears at the end of the list.
\end{definition}

Computing $F_{\disj}$ for tuples in the order given  by $\cL^*$ ensures that 
 that while  evaluating $\barFd(x_1,y_1, x_2, y_2)$, all the values that are required for this computation are available. 
In the base case, when $x_1 = y_1$ and $x_2 = y_2$, we set $\barFd(x_1,y_1, x_2, y_2)=1$.




\bigskip

To evaluate $\barFd(x_1, y_1, x_2, y_2)$, we use the identity (see~\Cref{lem:F-disj}):
$$\barFd(x_1, y_1, x_2, y_2) = \barF(x_1, y_1) \barF(x_2, y_2) ~- \sum_{v \in V\setminus (\{x_1,y_1\}\cap \{x_2,y_2\})} \barD_v(x_1, y_1, x_2, y_2).$$

By \Cref{lem:Dv} ,$\barD_v$ can be expressed as

$$\barD_v(x_1, y_1, x_2, y_2) = \barH_v(x_1, y_1, x_2, y_2) ~- \sum_{(a,v) \in E} \barH_{av}(x_1, y_1, x_2, y_2),$$

for each $v\in V(x_1,y_1)\cap (x_2,y_2)$.
Now, by \Cref{lem:Hv}, we have

$$\barH_v(x_1, y_1, x_2, y_2) = \barF(x_2, v) \cdot \barFd(x_1, v, v, y_2) \cdot \barF(v, y_1).$$
We shall show in Lemma~\ref{lem:lstar} that the ordering induced by $\cL^*$ ensures that the r.h.s. above has already been computed by the dynamic program. 
Similarly, by \Cref{lem:Hav}, for any edge $(a,v)$,
$$\barH_{av}(x_1, y_1, x_2, y_2) = \barF(x_2, a) \cdot z_{av}^2 \cdot \barFd(x_1, a, v, y_2) \cdot \barF(v, y_1).$$

Finally, we output that there exists two disjoint shortest paths between $(s_1,t_1)$ and $(s_2,t_2)$ if and only if $\barFd(s_1, t_1, s_2, t_2)$ evaluates to non-zero.

\medskip



\begin{restatable}{lemma}{lemlstar}
\label{lem:lstar}
We can compute in $O(mn)$ time the ordered list $\cL^*$ so that, when invoking pairs according to this ordering in our dynamic program, all subproblems required are already computed.
\end{restatable}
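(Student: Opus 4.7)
The plan is to construct $\cL^*$ by a single topological sort of the shortest-path DAG rooted at $s_1$ and then keying each tuple by its ``deepest'' vertex. I would first run Bellman--Ford from $s_1$ to compute $\dist(s_1, v)$ for every $v \in V$ in $O(mn)$ time; this dominates the entire cost. Let $D$ be the subgraph of $G$ consisting of those edges $(u,v) \in E$ with $\dist(s_1,v) = \dist(s_1,u) + w(u,v)$. Any directed cycle in $D$ would telescope these equalities into a zero-weight cycle of $G$, forbidden by hypothesis; hence $D$ is acyclic, and a topological order $\sigma : V \to \{1, \dots, n\}$ can be computed in $O(m+n)$ time. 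Note that $u \prec v$ (that is, $u$ is an internal vertex of some $(s_1, v)$-shortest path) implies a $u$-to-$v$ directed path in $D$, and therefore $\sigma(u) < \sigma(v)$.

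I would then attach to every tuple in $\cL$ the sort key $\kappa(s_1,v,v,t_2) = (\sigma(v), 0)$, $\kappa(s_1,a,v,t_2) = (\sigma(v), 1)$ for an edge tuple associated with $(a,v) \in E$, and $\kappa(s_1,t_1,s_2,t_2) = (n+1,2)$. Since $|\cL| = n + m + 1 = O(m+n)$ and each key lies in a polynomial range, sorting by $\kappa$ via counting sort takes $O(m+n)$ time, so the overall cost remains $O(mn)$. Let $\cL^*$ be the resulting lexicographically non-decreasing list.

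For correctness, I would verify that whenever $\barFd(T)$ is expanded via \Cref{lem:F-disj}, \Cref{lem:Dv}, \Cref{lem:Hv} and \Cref{lem:Hav}, every recursive call lands on a tuple of strictly smaller key. If $T = (s_1,v,v,t_2)$, the only subproblems are $\barFd(s_1,v',v',t_2)$ and $\barFd(s_1,a',v',t_2)$ for $v' \in V(s_1,v) \cap V(v,t_2) \setminus \{v\}$; each such $v'$ is strictly internal to some $(s_1, v)$-shortest path, hence $\sigma(v') < \sigma(v)$, and both subproblem keys lie below $(\sigma(v), 0)$. If $T = (s_1,a,v,t_2)$, then the relevant Lemmas produce non-trivial contributions only when $(a,v)$ lies on a shortest path from $s_1$ through $v$; in that case $(a,v)$ is an edge of $D$, giving $\sigma(a) < \sigma(v)$, and any required $v' \in V(s_1,a) \cap V(v,t_2)$ then satisfies $\sigma(v') \leqslant \sigma(a) < \sigma(v)$, so its key is again dominated by $(\sigma(v), 1)$. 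Finally, the query tuple carries the largest key by construction, so all of its subproblem tuples appear earlier in $\cL^*$.

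The main obstacle, as anticipated, is handling zero-weight edges: with these present, $\dist(s_1, \cdot)$ alone does not separate ``earlier'' from ``later'' vertices, so the naive choice of sorting tuples by shortest-path distance would fail. It is precisely the hypothesis that $G$ contains no zero-weight cycles that rules out cycles in $D$ and thereby guarantees that a bona fide topological ordering $\sigma$ exists. Once this structural point is secured, the rest of the argument is bookkeeping on the recursion.
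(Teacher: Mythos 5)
Your proposal is correct and follows essentially the same route as the paper. The paper's proof (via Lemma~\ref{lem:partialorder}) likewise constructs the $s_1$-rooted shortest-path subgraph $H$ with edges $(a,b)$ satisfying $\dist(s_1,a)+wt(a,b)=\dist(s_1,b)$, observes it is acyclic precisely because $G$ has no zero-weight cycles, takes a topological order, and sorts $\cL$ by it; your explicit two-component sort key and the verification that each recurrence call strictly decreases the key are just a more concrete rendering of the same ordering argument.
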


\paragraph{Correctness and Running Time} 
The correctness of the procedure follows from the recursive expressions in~\Cref{lem:Hv}, \Cref{lem:Hav}, \Cref{lem:Dv}, and \Cref{lem:F-disj}. 
We now analyze the running time.
First observe that $\cal L$ has $O(m)$ tuples: there are at most $n$ choices for the tuples $(s_1, v, v, t_2)$ and at most $m$ choices for $(s_1, a, v, t_2)$. The description of the algorithm shows that 
 we can evaluate $\barD_v(x_1, y_1, x_2, y_2)$ in $O(\indeg(v))$ time for any fixed $v$, and hence we can compute $\barFd(x_1, y_1, x_2, y_2)$ in $O(m)$ total time per quadruple.  Since we evaluate these polynomials for $O(m)$ quadruples in $\cL$, the overall running time is bounded by $O(m^2)$, after an initial preprocessing step of $O(mn \log n)$. This yields a randomized $O(m^2 \log n)$-time algorithm for 2-DSP with high success probability.

\begin{algorithm}[!ht]
\setstretch{1.25}
Assign each edge $(u,v) \in E$ an independent and uniformly random value $\barz_{uv}$ from $\mathbb{F}$\;
Use Lemma~\ref{lem:Fxy} to compute $\barF(x,y)$ for all $x, y \in V$\;
Compute the ordered list $\cL^*$ using \Cref{lem:lstar}\;
\ForEach{$(x_1, y_1, x_2, y_2)$ in the ordered list $\cL^*$}{
  %
  \lIf{$x_1=y_1$ \textup{and} $x_2=y_2$}{Set $\barFd(x_1,y_1, x_2, y_2)=1$ and \textbf{continue} for-loop}
  \ForEach{$v \in V \setminus (\{x_1,y_1\}\cap\{x_2,y_2\})$}{
    \uIf{$v\in V(x_1, y_1) \cap V(x_2, y_2)$}{Compute $\barD_v(x_1,y_1,x_2,y_2)$ as
    $$\hspace{-2.5mm}
    \barF(x_2,v) \cdot \barFd(x_1, v, v, y_2) \cdot \barF(v, y_1)~~~
    - 
    \hspace{-2mm}
    \sum_{\substack{(a,v) ~\in \\ E(x_1,y_1),\\  E(x_2,y_2)}}
    \hspace{-1mm}
    z_{av}^2 \cdot F(x_2,a) \cdot \barFd(x_1,a, v, y_2) \cdot \barF(v, y_1);
    $$
    \vspace{-4mm} \label{l:Dv}
    }
    \Else{Set $\barD_v(x_1,y_1,x_2,y_2)=0;$}
  }
  Compute
  $$
  \barFd(x_1, y_1, x_2, y_2) = \barF(x_1, y_1) \cdot \barF(x_2, y_2)
  -
  \sum_{\substack{v \in V \setminus (\{x_1,y_1\}\cap\{x_2,y_2\})}} \barD_v(x_1, y_1, x_2, y_2);
  $$
}
{\bf Return} True if $\barFd(s_1, t_1, s_2, t_2) \neq 0$, else False\;
\caption{\textsc{2-DSP-Directed}($G, s_1, t_1, s_2, t_2$)}
\label{algo:2-DSP-directed}
\end{algorithm}


\subsection{An $O(mn \log n)$ time algorithm for 2-DSP}
We now present an improved algorithm for solving the 2-DSP problem in $O(mn \log n)$ time. Recall from the previous subsection that Algorithm~\ref{algo:2-DSP-directed} computes $F_{\disj}(s_1, t_1, s_2, t_2)$ by evaluating $F_{\disj}$ for all quadruples in a list $\cL$ of size $O(m)$, where each evaluation incurs $O(m)$ time. Note that we can afford $O(m)$ time computation for evaluating $\barFd(s_1, v, v, t_2)$ for each $v$ --  since there are only $O(n)$ such vertices, this would incur a total of $O(mn)$ time. Therefore, our focus will be on improving the computation of $\barFd(s_1, a, v, t_2)$, where  $(a,v)$ is an edge in $G$.




Consider a vertex $v \in V(s_1, t_1) \cap V(s_2, t_2)$. We show how to compute $\barFd(s_1,a,v,t_2)$, for all relevant edges $(a,v) \in \inedges(v)$  in  $O(m)$ total time.

\begin{figure}[!ht]
    \centering
    \includegraphics[width=0.9\textwidth]{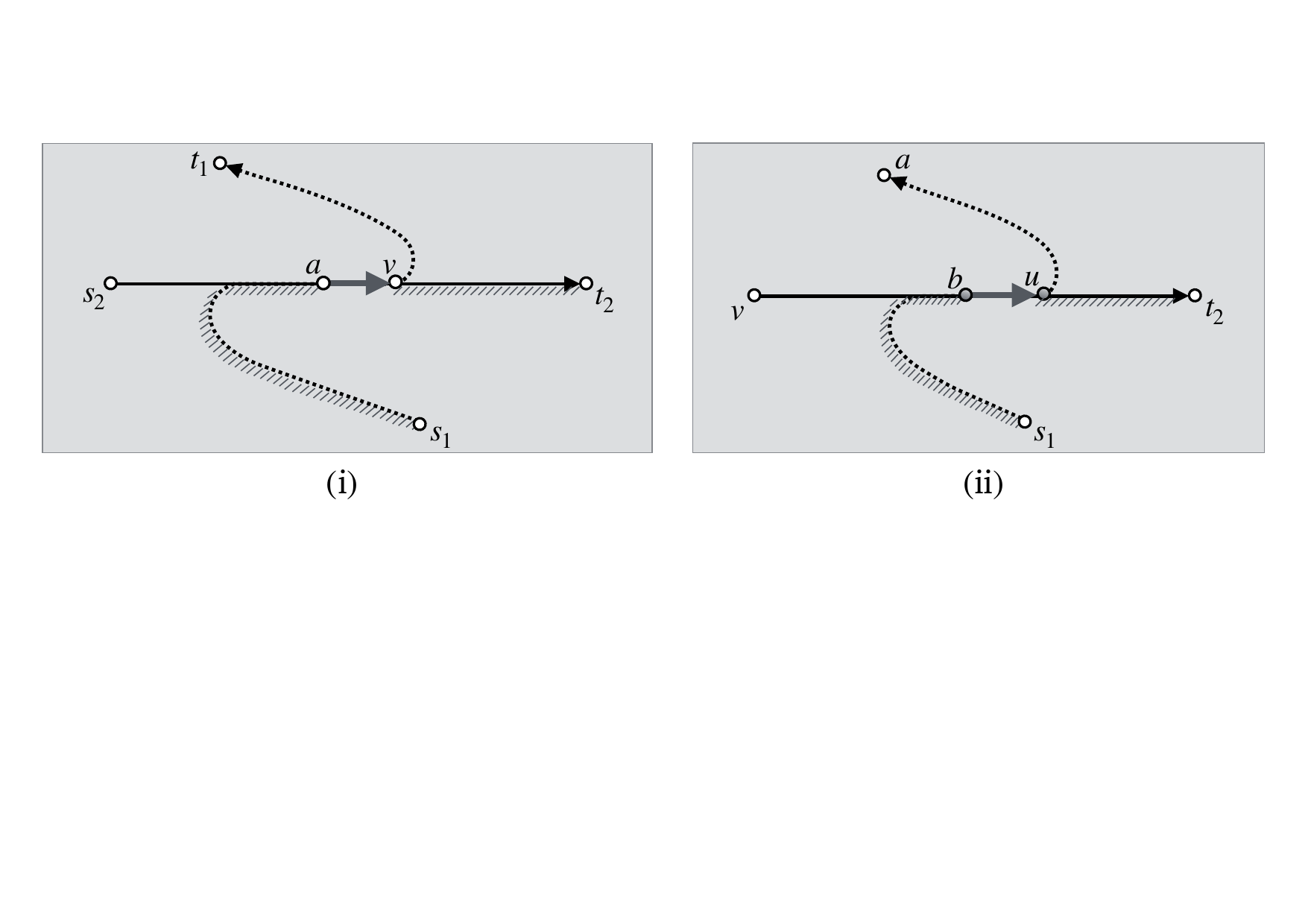}
    \caption{Illustration of interaction between shortest paths contributing to the polynomials (i)~$H_{av}(s_1,t_1,s_2,t_2)$, and (ii) its building block $H_{bu}(s_1, a, v, t_2)$.
    (Note that only the shaded prefix/suffix segments need to be vertex-disjoint.)}
    \label{figure:algo-2DSP}
\end{figure}

Let $a \in \IN(v)$ be such that $(a,v) \in E(s_1, t_1) \cap E(s_2, t_2)$. Now,~\Cref{lem:F-disj} shows that computation of $\barFd(s_1, a, v, t_2)$ requires 
us to evaluate $\barD_u(s_1,a,v,t_2)$ for each $u \in V(s_1, a) \cap V(v, t_2)$ (see  \Cref{figure:algo-2DSP}). Now, line~\ref{l:Dv} in Algorithm~\ref{algo:2-DSP-directed} shows that 

\begin{align*}
\barD_u(s_1,a,v,t_2) & = \barF(v,u) \cdot \barFd(s_1, u, u, t_2) \cdot \barF(u, a) -  \sum_{\substack{(b,u) ~\in \\ E(s_1,a),\\  E(v,t_2)}}
    z_{bu}^2 \cdot \barF(v,b) \cdot \barFd(s_1,b, u, t_2) \cdot \barF(u, a) \\[4mm]
& = \barF(v,u) \cdot \barFd(s_1, u, u, t_2) \cdot \barF(u, a) 
    -  
    \underbrace{
    \bigg(
    \sum_{\substack{(b,u) ~\in \\ E(s_1,\pmb{u}),\\  E(v,t_2)}}
    z_{bu}^2 \cdot \barF(v,b) \cdot \barFd(s_1,b, u, t_2) 
    \bigg)}_{\text{independent\, of choice of edge } (a,v)}
    \cdot \barF(u,a)~.\\
\end{align*}

The last inequality holds because $\inedges(u)\cap E(s_1,a)=\inedges(u)\cap E(s_1,u)$, for any $u\in V(s_1,a)$.
Specifically, 
(i) if $(b, u) \in E(s_1, a)$, then $(b, u) \in E(s_1, u)$ by the substructure property of shortest paths;
(ii) conversely, if $(b, u) \in E(s_1, u)$ and $u$ lies on an $(s_1, a)$-shortest path $P$, then $(b, u) \in E(s_1, a)$, since the prefix of $P$ up to $u$ can be replaced by an $(s_1, u)$-shortest path passing through $(b, u)$.
We emphasize that the sum in the parenthesis above depends only on $u$ and $v$, and not on the specific choice of $a \in \IN(v)$. Hence, we can pre-evaluate this expression once per $u$ in $O(\indeg(u))$ time and reuse it for all relevant in-neighbors $a$ of $v$. Since $\sum_{u} \indeg(u) = O(m)$, the total time across all such computations for a fixed vertex $v$ is $O(m)$.

Summing over all $v \in V(s_1, t_1) \cap V(s_2, t_2)$ gives a total time of $O(mn)$ to evaluate all required instances of $F_{\disj}$.

Finally, using \Cref{lem:Fxy}, we can evaluate all $\barF(x, y)$ values in $O(mn \log n)$ time. Hence, the overall running time of our improved algorithm for detecting two disjoint shortest paths becomes $O(mn \log n)$. Thus, we can conclude with the following theorem.

\begin{theorem}
\label{thm:twoDSPdirected}
The \textsf{2-DSP} problem in directed weighted graphs having no cycles of negative or zero weight is solvable in $O(mn\log n)$ time. 
\end{theorem}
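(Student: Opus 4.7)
The plan is to refine the $O(m^2 \log n)$ algorithm of the preceding subsection by amortizing work across in-edges that share a common head $v$. Of the $O(m)$ quadruples in $\cL$, the $O(n)$ of the form $(s_1, v, v, t_2)$ already contribute only $O(mn)$, so the true bottleneck is the $O(m)$ quadruples $(s_1, a, v, t_2)$ indexed by edges $(a, v)$. The target is to evaluate $\barFd(s_1, a, v, t_2)$ simultaneously for all relevant $a \in \IN(v)$ at a cost that sums to $O(m)$ per $v$ for the shared precomputation and $O(n)$ per edge for the assembly.

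Fix $v \in V(s_1, t_1) \cap V(s_2, t_2)$. Unfolding $\barFd(s_1, a, v, t_2)$ via \Cref{lem:F-disj}, \Cref{lem:Dv}, \Cref{lem:Hv}, and \Cref{lem:Hav}, each summand $\barD_u(s_1, a, v, t_2)$ factors as
$$
\barD_u(s_1, a, v, t_2) = \barF(u, a) \cdot \Bigl(\barF(v, u)\,\barFd(s_1, u, u, t_2) \;-\!\!\! \sum_{(b,u) \in E(s_1, a) \cap E(v, t_2)} z_{bu}^2 \,\barF(v, b)\,\barFd(s_1, b, u, t_2)\Bigr).
$$
The crucial observation is the substructure identity $\inedges(u) \cap E(s_1, a) = \inedges(u) \cap E(s_1, u)$ valid whenever $u \in V(s_1, a)$: one direction is immediate from the shortest-path substructure, and the reverse from splicing an $(s_1, u)$-shortest path ending in a desired in-edge $(b, u)$ onto the $(u, a)$-suffix of any $(s_1, a)$-shortest path through $u$. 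Hence the bracketed factor depends only on $u$ and $v$; denote it $T(u, v)$.

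Algorithmically, for each $v$, first precompute $T(u, v)$ for every $u \in V(v, t_2)$ in $O(\indeg(u))$ time each, reading off already-available values of $\barF(v,\cdot)$ and $\barFd(s_1, \cdot, u, t_2)$ (the latter coming from earlier entries of $\cL^*$); this precomputation sums to $O(m)$ per $v$. Then for each relevant in-edge $(a, v)$, assemble $\barFd(s_1, a, v, t_2) = \barF(s_1, a)\barF(v, t_2) - \sum_u T(u, v)\,\barF(u, a)$ by iterating over $u \in V(s_1, a) \cap V(v, t_2)$, at $O(n)$ cost per edge. Summed over all edges of $G$ the assembly is $O(mn)$, and the $v$-indexed precomputations also sum to $O(mn)$. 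Combined with the $O(mn \log n)$ preprocessing of $\barF(x, y)$ from \Cref{lem:Fxy}, the $O(mn)$ construction of $\cL^*$ from \Cref{lem:lstar}, and a Schwartz--Zippel (\Cref{lem:SZ}) evaluation over a field of size $n^{\Theta(1)}$ for high-probability correctness, this yields the desired $O(mn \log n)$ bound.

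The main obstacle is the substructure identity that collapses the inner sum: its $a$-independence is precisely what turns the naive $O(\indeg(v) \cdot m)$ cost per $v$ into the amortized $O(m)$ for the precomputation phase. Once this identity is in place, the rest is careful bookkeeping --- verifying that the topological ordering given by $\cL^*$ makes every recursive subproblem available at the moment it is needed, and that the randomized polynomial identity test succeeds with high probability after one round of amplification over $\mathbb{F} = \mathrm{GF}(2^{O(\log n)})$.
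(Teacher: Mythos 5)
Your proposal is correct and follows essentially the same approach as the paper: the same amortization across in-edges sharing a head $v$, the same factorization of $\barD_u(s_1,a,v,t_2)$ into $\barF(u,a)$ times an $a$-independent factor $T(u,v)$, and the same substructure identity $\inedges(u)\cap E(s_1,a)=\inedges(u)\cap E(s_1,u)$ that makes this factorization possible. The accounting ($O(m)$ per $v$ for precomputing the $T(u,v)$ values, $O(n)$ per edge for assembly, both summing to $O(mn)$, plus $O(mn\log n)$ preprocessing) matches the paper's analysis.
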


\subsection{Reporting the Paths}
\label{sec:reportingpaths2DSP}
We now show how to compute a pair of internally vertex-disjoint shortest paths from $s_1$ to $t_1$ and from $s_2$ to $t_2$ in $O(mn^{2}\log n)$-time, if such paths exist.

Recall that the proof of \Cref{thm:twoDSPdirected} provides an $O(mn\log n)$ time algorithm for solving the $2$-DSP problem for a collection of $O(m)$ quadruples lying in a list $\cL^*$. In order to report the paths, we show how to compute an out-neighbor $w^*$ of $s_2$ that appears in some solution to the $2$-DSP problem in the same time bound.
Recall that the list $\cL^*$ comprises of (i)~tuples $(s_1, v, v, t_2)$, for each $v \in V$;
(ii)~tuples $(s_1, a, v, t_2)$, for each  $(a,v) \in E$; and (iii)~the query tuple $(s_1, t_1, s_2, t_2)$.

To compute the out-neighbor $w^*$  of $s_2$, we remove the tuple $(s_1, t_1, s_2, t_2)$ from list $\cL^*$, and append the tuples $(s_1, t_1, w, t_2)$, for $w \in \OUT(s_2)\cap V(s_2, t_2)$.
%
%
Now, we can solve the $2$-DSP problem for all pairs in $\cL^*$ in the graph $G\setminus \{s_2\}$, in the same time bound of $O(mn\log n)$.
By Schwartz-Zippel lemma (\Cref{lem:SZ}), with high probability, the out-neighbors of $s_2$ that participates in solution to the $2$-DSP problem are precisely those $w^* \in \OUT(s_2)\cap V(s_2, t_2)$ for which $\barF_{\disj}(s_1, t_1, w^*, t_2)$ evaluates to non-zero in the graph $G\setminus \{s_2\}$. 
After finding the node $w^*$, we {\bf delete} vertex $s_2$ from $G$, and consider a smaller instance of the $2$-DSP problem, where source $s_2$ is replaced with $w^*$.
This process is repeated at most $O(n)$ times, helping us recover an $(s_2, t_2)$-shortest path $P_2$, disjoint from some $(s_1, t_1)$-shortest path in $G$. 

Finally, we delete the vertices of $P_2$ from $G$, and find an $(s_1, t_1)$-shortest path $P_1$ in the resulting graph in $O(mn)$ time using the Bellman-Ford algorithm.
This provides us an algorithm for reporting a pair of disjoint $(s_1,t_1)$ and $(s_2,t_2)$ shortest paths, if they exists, in $O(mn^2 \log n)$ time.

\Cref{thm:twoDSPdirected} together with discussion above proves the following result.

\twoDSPdirected*

\section{Min-2-DSP in Directed graphs}
\label{sec:Min-2DSP-directed}

In this section, we consider the  \textsf{Min-2-DSP} problem in directed graphs with positive edge weights and show that it can be  solved in $O(m^2n^3)$ time.
For any vertex pairs $(x_1, y_1)$ and $(x_2, y_2)$, and any integer $k$, let
$$
F_{\disj,k}(x_1, y_1, x_2, y_2)
$$
be the enumerating polynomial of all pairs of shortest paths $P_1$ from $x_1$ to $y_1$ and $P_2$ from $x_2$ to $y_2$ such that the \emph{interaction complexity}
$$
\gamma(P_1, P_2) = \sum_{(v, w) \in \C(P_1, P_2)} \delta(v,w)
$$
is at most $k$. In order to solve Min-2-DSP it suffices to determine the smallest integer~$k$ for which polynomial $F_{\disj,k}(s_1, t_1, s_2, t_2)$ is non-zero, where $(s_1,t_1)$ and $(s_2,t_2)$ are input source-destination pairs. 

\smallskip

For any vertex pairs $(x_1, y_1)$ and $(x_2, y_2)$, any integer $k$, and vertices $v,w\in V$, define $$D_{v,w,k}(x_1, y_1, x_2, y_2)$$ to be the enumerating polynomial of all pairs $(P_1,P_2) \in \Pi(x_1, y_1) \times \Pi(x_2, y_2)$, having interaction complexity $k$ and such that $(v, w)$ is the \emph{first (with respect to $P_1$)} concordant pair in $\C(P_1, P_2)$.


\smallskip
The following lemma provides relation between polynomials $F_{\disj,k}$ and $D_{v,w,k}$.

\begin{lemma}
\label{lem:min2dsp-id0}
Let   $(x_1, y_1)$ and $(x_2, y_2)$ be pairs satisfying the property that  any pair of paths $(P_1, P_2) \in \Pi(x_1, y_1) \times \Pi(x_2, y_2)$ intersect internally. Then, for any  
integer $k\geqslant 1$, we have
$$
F_{\disj,k}(x_1,y_1,x_2,y_2)
=
\sum_{\substack{v,w \in V
}}
D_{v,w,k}(x_1, y_1, x_2, y_2)
$$
\end{lemma}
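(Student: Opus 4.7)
The plan is to prove the identity by a partition argument: every pair $(P_1,P_2) \in \Pi(x_1,y_1) \times \Pi(x_2,y_2)$ counted by $F_{\disj,k}$ contributes to exactly one term on the right-hand side, indexed by its first concordant pair along $P_1$.

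I would first argue that under the hypothesis, $\C(P_1,P_2) \neq \emptyset$ for every relevant pair. Since $P_1$ and $P_2$ intersect internally, they share a vertex $u \notin \{x_1,y_1\} \cap \{x_2,y_2\}$. Taking $u$ to be the first common vertex along $P_1$ makes $P_1[x_1,u]$ and $P_2[x_2,u]$ internally vertex-disjoint, and symmetrically the last common vertex $u'$ along $P_1$ yields the suffix disjointness condition. A short additional argument, leveraging \Cref{lem:order-intersect} applied to sub-paths with a common destination (to control the order of common vertices on $P_2$), then confirms that $(u,u')$ --- or $(u,u)$ in the degenerate case of a unique intersection --- satisfies the ordering requirement along $P_2$ and is therefore a valid concordant pair.

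With $\C(P_1,P_2)$ non-empty, \Cref{lemma:concordant-pairs-1} (item~1, vertex-disjointness) implies that its elements correspond to pairwise vertex-disjoint sub-paths of $P_1$, and hence are linearly ordered along $P_1$. The first concordant pair $(v^{*},w^{*})$ is therefore uniquely defined, giving a well-defined assignment $\Phi \colon (P_1,P_2) \mapsto (v^{*},w^{*})$ on the pairs counted by $F_{\disj,k}$.

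Finally, $\Phi$ partitions these pairs according to their image in $V \times V$, and by the very definition of $D_{v,w,k}$ the fibre $\Phi^{-1}(v,w)$ contributes exactly $D_{v,w,k}(x_1,y_1,x_2,y_2)$ to the sum; summing over all $(v,w)$ yields the claimed identity. The main obstacle I anticipate is the first step: establishing non-emptiness of $\C(P_1,P_2)$ in full generality. This is delicate in the directed setting because the relative order of common vertices on $P_1$ and $P_2$ need not coincide, and one must use the hypothesis together with shortest-path structure (in particular the absence of zero-weight cycles) to argue that a valid concordant pair always exists.
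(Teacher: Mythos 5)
Your overall strategy is exactly the paper's: assign to each pair $(P_1,P_2)$ counted by $F_{\disj,k}$ its first concordant pair along $P_1$, note that this is well defined once $\C(P_1,P_2)\neq\emptyset$ (and once the concordant subpaths on $P_1$ are shown to be pairwise disjoint via \Cref{lemma:concordant-pairs-1}), and observe that the fibre over $(v,w)$ is enumerated by $D_{v,w,k}$. You are also right to single out nonemptiness of $\C(P_1,P_2)$ as the only nontrivial point — the paper in fact leaves it implicit.

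However, the specific argument you sketch for nonemptiness is incorrect. Taking $u$ to be the first and $u'$ the last common vertex of $P_1,P_2$ along $P_1$ does \emph{not} in general give a concordant pair, because $u$ need not precede $u'$ on $P_2$, and the degenerate fallback $(u,u)$ does not rescue it. Consider the graph with unit-weight edges $(s_1,u),(u,z),(z,u'),(u',t_1),(s_2,u'),(u',u),(z,t_2)$. The unique shortest paths are $P_1 = s_1,u,z,u',t_1$ and $P_2 = s_2,u',u,z,t_2$, which meet at $\{u,z,u'\}$ — in order $u,z,u'$ on $P_1$ but $u',u,z$ on $P_2$. Here $(u,u')$ violates the ordering requirement, and $(u,u)$ is not concordant since $z$ is a common internal vertex of $P_1[u,t_1]$ and $P_2[u,t_2]$; the actual concordant pairs are $(u,z)$ and $(u',u')$. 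Your appeal to \Cref{lem:order-intersect} is also misplaced — that lemma requires the two shortest paths to share a common destination, which is not the case for $P_1 \in \Pi(x_1,y_1)$ and $P_2 \in \Pi(x_2,y_2)$.

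A correct nonemptiness argument is only a small repair away: take $u$ as the first common vertex on $P_1$, which already makes $P_1[x_1,u]$ and $P_2[x_2,u]$ internally vertex-disjoint. Among the common vertices $z$ satisfying $u \preceq_{P_2} z$ (a nonempty set, since it contains $u$), let $w$ be the one appearing last on $P_1$. Any common vertex of $P_1[w,y_1]$ and $P_2[w,y_2]$ other than $w$ would lie strictly after $w$ on $P_1$ yet still at or after $u$ on $P_2$, contradicting the choice of $w$; hence $P_1[w,y_1]$ and $P_2[w,y_2]$ are internally vertex-disjoint and $(u,w) \in \C(P_1,P_2)$. With that fix, your partition argument goes through.
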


\begin{proof}
Let $(P_1, P_2) \in \Pi(x_1, y_1) \times \Pi(x_2, y_2)$ be a pair of paths with interaction complexity $k$. Furthermore, let $(v, w) \in \C(P_1, P_2)$ be the first concordant pair appearing on $P_1$. Then, the enumerating polynomial for such pairs is given by $D_{v, w, k}(x_1, y_1, x_2, y_2)$. To obtain $F_{\disj, k}(x_1, y_1, x_2, y_2)$, we sum over all choices of concordant pairs.
This proves the desired claim.
\end{proof}


To efficiently determine whether $D_{v,w,k}(x_1, y_1, x_2, y_2)$ is a non-zero polynomial, we define next some helper polynomials.

\subsection{Additional Helper Polynomials}
For any vertex pairs $(x_1, y_1)$ and $(x_2, y_2)$, any integer $k$, and any $v,w\in V$, define

\[
H_{v,w,k}(x_1, y_1, x_2, y_2)
\]
to be the enumerating polynomial  for all pairs of shortest paths $(P_1, P_2)$ between $(x_1, y_1)$ and $(x_2, y_2)$ such that:
\begin{enumerate}
\item the subpaths $P_1[x_1, v]$ and $P_2[w, y_2]$ are internally vertex-disjoint; and
\item the interaction complexity of $P_1[v, y_1]$ and $P_2[x_2, w]$ is at most $k$.
\end{enumerate}

Similarly, we define the polynomials $H_{av,w,k}$, $H_{v,wb,k}$, and $H_{av,wb,k}$, which satisfy the above constraints with the additional requirement that the paths $P_1, P_2$ must contain the edges $(a,v)$, $(w,b)$, and both $(a,v)$ and $(w,b)$, respectively. 

\smallskip

Note that in the definitions of $H_{v,w,k}$, $H_{av,w,k}$, $H_{v,wb,k}$, and $H_{av,wb,k}$ any in-neighbor of $v$ or out-neighbor of $w$ that appears in both $P_1$ and $P_2$ is not counted in the overlap budget $k$ as the overlap constraint in condition two applies exclusively to the number of common vertices between the suffix of $P_1$ starting from $v$ and prefix of $P_2$ up to $w$.

\begin{figure}[!ht]
\centering
\includegraphics[width=0.9\textwidth]{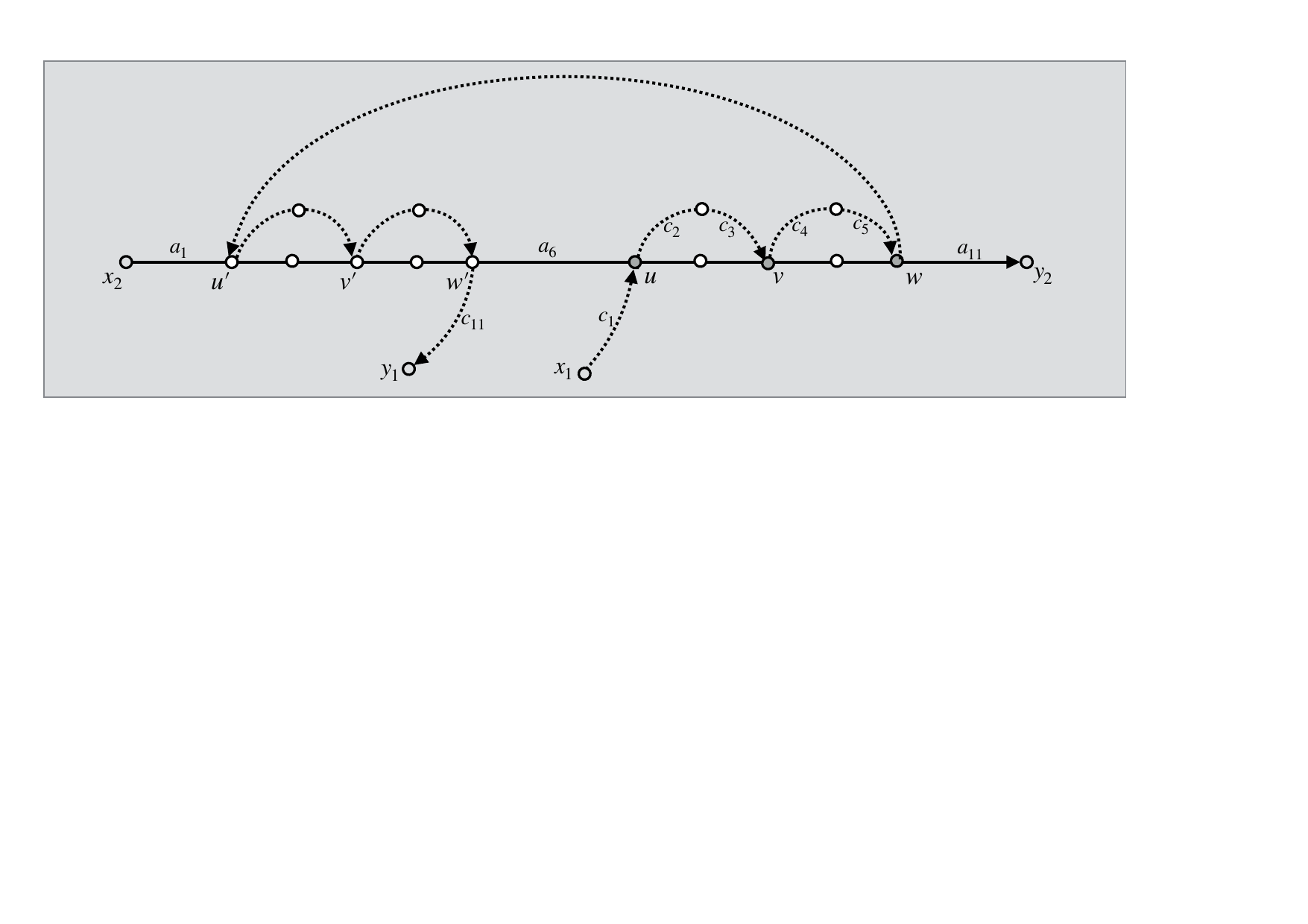}
\caption{
The directed graph above shows multiple shortest paths between $(x_1, y_1)$ (eg. dashed edges $c_1,\ldots,c_{11}$) and $(x_2, y_2)$ (eg. solid edges $a_1,\ldots,a_{11}$);
For any choice of $(x_1,y_1)$ and $(x_2,y_2)$ shortest paths $(P_1, P_2)$, we have $\C(P_1,P_2)=\{(u,w),(u',w')\}$, thereby implying
$F_{\disj,6}(x_1, y_1, x_2, y_2)/(F(u,w)^2\cdot F(u',w')^2) = z_{a_1}z_{a_6}z_{a_{11}}\,z_{c_1}z_{c_6}z_{c_{11}}$.
}
\label{figure:Min-2DSP}
\end{figure}

\subsection{Properties of Helper Polynomials}

We first present a relation between $D_{v,w,k}$ and polynomials $H_{v,w,k}$, $H_{av,w,k}$, $H_{v,wb,k}$, and $H_{av,wb,k}$.

\begin{lemma}
\label{lem:Dvwk}

For any quadruple $\sigma=(x_1,y_1,x_2,y_2)\in V^4$, integer $k\geqslant 1$, and any $v,w\in V$ such that $v$ precedes $w$ on some $(x_1,y_1)$ and $(x_2,y_2)$ shortest paths, the following holds:

$$
D_{v,w,k}(\sigma)
~=~
H_{v,w,k}(\sigma) 
- \hspace{-2mm} \sum_{(a,v) \in E} H_{av,w,k}(\sigma) 
- \hspace{-2mm} \sum_{(w,b) \in E} H_{v,wb,k}(\sigma) 
+ \hspace{-2mm}\sum_{\substack{(a,v), (w,b)\\ \in E}} H_{av,wb,k}(\sigma)
$$
\end{lemma}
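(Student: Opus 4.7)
My plan is to mirror the strategy of \Cref{lem:Dv} from the $2$-DSP setting, extended to handle the two anchor vertices $v$ and $w$ simultaneously. First, I interpret the RHS via inclusion--exclusion. For each pair $(P_1, P_2) \in H_{v,w,k}$, write $(a_i, v)$ for the in-edge of $P_i$ at $v$ and $(w, b_i)$ for the out-edge of $P_i$ at $w$. An inclusion--exclusion over the events ``$a_1 = a_2$'' and ``$b_1 = b_2$'' collapses the RHS to
\[
\sum_{(P_1,P_2) \in H^*_{v,w,k}} f(P_1)\,f(P_2),
\]
where $H^*_{v,w,k} \subseteq H_{v,w,k}$ is the subset of pairs satisfying $a_1 \ne a_2$ and $b_1 \ne b_2$. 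It therefore suffices to show that $H^*_{v,w,k}$ and $D_{v,w,k}$ agree as multisets of monomials modulo characteristic-$2$ cancellation.

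\textbf{Containment $D_{v,w,k} \subseteq H^*_{v,w,k}$.} For $(P_1, P_2) \in D_{v,w,k}$, concordance of $(v,w)$ yields internal disjointness of $P_1[x_1,v]$ and $P_2[x_2,v]$ (hence $a_1 \ne a_2$) and of $P_1[w,y_1]$ and $P_2[w,y_2]$ (hence $b_1 \ne b_2$). Since $(v,w)$ is the \emph{first} concordant pair on $P_1$, and since every common vertex of $(P_1, P_2)$ lies in some concordant interval (derivable from \Cref{lemma:concordant-pairs-1}, allowing the degenerate case $a=b$), no common vertex of $(P_1, P_2)$ can precede $v$ on $P_1$. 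This delivers internal disjointness of $P_1[x_1,v]$ and $P_2[w,y_2]$; symmetrically, no common vertex follows $w$ on $P_2$. Consequently, the concordant pairs of $(P_1,P_2)$ restrict bijectively to those of the sub-pair $(P_1[v,y_1], P_2[x_2,w])$, with identical $\delta$-values, so the sub-pair's interaction complexity equals $\gamma(P_1,P_2) = k$. Hence $(P_1,P_2) \in H^*_{v,w,k}$.

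\textbf{Cancellation of $H^*_{v,w,k} \setminus D_{v,w,k}$.} I build a fixed-point-free, monomial-preserving involution $\iota$ on the residual set, whence its total contribution vanishes over a field of characteristic~$2$. A pair in this residual set must exhibit one of the following defects: (i)~an internal common vertex in $P_1[x_1,v) \cap P_2[x_2,v)$; (ii)~an internal common vertex in $P_1(w,y_1] \cap P_2(w,y_2]$; or (iii)~$(v,w)$ is concordant but not first on $P_1$, or $\gamma(P_1,P_2) \ne k$. For defect~(i), I pick $u$ to be the latest common vertex in $P_1[x_1,v) \cap P_2[x_2,v)$, well-defined by \Cref{lem:order-intersect} since both subpaths are shortest paths ending at~$v$. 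The constraint $a_1 \ne a_2$ forces $(u,v)$ to be a twin-crossing pair, so the swap $\phi_{u,v}$ produces a distinct pair $(Q_1,Q_2)$ of the same monomial; it is still in $H^*_{v,w,k}$, since the outer disjointness $P_1[x_1,v]\cap P_2[w,y_2]$ involves only segments left untouched by the swap, and $\gamma$ of the inner sub-pair is preserved by twin-crossing swaps (\Cref{cl:crossing}). The latest-vertex rule re-identifies $u$ from $(Q_1,Q_2)$, giving $\iota \circ \iota = \mathrm{id}$. Defects~(ii) and~(iii) are handled by symmetric swaps at $w$ or at an earlier twin-crossing/concordant pair, and a fixed canonical priority among defect types (e.g.\ (i) before (ii) before (iii)) makes $\iota$ globally well-defined.

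\textbf{Main obstacle.} The crux is verifying that $\iota$ is well-defined on all of $H^*_{v,w,k} \setminus D_{v,w,k}$ and never escapes it. One must check that each canonical swap preserves the distinct-edge conditions at \emph{both} $v$ and $w$ simultaneously, keeps both clauses of the $H_{v,w,k}$ definition, and does not accidentally produce a pair in $D_{v,w,k}$ (which would break the pairing). Unlike the single-anchor $2$-DSP case, the two-sided anchor structure here, combined with the interaction-complexity bookkeeping on the inner sub-pair, demands a careful case analysis to resolve simultaneous defects at $v$ and $w$; this is the technical core of the argument.
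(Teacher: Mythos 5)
Your proposal follows the same high-level strategy as the paper's proof: use inclusion--exclusion to collapse the right-hand side to the enumerating polynomial of a set $\A$ ($= H^*_{v,w,k}$ in your notation), show $\B \subseteq \A$ where $\B$ is the support of $D_{v,w,k}$, and pair off $\A\setminus\B$ via a monomial-preserving, fixed-point-free involution built from twin-crossing swaps near $v$ or $w$. However, a few steps are either invoked incorrectly or left at a level of vagueness where the argument could fail.

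First, your assertion that ``$\gamma$ of the inner sub-pair is preserved by twin-crossing swaps (\Cref{cl:crossing})'' is not supported by that claim: \Cref{cl:crossing} only says $\phi_{a,b}$ maps pairs of shortest paths to pairs of shortest paths and is an involution; it says nothing about interaction complexity. To keep $(Q_1,Q_2)\in\A$ you genuinely need that the swap at $(u,v)$ leaves the common-vertex structure of $(P_1[v,y_1],\,P_2[x_2,w])$ unchanged, which in turn requires showing there is no common vertex of $P_1,P_2$ strictly between $u$ and $v$ on $P_2$ (this follows from $u$ being the immediate common predecessor of $v$ on \emph{both} paths, using \Cref{lem:order-intersect} and \Cref{lemma:concordant-pairs-0}, but you do not establish it). Second, your ``defect (iii)'' should never arise: the paper's Case~1 shows directly that a pair in $\A$ with no defect at $v$ and no defect at $w$ already has $(v,w)$ as its first concordant pair on $P_1$ and has interaction complexity at most $k$, hence lies in $\B$. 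Leaving (iii) in the list with the hand-wave ``symmetric swaps at $w$ or at an earlier twin-crossing/concordant pair'' is not just imprecise but problematic, since in the absence of defects (i) and (ii) there is no twin-crossing pair adjacent to $v$ or $w$ to swap at; the correct move is to rule (iii) out entirely. Finally, the statement that ``every common vertex of $(P_1,P_2)$ lies in some concordant interval'' is not derivable from \Cref{lemma:concordant-pairs-1}, which describes pairwise disjointness and order reversal of already-existing concordant pairs but says nothing about coverage of the common-vertex set; this gap propagates to your justification of $\B\subseteq\A$.
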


\begin{proof}
Let $\A$ be a collection of all pairs of shortest paths $(P_1, P_2)$ between $(x_1, y_1)$ and $(x_2, y_2)$ in which $v$ precedes $w$ such that:
\begin{enumerate}
\item the subpaths $P_1[x_1, v]$ and $P_2[w, y_2]$ are internally vertex-disjoint; 
\item the interaction complexity of $P_1[v, y_1]$ and $P_2[x_2, w]$ is at most $k$; and
\item the in-edges of $v$, as well as the out-edges of $w$, on $P_1$ and $P_2$ are distinct.
\end{enumerate}

Then, by inclusion-exclusion principle, it is easy to observe that the following polynomial 
\begin{equation}
H_{v,w,k}(\sigma) 
- \hspace{-2mm} \sum_{(a,v) \in E} H_{av,w,k}(\sigma) 
- \hspace{-2mm} \sum_{(w,b) \in E} H_{v,wb,k}(\sigma) 
+ \hspace{-2mm}\sum_{\substack{(a,v), (w,b)\\ \in E}} H_{av,wb,k}(\sigma)
\label{eq:Hvwk}
\end{equation}
is enumerating polynomial for $\A$.

\smallskip

Let $\B$ be the set comprising of all pairs of paths $(P_1, P_2)$ contributing to $D_{v,w,k}(\sigma)$, that is, all pairs of shortest paths $(P_1,P_2)$ between $(x_1, y_1)$ and $(x_2, y_2)$ having a interaction complexity of $k$ such that $(v,w)\in \C(P_1,P_2)$ appears before all other concordant pairs on path $P_1$.

\smallskip

In order to prove the lemma it suffices to show that the enumerating polynomial for $\A$ is identical to the enumerating polynomial for $\B$.

\smallskip

We first show that $\B \subseteq \A$. Indeed, if $(P_1, P_2) \in \B$, then $P_1[x_1, v]$ is internally vertex-disjoint from both $P_2[x_2, v]$ and $P_2[v, y_2]$, and thus the in-edges of $v$ in $P_1$ and $P_2$ must be distinct. Similarly, $P_2[w, y_2]$ is internally vertex-disjoint from both $P_1[x_1, w]$ and $P_1[w, y_1]$, and thus the out-edges of $w$ in $P_1$ and $P_2$ must be distinct. Further, interaction complexity of $P_1,P_2$ is bounded by $k$. This shows that all three conditions needed for pairs in $\A$ holds, and thus $\B \subseteq \A$.

\smallskip

Now consider a pair $(P_1,P_2)\in \A$. We study the following two cases.

\begin{description}
\item[{\bf Case 1}] 
No predecessor of $v$ on $P_1$ lies on $P_2$, and
no successor of $w$ on $P_2$ lies on $P_1$.

\smallskip

In this case, $(v,w)$ must be a concordant pair for $P_1,P_2$,  and in fact it will be the first concordant pair in $\C(P_1,P_2)$ that appears on the path $P_1$.  
Further, since interaction complexity for $P_1$ and $P_2$ will be at most $k$, we have $(P_1,P_2)\in \B$. 
Thus, $(P_1,P_2)$ has the same contribution in $D_{v,w,k}(\sigma)$ as in the polynomial in \Cref{eq:Hvwk}.

\medskip

\item[{\bf Case 2}] 
A predecessor of $v$ on $P_1$ lies on $P_2$, and/or
a successor of $w$ on $P_2$ lies on $P_1$.

\smallskip

Without loss of generality, let us suppose that a predecessor of $v$ on $P_1$ also lies on $P_2$.  
Let $r$ denote the immediate predecessor of $v$ on $P_1$ lying on $P_2$.
Since $P_1[x_1,v]$ is internally disjoint from $P_2[w,y_2]$, it follows that $r \in P_2[x_2,w]$. This along with \Cref{lem:order-intersect}, implies that $r$ must be the immediate predecessor of $v$ on $P_2$.  
Hence, $(v,w)$ cannot be a concordant pair with respect to $(P_1,P_2)$, and therefore $(P_1,P_2) \in \A\setminus \B$.  

Moreover, $(r,v)$ must be a twin-crossing pair for $(P_1,P_2)$, since the in-edges of $v$ on $P_1$ and $P_2$ are distinct, which implies $P_1[r,v] \neq P_2[r,v]$.  
By the definition of $\A$ and $\B$, the pair
$(Q_1,Q_2) \;=\; \phi_{r,v}(P_1,P_2)$,
obtained by performing a swap at $(r,v)$, also lies in $\A\setminus \B$.  Further, it follows that $r$ must be the immediate predecessor of $v$ on $Q_1,Q_2$ and $(P_1, P_2) = \phi_{r,v}(Q_1, Q_2)$.



So, the set $\A \setminus \B$ can be partitioned into pairs 
\[
\{(P_1, P_2), (Q_1, Q_2)\},
\]
where $(Q_1, Q_2) = \phi_{r,v}(P_1, P_2)$ and $r$ is the immediate predecessor of $v$ on $P_1,P_2$ (as well as in $Q_1,Q_2$).  
Since $\phi_{r,v}$ preserves the monomial contribution, we see that each such pair contributes two identical monomials to the enumerating polynomial of $\A \setminus \B$. Therefore, the enumerating polynomial of $\A \setminus \B$ is zero in a field of characteristic two.
\end{description}

\smallskip

This proves the desired claim that the enumerating polynomial of $\A$ is identical to the enumerating polynomial of $\B$.
\end{proof}

Next we provide construction of other helper polynomials. 

\begin{lemma}
\label{lem:min2dsp-id1}
For any integer $k$, and any $v,w\in V$ such that 
$v$ precedes $w$ on some $(x_1,y_1)$ and $(x_2,y_2)$ shortest paths, 
the following holds:
\begin{align*}
H_{v,w,k}(x_1,y_1,x_2,y_2) 
= 
F_{\disj}(x_1,v,w,y_2)
\cdot 
F(v,w)^2
\cdot 
F_{\disj,k-\delta(v,w)}(w,y_1,x_2,v) 
\end{align*}
\end{lemma}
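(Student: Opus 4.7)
The plan is to match pairs $(P_1, P_2) \in H_{v,w,k}$ with tuples on the right-hand side via a natural decomposition at the markers $v$ and $w$, and to handle the off-diagonal configurations in the middle segment via a characteristic-two cancellation.

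The first step is to decompose each contributing pair as
$P_1 = Q_1 \cdot S_1 \cdot R_1$ and $P_2 = Q_2 \cdot S_2 \cdot R_2$,
where $Q_1 = P_1[x_1, v]$, $S_1 = P_1[v, w]$, $R_1 = P_1[w, y_1]$, and $Q_2, S_2, R_2$ are defined analogously from $P_2$. Since $G$ has no zero-weight cycles, every shortest path is simple, so
$f(P_1) f(P_2) = f(Q_1) f(R_2) \cdot f(S_1) f(S_2) \cdot f(R_1) f(Q_2)$.
The first product is governed by $F_{\disj}(x_1, v, w, y_2)$ (condition~(1) of $H_{v,w,k}$ gives exactly the internal vertex-disjointness of $(Q_1, R_2)$), the middle product by an $F(v, w)^2$ factor, and the final product corresponds to $F_{\disj, k - \delta(v,w)}(w, y_1, x_2, v)$ once the interaction-complexity constraint is reinterpreted.

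The second step analyzes $\gamma(P_1[v, y_1], P_2[x_2, w])$. The pair $(v, w)$ is always concordant for this subpath pair since the required prefix and suffix disjointness conditions collapse to trivialities at the subpath endpoints, contributing $\delta(v, w)$. By order-reversal (\Cref{lemma:concordant-pairs-1}), every other concordant pair $(a, b)$ must lie in $V(R_1) \cap V(Q_2)$. In the diagonal case $S_1 = S_2 = S$, the extra disjointness constraints introduced by padding with $S$ reduce to trivialities: $S$ is a subpath of the simple path $P_2$ between $v$ and $w$, hence vertex-disjoint from $Q_2[x_2, a] \subseteq P_2[x_2, v]$, and analogously lies on $P_1$ between $v$ and $w$ and is disjoint from $R_1[b, y_1] \subseteq P_1[w, y_1]$. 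This yields the identity $\gamma(P_1[v, y_1], P_2[x_2, w]) = \delta(v, w) + \gamma(R_1, Q_2)$ whenever $S_1 = S_2$, so the bound $\leqslant k$ becomes $\gamma(R_1, Q_2) \leqslant k - \delta(v, w)$, matching the final factor.

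The main obstacle is the off-diagonal case $S_1 \neq S_2$. Here $(v, w)$ is a twin-crossing pair for $(P_1, P_2)$, so \Cref{cl:crossing} yields a distinct partner pair $(P_1', P_2') := \phi_{v, w}(P_1, P_2)$ carrying the same monomial. Combined with the characteristic-two identity $F(v, w)^2 = \sum_{S \in \Pi(v, w)} f(S)^2$, which collapses the middle factor of the right-hand side onto the diagonal, the approach is to show that these off-diagonal contributions cancel in pairs via $\phi_{v, w}$. Concretely, after the swap the middle segment of $P_1'$ (resp.\ $P_2'$) coincides with the original $P_2[v,w]$ (resp.\ $P_1[v,w]$), so the ``extra'' disjointness constraints that govern the concordance conditions become trivial because each such segment now lies on the same simple path as the subpath against which it is tested. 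A careful accounting of concordant pairs on both sides of the swap then yields matching interaction complexities, placing both pairs in $H_{v,w,k}$ and making them cancel. Reconciling the syntactic asymmetry in the concordance conditions through this symmetric accounting is the delicate step that closes the argument.
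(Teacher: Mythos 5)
Your decomposition of $P_i$ at $v$ and $w$, the factoring of the monomial, and the observation that $(v,w)$ is always concordant for $(P_1[v,y_1], P_2[x_2,w])$ (contributing $\delta(v,w)$) are all correct and match the paper's approach in spirit. The gap is exactly where you flag it: you verify that the ``extra'' disjointness constraints vanish in the diagonal case $S_1 = S_2$ using simplicity of each $P_i$, but for the off-diagonal case you only establish that the constraints are trivial for the \emph{swapped} pair $(P_1', P_2')$, and then announce a ``delicate'' accounting step that you do not supply. Without showing that $\gamma(P_1[v,y_1],P_2[x_2,w])$ is unchanged by $\phi_{v,w}$, you cannot conclude that off-diagonal contributions pair up inside $H_{v,w,k}$ and cancel, so the proposal as written does not close.

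The missing ingredient is \Cref{lemma:concordant-pairs-0}, and once you invoke it you will find the swap cancellation machinery is not even needed. Applied to $(P_1,P_2)$ with the pair $(v,w)$, that lemma says any internal vertex of $P_1[v,w]$ that lies on $P_2$ must lie in $P_2[v,w]$; since $P_2$ is simple, such a vertex cannot lie in $P_2[x_2,a]$ when $a$ strictly precedes $v$. Combined with the fact that the endpoints $v,w$ of $S_1$ do not lie in $Q_2[x_2,a]$ either, this gives $V(S_1)\cap V(Q_2[x_2,a])=\emptyset$ for \emph{every} tuple, not just the diagonal ones. The symmetric application shows $V(S_2)\cap V(R_1[b,y_1])=\emptyset$ for every tuple. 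Hence the extra constraints are vacuous regardless of whether $S_1=S_2$, so $\gamma(P_1[v,y_1],P_2[x_2,w]) = \delta(v,w) + \gamma(R_1,Q_2)$ holds for every pair contributing to $H_{v,w,k}$, and the lemma is an exact tuple-by-tuple match between the two sides with no cancellation argument at all. (If you do insist on the swap route, the same application of \Cref{lemma:concordant-pairs-0} is precisely what shows $\phi_{v,w}$ preserves the interaction-complexity budget, which is the step you were missing.)
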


\begin{proof}
Consider a pair of shortest paths $( P_1, P_2 )$ contributing to $H_{v,w,k}(x_1,y_1,x_2,y_2)$. For $i=1,2$, decompose $P_i = Q_i \cdot L_i \cdot R_i$, where $Q_i = P_i[x_i, v]$, $L_i=P_i[v,w]$, and $R_i = P_i[w, y_i]$ are shortest paths.
By definition of polynomial $H_{v,w,k}$, the interaction complexity of $Q_2,R_1$ must be bounded by $k-\delta(v,w)$. The contributions of the paths $Q_2$ and $R_1$ is captured by $F_{\disj,k-\delta(v,w)}(w,y_1,x_2,v)$. 
Meanwhile, the contribution of all the internally disjoint path pairs $Q_1, R_2$ is given by the polynomial $F_{\disj}(x_1, v, w, y_2)$. 
Finally, observe that the contribution of $L_1,L_2$ is given by $F(v,w)^2$ since if $L_1\neq L_2$, then the pair $(P_1',P_2')=\phi_{v,w}(P_1,P_2)$ will have the same monomial contribution as that of $(P_1,P_2)$, which will cancel modulo $2$.
Multiplying these  components yields the desired expression.
\end{proof}

Similar to the above lemma, we can establish the following.

\smallskip

\begin{lemma}
\label{lem:min2dsp-id2}
For any integer $k$, and any $(a,v)\in E$, $w\in V$ such that $(a,v)$ precedes $w$ on some $(x_1,y_1)$ and $(x_2,y_2)$ shortest paths, the following holds:
\begin{align*}
H_{av,w,k}(x_1,y_1,x_2,y_2) = 
z_{av}^2 
\cdot 
F_{\disj}(x_1,a,w,y_2)
\cdot 
F(v,w)^2
\cdot
F_{\disj,k-\delta(v,w)}(w,y_1,x_2,a)
\end{align*}
\end{lemma}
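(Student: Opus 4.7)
The plan is to mirror the proof of Lemma~\ref{lem:min2dsp-id1}, with the only new ingredient being the forced shared edge $(a,v)$. Starting from any pair $(P_1,P_2)$ contributing to $H_{av,w,k}(x_1,y_1,x_2,y_2)$, I would decompose
\[
P_1 = P_1[x_1,a]\cdot (a,v)\cdot P_1[v,w]\cdot P_1[w,y_1], \qquad
P_2 = P_2[x_2,a]\cdot (a,v)\cdot P_2[v,w]\cdot P_2[w,y_2],
\]
so that the joint monomial $f(P_1)f(P_2)$ factors as $z_{av}^2$ times the contributions of $(P_1[x_1,a],P_2[w,y_2])$, of $(P_1[v,w],P_2[v,w])$, and of $(P_1[w,y_1],P_2[x_2,a])$.

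The first of these captures condition~(1) of $H_{av,w,k}$: because $P_2$ is a simple path and $a$ lies on $P_2[x_2,w]$, the vertex $a$ cannot reappear on $P_2[w,y_2]$, so internal disjointness of $P_1[x_1,v]$ and $P_2[w,y_2]$ is equivalent to internal disjointness of $P_1[x_1,a]$ and $P_2[w,y_2]$, viewed as shortest paths with endpoints $(x_1,a)$ and $(w,y_2)$; their enumeration is $F_{\disj}(x_1,a,w,y_2)$.

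The middle factor, built from the two $v$-to-$w$ shortest paths $L_1=P_1[v,w]$ and $L_2=P_2[v,w]$, is handled by exactly the characteristic-$2$ swap-at-$(v,w)$ cancellation used in Lemma~\ref{lem:min2dsp-id1}: pairs with $L_1\ne L_2$ are matched to themselves via the involution $(L_1,L_2)\leftrightarrow(L_2,L_1)$ and cancel, leaving only the diagonal whose net contribution is $\sum_L f(L)^2 = F(v,w)^2$ over a field of characteristic two.

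The third factor captures condition~(2). The step that I expect to require the most care is verifying that the concordant pairs of $(P_1[v,y_1],P_2[x_2,w])$ decompose additively as $\{(v,w)\}$ together with the concordant pairs of $(P_1[w,y_1],P_2[x_2,a])$; this follows from the order-reversal property in Lemma~\ref{lemma:concordant-pairs-1} combined with the disjointness already imposed by the outer prefix/suffix constraints. Granted this decomposition, Lemma~\ref{lemma:concordant-pairs-2} contributes exactly $\delta(v,w)$ shared distance-critical vertices from the $(v,w)$ concordant pair, so the residual interaction complexity of $(P_1[w,y_1],P_2[x_2,a])$ is at most $k-\delta(v,w)$, enumerated by $F_{\disj,k-\delta(v,w)}(w,y_1,x_2,a)$. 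Multiplying these three factors by $z_{av}^2$ then yields the stated identity.
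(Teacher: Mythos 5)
Your proposal is essentially the paper's intended argument: the paper gives no separate proof of Lemma~\ref{lem:min2dsp-id2}, deferring to ``similar to the above lemma,'' and you correctly mirror the decomposition and characteristic-two swap cancellation of Lemma~\ref{lem:min2dsp-id1}, pulling out the extra $z_{av}^2$ factor from the forced edge. One small slip: in justifying that internal disjointness of $P_1[x_1,v]$ with $P_2[w,y_2]$ matches internal disjointness of $P_1[x_1,a]$ with $P_2[w,y_2]$, the vertex whose absence from $P_2[w,y_2]$ matters is $v$ (the one dropped when shortening the prefix to $a$), not $a$; the fact you need is that $v$ lies on $P_2[x_2,w]$ and $P_2$ is simple, so $v$ cannot reappear later. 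Also, the $\delta(v,w)$ contribution of the concordant pair $(v,w)$ to the interaction complexity is definitional (from the definition of $\gamma$), so invoking Lemma~\ref{lemma:concordant-pairs-2} there is unnecessary and slightly off-target, since that lemma concerns actual vertex overlaps of optimal pairs rather than the $\delta$-weighted sum defining $\gamma$.
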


\smallskip

\begin{lemma}
\label{lem:min2dsp-id3}
For any integer $k$, and any $v\in V$, $(w,b)\in E$ such that $v$ precedes $(w,b)$ on some $(x_1,y_1)$ and $(x_2,y_2)$ shortest paths, the following holds:
\begin{align*}
H_{v,wb,k}(x_1,y_1,x_2,y_2) = 
z_{wb}^2 
\cdot 
F_{\disj}(x_1,v,b,y_2)
\cdot 
F(v,w)^2
\cdot
F_{\disj,k-\delta(v,w)}(b,y_1,x_2,v)
\end{align*}
\end{lemma}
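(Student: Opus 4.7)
The plan is to mirror the proofs of Lemmas~\ref{lem:min2dsp-id1} and~\ref{lem:min2dsp-id2}, adapted to handle a forced edge at the \emph{exit} of the overlap region. Given any pair $(P_1, P_2)$ contributing to $H_{v,wb,k}(x_1,y_1,x_2,y_2)$, since both paths traverse $(w,b)$ and $v$ precedes $(w,b)$ on each by hypothesis, I decompose
$$P_i = Q_i \cdot L_i \cdot (w,b) \cdot R_i, \qquad Q_i := P_i[x_i,v],\ L_i := P_i[v,w],\ R_i := P_i[b,y_i],$$
for $i=1,2$. Positive edge weights guarantee that this concatenation is a simple path whose total weight equals $\dist(x_i,y_i)$.

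Next, I attribute the monomial $f(P_1) f(P_2)$ to the four resulting pieces. The shared edge $(w,b)$ contributes the factor $z_{wb}^2$. The prefix--suffix pair $(Q_1, R_2)$ must be internally vertex-disjoint, because the defining condition of $H_{v,wb,k}$ demands that $Q_1$ be internally vertex-disjoint from $P_2[w,y_2] = (w,b) \cdot R_2$, which is strictly stronger (the extra requirement $w, b \notin Q_1$ being automatic from simplicity of $P_1$); summing over all such pairs yields $F_{\disj}(x_1,v,b,y_2)$. The inner segments $L_1, L_2$ are arbitrary shortest $v$-$w$ paths, and the swap operation $\phi_{v,w}$ pairs the contributions with $L_1 \neq L_2$ into equal monomials which cancel in characteristic two, leaving the diagonal sum $\sum_L f(L)^2 = F(v,w)^2$.

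For the remaining piece, I argue that the interaction complexity decomposes as $\gamma(P_1[v,y_1], P_2[x_2,w]) = \delta(v,w) + \gamma(R_1, Q_2)$: the pair $(v,w)$ is always a concordant pair of $(P_1[v,y_1], P_2[x_2,w])$ contributing $\delta(v,w)$ by definition, and the remaining concordant pairs correspond bijectively to $\C(R_1, Q_2)$. Hence $\gamma(R_1, Q_2) \leq k - \delta(v,w)$, which yields the factor $F_{\disj, k - \delta(v,w)}(b, y_1, x_2, v)$. Multiplying the four factors produces exactly the right-hand side of the lemma.

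The main obstacle will be verifying this bijection on concordant pairs. By Lemma~\ref{lemma:concordant-pairs-1}, the vertex-disjointness of subpaths corresponding to distinct concordant pairs on $P_1$ together with the order-reversal property on $P_2$ jointly force every concordant pair of $(P_1[v,y_1], P_2[x_2,w])$ other than $(v,w)$ to have both endpoints in $R_1$ on the $P_1$-side and in $Q_2$ on the $P_2$-side, so they are automatically concordant for $(R_1, Q_2)$. The reverse direction requires showing that no interior vertex of $L_1, L_2$ nor the vertex $b$ violates the prefix/suffix internal-disjointness conditions for candidate pairs in $\C(R_1, Q_2)$; this rests on the simplicity of $P_1, P_2$ combined with the strong disjointness built into the definition of $H_{v,wb,k}$, exactly mirroring the analogous gap handled implicitly in the proof of Lemma~\ref{lem:min2dsp-id1}.
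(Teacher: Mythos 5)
Your proposal mirrors the paper's own strategy (the paper proves Lemma~\ref{lem:min2dsp-id1} in detail and simply declares Lemmas~\ref{lem:min2dsp-id2}--\ref{lem:min2dsp-id4} to be ``similar''), and the main steps---the decomposition at $v$ and the shared exit edge $(w,b)$, the factor $z_{wb}^2$, the disjointness of $(Q_1,R_2)$ yielding $F_{\disj}(x_1,v,b,y_2)$, the $F(v,w)^2$ contribution from the middle segments, and the identity $\gamma(P_1[v,y_1],P_2[x_2,w]) = \delta(v,w)+\gamma(R_1,Q_2)$---are all correctly identified.

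Your justification of the reverse inclusion $\C(R_1,Q_2)\subseteq\C(P_1[v,y_1],P_2[x_2,w])$ is imprecise, however. You credit ``the strong disjointness built into the definition of $H_{v,wb,k}$,'' but that condition constrains only $P_1[x_1,v]$ against $P_2[w,y_2]$ and says nothing about whether an interior vertex of $L_1$ or $L_2$ can spoil a candidate concordant pair of $(R_1,Q_2)$. The tool you actually need is \Cref{lemma:concordant-pairs-0}, applied to the pair $(v,w)$: since $v$ precedes $w$ on both paths, every internal vertex of $L_1=P_1[v,w]$ lying on $P_2$ must already sit inside $L_2=P_2[v,w]$, and symmetrically. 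Combined with simplicity of $P_1,P_2$, this gives $V(L_1)\cap V(Q_2)=\{v\}$, $V(L_2)\cap V(R_1)=\emptyset$, and $w,b\notin V(Q_2)$, $v\notin V(R_1)$---exactly what is needed to extend the disjointness witnessed for a pair $(v',w')\in\C(R_1,Q_2)$ to the full prefixes $P_1[v,v']$, $P_2[x_2,v']$ and suffixes $P_1[w',y_1]$, $P_2[w',w]$. With that citation corrected, your argument is complete and matches the paper's intent.
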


\smallskip

\begin{lemma}
\label{lem:min2dsp-id4}
For any integer $k$, and any $(a,v),(w,b)\in E$ such that $(a,v)$ precedes $(w,b)$ on some $(x_1,y_1)$ and $(x_2,y_2)$ shortest paths, the following holds:
\begin{align*}
H_{av,wb,k}(x_1,y_1,x_2,y_2) = 
z_{av}^2\cdot z_{wb}^2 
\cdot 
F_{\disj}(x_1,a,b,y_2)
\cdot 
F(v,w)^2
\cdot
F_{\disj,k-\delta(v,w)}(b,y_1,x_2,a) 
\end{align*}
\end{lemma}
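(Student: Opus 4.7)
The plan is to follow the proof of \Cref{lem:min2dsp-id1}, treating the two required edges $(a,v)$ and $(w,b)$ as anchors that uniquely split each contributing path into five segments. For any pair $(P_1,P_2)$ contributing to $H_{av,wb,k}(x_1,y_1,x_2,y_2)$ we can write
\[
P_i \;=\; Q_i \cdot (a,v) \cdot L_i \cdot (w,b) \cdot R_i, \qquad i=1,2,
\]
where $Q_i = P_i[x_i,a]$, $L_i = P_i[v,w]$ and $R_i = P_i[b,y_i]$ are shortest paths by substructure, and the monomial factors as
\[
f(P_1)\,f(P_2) \;=\; z_{av}^{2}\,z_{wb}^{2} \cdot f(Q_1)f(R_2) \cdot f(L_1)f(L_2) \cdot f(Q_2)f(R_1).
\]
The main work is then to show that the defining conditions of $H_{av,wb,k}$ decouple into three independent constraints, one per factor group.

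For the first group, the requirement that $P_1[x_1,v]$ and $P_2[w,y_2]$ be internally vertex-disjoint reduces to $Q_1$ and $R_2$ being internally disjoint, since $v\notin V(R_2)$ and $w\notin V(Q_1)$ (both shortest paths are simple because $G$ has no cycle of non-positive weight). For the third group, I will show that
\[
\gamma(P_1[v,y_1],\,P_2[x_2,w]) \;=\; \delta(v,w) + \gamma(R_1,Q_2).
\]
The pair $(v,w)$ is trivially concordant for $(P_1[v,y_1],P_2[x_2,w])$ (its prefix on the first path and its suffix on the second are empty); by \Cref{lemma:concordant-pairs-1} it is the first such pair on $P_1[v,y_1]$ and the last on $P_2[x_2,w]$, and contributes exactly $\delta(v,w)$ to the interaction complexity. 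The remaining concordant pairs must have both endpoints in $V(R_1)\cap V(Q_2)$, and I will establish a bijection between them and $\C(R_1,Q_2)$. The middle group $(L_1,L_2)$ is left free beyond each being a shortest $v$-to-$w$ path.

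Once the decoupling is done, the first group sums to $F_{\disj}(x_1,a,b,y_2)$ and the third to $F_{\disj,k-\delta(v,w)}(b,y_1,x_2,a)$. The middle group evaluates to $F(v,w)^{2}$, either directly as $\sum_{L_1,L_2} f(L_1)f(L_2)$, or via the characteristic-two cancellation used in \Cref{lem:min2dsp-id1}: under the swap $\phi_{v,w}$ each $(L_1,L_2)$ with $L_1\neq L_2$ pairs with a monomial-identical partner, so those terms vanish and we are left with $\sum_L f(L)^2 = F(v,w)^2$ in characteristic two. Multiplying the three factors with $z_{av}^{2}\,z_{wb}^{2}$ yields the claimed identity.

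The main obstacle is the bijection in the interaction-complexity step. I must rule out spurious concordant pairs $(v',w')$ whose vertices lie in $V(L_1)$ or $V(L_2)$, and conversely show that every concordant pair of $(R_1,Q_2)$ lifts back to a concordant pair of $(P_1[v,y_1],P_2[x_2,w])$ once the $L_i$ segments are spliced in front. Both directions pivot on \Cref{lem:order-intersect} applied to common-destination shortest paths (and its common-source analogue via the reverse graph), which force $L_1\cap Q_2 \subseteq \{v\}$ and $L_2\cap R_1 \subseteq \{w\}$ and thereby prevent any cross-segment interaction from producing new concordant pairs.
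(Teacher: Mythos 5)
Your proposal takes the same route the paper intends: the paper states only ``Similar to the above lemma'' for Lemma~\ref{lem:min2dsp-id4}, and the intended proof is precisely the five-segment decomposition $P_i=Q_i\cdot(a,v)\cdot L_i\cdot(w,b)\cdot R_i$ with the monomial factoring out $z_{av}^2 z_{wb}^2$ and the three factor groups decoupling, exactly as you write. You correctly mirror the proof of Lemma~\ref{lem:min2dsp-id1} and in fact fill in a step the paper glosses over even there: the claim $\gamma(P_1[v,y_1],P_2[x_2,w])=\delta(v,w)+\gamma(R_1,Q_2)$. Your supporting argument is sound — $(v,w)$ is trivially concordant for the two truncated paths, Lemma~\ref{lemma:concordant-pairs-1} forces every other concordant segment into $V(R_1)$ on one side and $V(Q_2)$ on the other, and the $L_i$ segments cannot create cross-interference because $L_1\cap Q_2=\emptyset$ and $L_2\cap R_1=\emptyset$ (your appeal to Lemma~\ref{lem:order-intersect} with common destination $w$ works here: $a$ and a putative common vertex $u$ would appear in opposite orders on $P_1[x_1,w]$ and $P_2[x_2,w]$, a contradiction). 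The only cosmetic remark is that for $H_{av,wb,k}$ your containment $L_1\cap Q_2\subseteq\{v\}$ is actually an empty intersection since $v\notin Q_2=P_2[x_2,a]$; the $\{v\}$ is the relevant boundary case only in the vertex version (Lemma~\ref{lem:min2dsp-id1}). This does not affect correctness.
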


\subsection{An $O(m^2 n^3)$ time algorithm for Min-2-DSP}
In this subsection, we present an algorithm for solving the Min-2-DSP problem in weighted directed graphs that takes $O(m^2n^3)$ time. Let $(s_1, t_1)$ and $(s_2, t_2)$ be the input terminal pairs. By~Schwartz-Zippel lemma (\Cref{lem:SZ}), in order to check if $\barFdk(s_1, t_1, s_2, t_2)$ is non-zero it suffices to evaluate this polynomial  on a random assignment of the variables $\{z_e\}_{e \in E}$ drawn from a sufficiently large finite field $\mathbb{F}$. The algorithm outputs the smallest integer $k$ for which $\barFdk(s_1, t_1, s_2, t_2)$ evaluates to non-zero at a random assignment of edge variables.

We begin by assigning random values $\barz_e$ from $\mathbb{F}$ to each of the edge variables $z_e$ respectively. As before, for a polynomial $R$ involving the variables $z_e, e \in E$, we shall use ${\bar R}$ to denote the value in $\mathbb{F}$ obtained by evaluating $R$ at $\barz_e, e \in E$.  
Using \Cref{lem:Fxy}, we compute $\barF(x, y)$ for all pairs of vertices in $O(mn \log n)$ time. 

\smallskip

We next evaluate the polynomial $\barFd(x_1,x_2,y_1,y_2)$, for each quadruple $(x_1,x_2,y_1,y_2)\in V^4$, in $O(mn^4)$ total time. This is accomplished by leveraging a dynamic programming approach similar to Algorithm~\ref{algo:2-DSP-directed} that processes the quadruples in  order of the sum total distance between the endpoints (i.e. $\dist(x_1,y_1)+\dist(x_2,y_2)$). 

At each step, we use recurrence identities established in \Cref{lem:Hv,lem:Hav,lem:Dv,lem:F-disj}. For a fixed $(x_1, y_1, x_2, y_2)$, we compute $\barD_v(x_1, y_1, x_2, y_2)$ for each valid $v$ in $O(\indeg(v))$ time, and combine them to evaluate $\barF_{\disj}(x_1, y_1, x_2, y_2)$ in $O(m)$ time. This yields an overall runtime of $O(mn^4)$. See Algorithm~\ref{algo:all-pairs-2-DSP-directed} for details.

\begin{algorithm}[!ht]
\setstretch{1.25}

\ForEach{$(x_1, y_1, x_2, y_2) \in V^4$ in increasing order of $\dist(x_1, y_1) + \dist(x_2, y_2)$}{
  \ForEach{$v \in V \setminus (\{x_1,y_1\}\cap\{x_2,y_2\})$}{
    \uIf{$v\in V(x_1, y_1) \cap V(x_2, y_2)$}{Compute $\barD_v(x_1,y_1,x_2,y_2)$ as
    $$\hspace{-2.5mm}
    \barF(x_2,v) \cdot \barFd(x_1, v, v, y_2) \cdot \barF(v, y_1)~~~
    - 
    \hspace{-2mm}
    \sum_{\substack{(a,v) ~\in \\ E(x_1,y_1),\\  E(x_2,y_2)}}
    \hspace{-1mm}
    z_{av}^2 \cdot F(x_2,a) \cdot \barFd(x_1,a, v, y_2) \cdot \barF(v, y_1);
    $$
    \vspace{-8mm}
    }
    \Else{Set $\barD_v(x_1,y_1,x_2,y_2)=0;$}
  }
  Compute
  
  $$
  \barFd(x_1, y_1, x_2, y_2) = \barF(x_1, y_1) \cdot \barF(x_2, y_2)
  -
  \sum_{\substack{v \in V \setminus (\{x_1,y_1\}\cap\{x_2,y_2\})}} \barD_v(x_1, y_1, x_2, y_2);
  $$
  \vspace{-2mm}
}
\caption{\textsc{All-Pairs-2-DSP-Directed}($G$)}
\label{algo:all-pairs-2-DSP-directed}
\end{algorithm}

\smallskip

In order to compute $\barFdk(s_1,t_1, s_2, t_2),$ we would need to evaluate $\barFdk(x_1,y_1, x_2, y_2)$ for several values of tuples $(x_1, y_1, x_2, y_2)$ and integers $k$. We compute a list $\cL$ containing tuples of the form $(v, t_1, s_2, w)$ for all $v, w \in V$. For each such tuple in $\cL$, we evaluate $\barFdk(x_1, y_1, x_2, y_2)$ using dynamic programming. As before, these computations are performed in the increasing order of the sum total distance between the endpoints, i.e. $\dist(x_1, y_1) + \dist(x_2, y_2)$, which ensures that all necessary values for the recurrence are available when needed.
For base cases, when either $x_1 = y_1$ or $x_2 = y_2$, , the values simplify as follows:
\begin{enumerate}
\item $\barFdk(x_1,x_1, x_2, y_2) = \barFdk(x_2, y_2)$, and 
\item $\barFd(x_1,y_1, x_2, x_2) = \barF(x_1, y_1)$. 
\end{enumerate}

\bigskip

To evaluate $\barFdk(x_1, y_1, x_2, y_2)$, we use the identity (see~\Cref{lem:min2dsp-id0}):
$$
\barFdk(x_1,y_1,x_2,y_2)
=
\sum_{\substack{v,w \in V
}}
\barD_{v,w,k}(x_1, y_1, x_2, y_2)
$$

By \Cref{lem:Dvwk}, for any $\sigma=(x_1,y_1,x_2,y_2)$ and any $v,w\in V$ satisfying $dist(x_i,y_i)=dist(x_i,v)+dist(v,w)+dist(w,y_i)$, for $i=1,2$, $\barD_{v,w,k}$ can be expressed as

$$
\barD_{v,w,k}(\sigma)
~=~
\barH_{v,w,k}(\sigma) 
- \hspace{-2mm} \sum_{(a,v) \in E} \barH_{av,w,k}(\sigma) 
- \hspace{-2mm} \sum_{(w,b) \in E} \barH_{v,wb,k}(\sigma) 
+ \hspace{-2mm}\sum_{\substack{(a,v), (w,b)\\ \in E}} \barH_{av,wb,k}(\sigma).
$$

\begin{algorithm}[!ht]
\setstretch{1.25}
Assign each edge $(u,v) \in E$ an independent and uniformly random value $\barz_{uv}$ from $\mathbb{F}$\;
Use Lemma~\ref{lem:Fxy} to compute $\barF(x,y)$ for all $x, y \in V$\;
Initialize a list $\cL$ with all quadruples of the form $(v,t_1,s_2,w)$ for all $v,w \in V$\; 
\ForEach{$(x_1, y_1, x_2, y_2) \in \cL$ in increasing order of $\dist(x_1, y_1) + \dist(x_2, y_2)$}{
  \lIf{$x_1=y_1$ \textup{and} $x_2=y_2$}{Set $\barFdk(x_1,y_1, x_2, y_2)=1$ for all $k$, and \textbf{continue} for-loop}
  %
  \vspace{1mm}
  \ForEach{$1\leqslant k\leqslant n$ {\bf and} $v,w \in V \setminus (\{x_1,y_1\}\cap\{x_2,y_2\})$}{
    \uIf{$(\dist(x_i,v)+\dist(v,w)+\dist(w,y_i) = \dist(x_i, y_i),$ for $i=1,2)$}{Compute $\barD_{v,w,k}(x_1,y_1,x_2,y_2)$ as $   \barF(v,w)^2$ times
    \begin{align*}        
    & \quad~\barFd(x_1,v,w,y_2)\cdot \bar F_{\disj,k-\delta(v,w)}(w,y_1,x_2,v) 
    \\[2mm] 
    -&~ \sum_{\substack{(a,v) ~\in \\ E(x_1,y_1),\\  E(x_2,y_2)}} 
    z_{av}^2 \cdot 
    \barFd(x_1,a,w,y_2) \cdot \bar F_{\disj,k-\delta(v,w)}(w,y_1,x_2,a)
    \\ 
    -&~ \sum_{\substack{(w,b) ~\in \\ E(x_1,y_1),\\  E(x_2,y_2)}}
    z_{wb}^2 \cdot 
    \barFd(x_1,v,b,y_2) \cdot \bar F_{\disj,k-\delta(v,w)}(b,y_1,x_2,v)
    \\
    +& \sum_{\substack{(a,v),(w,b) \\ ~\in E(x_1,y_1),\\ ~~E(x_2,y_2)}} (z_{av} z_{wb})^2 \cdot 
    \barFd(x_1,a,b,y_2) \cdot \bar F_{\disj,k-\delta(v,w)}(b,y_1,x_2,a); 
    \end{align*}

    \vspace{-4mm}
    }
    \Else{Set $\barD_{v,w,k}(x_1,y_1,x_2,y_2)=0;$}
  }
   {\bf foreach~}{$1\leqslant k\leqslant n$}{\bf ~do~}{\ Compute
  $$
  \barFdk(x_1, y_1, x_2, y_2) = 
  \sum_{\substack{v,w \in V \setminus (\{x_1,y_1\}\cap\{x_2,y_2\})}} \barD_{v,w,k}(x_1, y_1, x_2, y_2);
  $$}
}
{\bf Return}
the smallest integer $k\geqslant 1$ for which $\barFdk(s_1, t_1, s_2, t_2) \neq 0$\;
\caption{\textsc{Min-2-DSP-Directed}($G, s_1, t_1, s_2, t_2$)}
\label{algo:min-2-DSP-directed}
\end{algorithm}

Now, by \Cref{lem:min2dsp-id1,lem:min2dsp-id2,lem:min2dsp-id3,lem:min2dsp-id4}, we have
\begin{align*}
    \barD_{v,w,k}(x_1,y_1,x_2,y_2)~=
    & ~~\barFd(x_1,v,w,y_2)\cdot 
    \barF(v,w)^2\cdot 
    \bar F_{\disj,k-\delta(v,w)}(w,y_1,x_2,v) 
    \\[2mm] 
    &~- \sum_{\substack{(a,v) ~\in \\ E(x_1,y_1),\\  E(x_2,y_2)}} 
    z_{av}^2 \cdot     
    \barFd(x_1,a,w,y_2) \cdot 
    \barF(v,w)^2\cdot     
    \bar F_{\disj,k-\delta(v,w)}(w,y_1,x_2,a)
    \\ 
    &~- \sum_{\substack{(w,b) ~\in \\ E(x_1,y_1),\\  E(x_2,y_2)}}
    z_{wb}^2 \cdot 
    \barFd(x_1,v,b,y_2) \cdot
    \barF(v,w)^2\cdot 
    \bar F_{\disj,k-\delta(v,w)}(b,y_1,x_2,v)
    \\
    &+ \sum_{\substack{(a,v),(w,b) \\ ~\in E(x_1,y_1),\\ ~~E(x_2,y_2)}} (z_{av} z_{wb})^2 \cdot 
    \barFd(x_1,a,b,y_2) \cdot
    \barF(v,w)^2\cdot 
    \bar F_{\disj,k-\delta(v,w)}(b,y_1,x_2,a). 
\end{align*}

Note that $\dist(w,y_1) + \dist(x_2,v) < \dist(x_1, y_1) + \dist(x_2, y_2)$ and hence, the r.h.s. has been computed already. 
Finally, we return the smallest integer $k$ for which $\barFdk(s_1, t_1, s_2, t_2)$ evaluates to non-zero.

\paragraph{Correctness and Running Time} 
The correctness of the procedure follows from the recursive expressions in \Cref{lem:min2dsp-id0,lem:Dvwk,lem:min2dsp-id1,lem:min2dsp-id2,lem:min2dsp-id3,lem:min2dsp-id4}. 

We now analyze the running time.
From the identities above, observe that each polynomial $\barD_{v,w,k}(x_1, y_1, x_2, y_2)$ can be evaluated in $O(\indeg(v)\cdot \outdeg(w))$ time for any fixed $v, w$, and integer $k$. Consequently, we can compute $\barFdk(x_1, y_1, x_2, y_2)$ in $O(m^2)$ time for any fixed quadruple $(x_1, y_1, x_2, y_2)$ and any fixed $k$. Since we evaluate these polynomials for $O(n^2)$ such quadruples in the set $\cL$ and for $n$ distinct values of $k$, the total evaluation time is bounded by $O(m^2 n^3)$.

The algorithm has an initial preprocessing phase, where we first solve All-Pairs 2-DSP in $O(mn^4)$ time, and then compute the number of distance-critical vertices $\delta(x, y)$ for all pairs $(x, y)$ in $O(mn)$ time. The latter uses the dominator-tree algorithm of Lengauer and Tarjan~\cite{LT79,GeorgiadisT05}, which identifies all cut vertices for pairs in $\{s\} \times V$ in $O(m+n)$ time per source.

Altogether, this yields a randomized algorithm for Min-2-DSP with a total running time of $O(m^2 n^3)$ and high success probability.

\smallskip

The details of our procedure for reporting the actual $(s_1,t_1)$ and $(s_2,t_2)$ shortest paths with minimum intersection are deferred to the full version, as it is closely related to the procedure outlined in \Cref{sec:reportingpaths2DSP}. We conclude with the following theorem.

\mintwoDSPdirected*

\section{Min-2-DSP in DAGs}
\label{sec:Min-2DSP-DAGs}

We now present an $O(m+n)$-time algorithm for the Min-2-DSP problem in DAGs. 
For simplicity, we assume that the vertices $s_1, s_2, t_1, t_2$ are distinct, and that no vertex-disjoint shortest paths exist for the pairs $(s_1, t_1)$ and $(s_2, t_2)$. This assumption can be verified in $O(m + n)$ time, as shown by the work of \cite{AkmalVW24}.

Somewhat surprisingly, we show that if there exist no vertex-disjoint shortest paths for the pairs $(s_1, t_1)$ and $(s_2, t_2)$, then one can report the explicit pair of $(s_1, t_1)$ and $(s_2, t_2)$ shortest paths which intersect at minimum number of vertices in linear time deterministically (note that there is no known $o(mn)$ time randomized / deterministic algorithm for explicitly reporting the shortest paths if they are vertex-disjoint). 

\smallskip

\smallskip

Consider the subgraph $\G = (V, E_\cap)$ of the input DAG $G$, where $E_\cap=E(s_1, t_1)\cap E(s_2, t_2)$ is the set of edges that lie in some $(s_1,t_1)$-shortest-path as well as some $(s_2,t_2)$-shortest-path. 
Let $\G^{un}$ denote the undirected realization of $\G$, and let $\B$ represent the set of connected components of this undirected graph.
Below, we outline some important properties of graph $\G$ and the family $\B$.

\begin{lemma}
    Let $(P_1,P_2)\in \Pi(s_1,t_1)\times \Pi(s_2,t_2)$ be a pair of paths, and $x,y\in V$ be such that $x$ precedes $y$ on both the paths.
    Then, the edges of the subpaths $P_1[x,y],P_2[x,y]$ are contained in $E_{\cap}$. 
    \hspace{-1mm}
    
\label{lemma:global-to-local-helper}
\end{lemma}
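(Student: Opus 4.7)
The plan is to show that every edge $e$ lying on either $P_1[x,y]$ or $P_2[x,y]$ belongs to $E(s_1,t_1) \cap E(s_2,t_2)$. By symmetry it suffices to handle an arbitrary edge $e$ on $P_1[x,y]$.

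First, since $P_1 \in \Pi(s_1,t_1)$ and $e \in E(P_1)$, membership $e \in E(s_1,t_1)$ is immediate. The interesting direction is to exhibit a shortest $(s_2,t_2)$ path containing $e$. My candidate is the concatenation
\[
P_2' \;:=\; P_2[s_2,x] \,\cdot\, P_1[x,y] \,\cdot\, P_2[y,t_2].
\]
To carry this out I would proceed in three steps. First, observe that the lengths of $P_1[x,y]$ and $P_2[x,y]$ both equal $\dist(x,y)$; this is because each is a subpath of a shortest path between its endpoints, and a subpath of a shortest path between two vertices is itself a shortest path. Therefore $P_2'$ has the same total weight as $P_2$, namely $\dist(s_2,t_2)$. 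Second, I would verify that $P_2'$ is a simple path: since the underlying graph is a DAG, no walk can repeat a vertex (a repetition would produce a directed cycle), so $P_2'$ is automatically simple. Third, combining these two facts, $P_2'$ is a shortest $(s_2,t_2)$ path containing the edge $e$, which yields $e \in E(s_2,t_2)$ and hence $e \in E_\cap$.

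The symmetric construction $P_1' := P_1[s_1,x] \cdot P_2[x,y] \cdot P_1[y,t_1]$ handles an arbitrary edge on $P_2[x,y]$ in exactly the same way, establishing the lemma.

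The only step requiring any care is the simplicity of $P_2'$, and this is essentially free from the DAG assumption: any repeated vertex along the concatenated walk would certify a directed cycle, contradicting acyclicity. (In fact the same argument would work verbatim in any directed graph without nonpositive-weight cycles, since equal-length rerouting cannot shrink total weight and any vertex repetition would yield a nonpositive-weight cycle; but for this section a DAG suffices.)
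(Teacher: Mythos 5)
Your proof is correct and follows essentially the same approach as the paper's: your $P_2' = P_2[s_2,x]\cdot P_1[x,y]\cdot P_2[y,t_2]$ and $P_1' = P_1[s_1,x]\cdot P_2[x,y]\cdot P_1[y,t_1]$ are precisely the pair $\phi_{x,y}(P_1,P_2)$ that the paper obtains via the swap operation and Claim~\ref{cl:crossing}. The only difference is that you verify from scratch (equal segment lengths plus acyclicity for simplicity) what the paper delegates to that claim, so the argument is the same in substance, just more self-contained.
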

\begin{proof}
    If the segments $P_1[x,y], P_2[x,y]$ are identical, then the claim holds trivially. So, we assume that $P_1[x,y], P_2[x,y]$ are not identical.
    Let $(Q_1, Q_2) = \phi_{x,y}(P_1,P_2)$ be the paths obtained by swapping the segments  $P_1[x,y]$ and $P_2[x,y]$ in $P_1,P_2$.
    \Cref{cl:crossing} shows that $(Q_1, Q_2)\in \Pi(s_1,t_1)\times \Pi(s_2,t_2)$. Hence, the edges of subpaths $P_1[x,y],P_2[x,y]$ are contained in $(s_1,t_1)$ as well as $(s_2,t_2)$ shortest paths, thereby proving the lemma.
\end{proof}

\Cref{lemma:global-to-local-helper} helps us establish the following result, which shows that any two $(s_1, t_1)$ and $(s_2, t_2)$ shortest paths can intersect only in the vertices of a single component in $\B$.

\begin{lemma}
   Let $(P_1,P_2)\in \Pi(s_1,t_1)\times \Pi(s_2,t_2)$ be a pair of intersecting paths. Then, there exists a component $B \in \B$ such that $V(P_1) \cap V(P_2) \subseteq B$.  
   
   \label{lemma:global-to-local-paths}
\end{lemma}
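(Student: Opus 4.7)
The plan is to show that any two vertices in $V(P_1)\cap V(P_2)$ lie in a common connected component of $\G^{un}$, from which the lemma follows immediately by taking $B$ to be that component.

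First I would fix any two vertices $u,v \in V(P_1)\cap V(P_2)$ and use the fact that $G$ is a DAG to pin down their order on the two paths. Since $G$ is acyclic, it admits a topological ordering, and every directed path in $G$ must respect this ordering. Thus, without loss of generality we may assume $u$ precedes $v$ in the topological order, which forces $u \preceq_{P_1} v$ and $u \preceq_{P_2} v$ simultaneously: otherwise, one of the paths would contain a directed sub-path from $v$ to $u$, contradicting acyclicity. Note that this is precisely the DAG analogue of \Cref{lem:order-intersect} and bypasses the need for a common endpoint.

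Next I would invoke \Cref{lemma:global-to-local-helper} with this choice of $x=u$, $y=v$: the hypothesis $x$ precedes $y$ on both paths is satisfied, so every edge of $P_1[u,v]$ (and of $P_2[u,v]$) lies in $E_\cap$. Hence $P_1[u,v]$ is an $u$-to-$v$ walk inside the subgraph $\G$, and its underlying undirected trace is a walk in $\G^{un}$. This places $u$ and $v$ in the same connected component of $\G^{un}$.

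Finally, since any two vertices of $V(P_1)\cap V(P_2)$ lie in the same component, and this set is non-empty by the assumption that $P_1$ and $P_2$ intersect, the entire intersection $V(P_1)\cap V(P_2)$ is contained in a single component $B \in \B$, as required. I do not anticipate a serious obstacle here: the only subtlety is justifying the uniform ordering of $u$ and $v$ on both paths, and this is handled cleanly by the DAG structure; the rest is a direct application of \Cref{lemma:global-to-local-helper}.
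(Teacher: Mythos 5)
Your proof is correct and follows essentially the same route as the paper's: apply Lemma~\ref{lemma:global-to-local-helper} to each pair of intersection vertices to show they lie in the same component of $\G^{un}$. The only difference is that you explicitly justify the consistent ordering of $u$ and $v$ on both paths via the topological order of the DAG, which the paper's proof takes for granted.
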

\begin{proof}
    Let $P_1, P_2$ be a pair of intersecting $(s_1, t_1), (s_2, t_2)$ shortest paths. 
    If $|V(P_1) \cap V(P_2)| = 1$, then the claim holds trivially. 
    So, consider a pair of vertices $x,y$ such that $x$ precedes $y$ in $P_1,P_2$. By \Cref{lemma:global-to-local-helper}, we get that $E(P_1[x,y]) \subseteq E_\cap$. 
    Therefore, $x,y$ must be connected in $\G^{un}$.
\end{proof}

The above lemma suggests a natural approach to solving the Min-2-DSP problem in DAGs. For each $B \in \B$, we search for a pair of $(s_1, t_1)$ and $(s_2, t_2)$ shortest paths that intersect in the minimum possible number of nodes in $B$. To achieve this, we define the graph $H_B$ for every component $B\in \B$, in the following manner.

\begin{enumerate}
\item 
Initialise $H_B = \G[B]$, that is, $H_B$ is the induced subgraph of $\G$ on the vertices of $B$.

\item Next, introduce dummy sources $s_{B,1}, s_{B,2}$ and dummy sinks $t_{B,1}, t_{B,2}$.


\item For each edge $(x,y) \in E$ {\em incoming} to $B$, i.e., $x \notin B, y \in B$, if there is an $i\in\{1,2\}$ such that $e \in E(s_i, t_i)$, then add the edge $(s_{B,i}, y)$ to $H_B$. Similarly, we add outgoing edges from appropriate nodes in $B$ to terminals $t_{B_1}$ and $t_{B_2}$.


\item Finally, introduce two additional vertices $s_B,t_B$, and add directed edges from $s_B$ to $s_{B,1}, s_{B,2}$, as well as directed edges from $t_{B,1}, t_{B,2}$ to $t_B$.
\end{enumerate}

\begin{figure}[!ht]
\centering
\includegraphics[height=70mm, page=1]{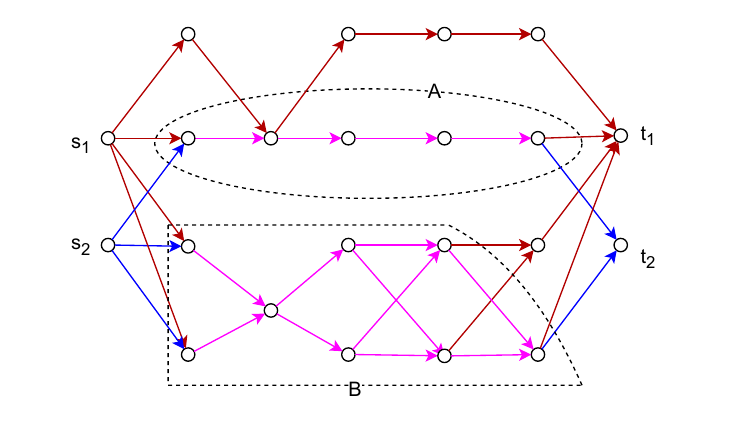}
\quad~
\includegraphics[height=32mm, page=1]{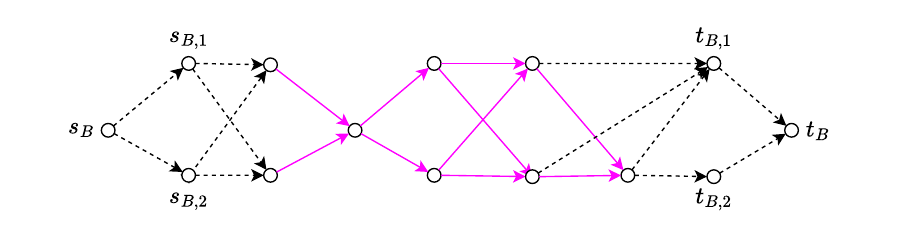}
\caption{
The first figure illustrates $\B$. The pink edges belong to both $(s_1, t_1)$ as well as $(s_2, t_2)$ shortest paths, assuming unit edge weights. The graph $\G^{un}$ has two non-singleton connected components, $A$ and $B$. The second figure illustrates the graph $H_{B}$ for component $B$. 
}
\label{fig:DAG-components-eg}
\end{figure}

Observe that the graph $H_B$ is a DAG.

\begin{lemma}
Let $P$ be an $(s_i, t_i)$ shortest path in $G$, for some $i\in\{1,2\}$, and let $x,y\in B\in \B$ be such that $x$ precedes $y$ in $P$. Then the subpath $P[x,y]$ lies in $H_B$.
\end{lemma}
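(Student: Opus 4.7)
The plan is to reduce everything to one application of Lemma~\ref{lemma:global-to-local-helper}. That lemma tells us that if $(P_1,P_2)\in\Pi(s_1,t_1)\times\Pi(s_2,t_2)$ and $x$ precedes $y$ in both paths, then the edges of $P_1[x,y]$ and $P_2[x,y]$ all lie in $E_\cap$. Consequently, if I can produce an $(s_{3-i},t_{3-i})$-shortest path $Q$ on which $x$ also precedes $y$, then applying that lemma to $(P,Q)$ forces $E(P[x,y])\subseteq E_\cap$; since consecutive edges of $P[x,y]$ share endpoints, all its vertices then lie in a single component of $\G^{\mathrm{un}}$, which must be $B$ because $x\in B$. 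Therefore $P[x,y]\subseteq \G[B]\subseteq H_B$, as required. The cases $x=y$ and $|B|=1$ are trivial, so I assume $x\neq y$; then $|B|\geqslant 2$ forces every vertex of $B$ (in particular $x$ and $y$) to be incident to an edge of $E_\cap$, and hence to lie in $V(s_j,t_j)$ for $j=1,2$.

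The heart of the argument is the construction of $Q$, which rests on a distance-difference invariant within $B$. For any edge $(a,b)\in E_\cap$, membership in $E(s_j,t_j)$ together with $a,b\in V(s_j,t_j)$ forces
\[
\dist(s_j,b) - \dist(s_j,a) \;=\; \ell(a,b),
\]
and the right-hand side is the same for $j=1$ and $j=2$. Since $x$ and $y$ lie in the same component of $\G^{\mathrm{un}}$, there is a walk $x=w_0,w_1,\ldots,w_k=y$ whose every step is an edge of $E_\cap$ traversed in one direction or the other. Telescoping the displayed identity along this walk, with $\pm 1$ signs according to the direction each edge is traversed, yields the invariant
\[
\dist(s_1,y) - \dist(s_1,x) \;=\; \dist(s_2,y) - \dist(s_2,x).
\]

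With the invariant in hand, the construction of $Q$ is immediate. The hypothesis that $x$ precedes $y$ on $P\in\Pi(s_i,t_i)$ gives $\dist(s_i,y)=\dist(s_i,x)+\dist(x,y)$, and the invariant transfers this identity to the other source index: $\dist(s_{3-i},y)=\dist(s_{3-i},x)+\dist(x,y)$. Combined with $\dist(s_{3-i},t_{3-i})=\dist(s_{3-i},y)+\dist(y,t_{3-i})$, which holds because $y\in V(s_{3-i},t_{3-i})$, the concatenation of any shortest $s_{3-i}\leadsto x$, $x\leadsto y$, and $y\leadsto t_{3-i}$ paths has length $\dist(s_{3-i},t_{3-i})$; because $G$ is a DAG the three segments strictly advance in topological order, so the concatenation is a simple path $Q\in\Pi(s_{3-i},t_{3-i})$ on which $x$ precedes $y$. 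Applying Lemma~\ref{lemma:global-to-local-helper} to $(P,Q)$, ordered so that $P$ plays the role of $P_i$ and $Q$ plays the role of $P_{3-i}$, completes the argument. The only non-mechanical step is establishing the distance-difference invariant; everything else is standard shortest-path bookkeeping in DAGs.
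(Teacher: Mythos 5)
Your proof is correct and follows the same overall strategy as the paper: establish the invariant $\dist(s_1,y)-\dist(s_1,x)=\dist(s_2,y)-\dist(s_2,x)$ for $x,y$ in the same component $B$, then use it to manufacture an $(s_{3-i},t_{3-i})$ shortest path $Q$ through $x$ and $y$ in that order. Two differences are worth flagging. First, you derive the invariant by telescoping the per-edge identity $\dist(s_j,b)-\dist(s_j,a)=\ell(a,b)$ along a walk in $\G^{\mathrm{un}}$; the paper just asserts it (``by definition of $\W$,'' a symbol that is never introduced), so your telescoping argument fills a genuine gap in the exposition. Second, you take $Q$ to be a concatenation of arbitrary shortest segments and then invoke Lemma~\ref{lemma:global-to-local-helper} on $(P,Q)$ — a small detour through the swap argument — whereas the paper chooses $Q=Q_x[s_{3-i},x]\cdot P[x,y]\cdot Q_y[y,t_{3-i}]$ so that $P[x,y]$ sits inside $Q$ verbatim, giving $E(P[x,y])\subseteq E(s_{3-i},t_{3-i})$ immediately with no appeal to the swap lemma. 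You could shortcut this by simply taking your middle $x\leadsto y$ segment to be $P[x,y]$ itself; otherwise your version is slightly heavier but equally valid.
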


\begin{proof}
We prove the claim for $i=1$; the proof for the case $i=2$ follows similarly.
Consider an $(s_1,t_1)$ shortest path $P$ in $G$, passing through a set $B\in \B$. Let $x,y\in B$ be such that $x$ precedes $y$ in $P$. 
By definition of $\W$, it follows that 
\begin{equation}
\dist(s_1,y)-\dist(s_1,x)
=
\dist(s_2,y)-\dist(s_2,x).
\label{eq:dist-diff}
\end{equation}

Now, consider any $(s_2,t_2)$ shortest paths $Q_x, Q_y$ passing through $x$ and $y$ respectively. We show that the path $Q = Q_x[s_2, x] \cdot P[x,y] \cdot Q_y[y, t_2]$ is also an $(s_2, t_2)$ shortest path. This can be shown by finding weight of the path $Q$:
\begin{align*}
    wt(Q) &= \dist(s_2, x) + \dist(y, t_2) + \dist(x,y) \\
    &= \dist(s_2, x) - \dist(s_2, y) + \dist(s_2, t_2) + \dist(x,y)   \\
    &= \dist(s_{\pmb{1}}, x) - \dist(s_{\pmb{1}}, y) + \dist(x,y) + \dist(s_2, t_2)  & \text{(From Eq.~\ref{eq:dist-diff})}\\
    &= \dist(s_2, t_2)
\end{align*}

Since $Q$ is an $(s_2,t_2)$ shortest path, the edges of $P[x,y]$ must lie in $E_2$ and hence, also in $E_\cap$. This proves that $P[x,y]$ lies in $H_B$.
\end{proof}

\begin{definition}[Projection]
Let $P$ be any $(s_i, t_i)$ shortest path in $G$, for some $i\in\{1,2\}$, passing through a set $B\in \B$. Suppose
$x$ and $y$ are the first and last vertices of $P$ lying in $B$, respectively. We define the projection of the path $P$ onto graph $H_B$ as the path $$P_B = (s_{B_,i}, x) \cdot P[x,y] \cdot (y, t_{B,i}).$$ 
\end{definition}

\medskip

We now present a key lemma that highlights the significance of the projections defined above.

\smallskip

\begin{lemma}
    There exist pair of paths $(P_1,P_2)\in \Pi(s_1,t_1)\times \Pi(s_2,t_2)$ intersecting at exactly $t \geq 1$ vertices if and only if there exists a component $B\in \B$ such that there exist $(s_{B,1}, t_{B,1}), (s_{B,2}, t_{B,2})$ paths $Q_1, Q_2$ in $H_B$ with intersecting at exactly $t$ vertices. 
    Furthermore, $V(P_1) \cap V(P_2) = V(Q_1) \cap V(Q_2)$, and $Q_1,Q_2$ are precisely the paths obtained by projecting $P_1, P_2$ onto $H_B$. 
    \label{lemma:reduction-to-local}
\end{lemma}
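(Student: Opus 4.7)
My plan rests on two structural facts: \Cref{lemma:global-to-local-paths} (intersections of two shortest paths are confined to a single component of $\B$) and a contiguity observation about $B$-segments on shortest paths, together with the careful construction of $H_B$.

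I would first establish a contiguity observation: for any $i\in\{1,2\}$ and any $(s_i,t_i)$-shortest path $P$ in $G$, the vertices of $P$ that lie in a fixed component $B\in\B$ form a contiguous subpath of $P$. This follows immediately from \Cref{lemma:global-to-local-helper}: if $v,v'\in V(P)\cap B$ with $v$ preceding $v'$, every edge of $P[v,v']$ lies in $E_\cap$, so every vertex of $P[v,v']$ stays in the same connected component $B$ of $\G^{un}$.

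For the forward direction, given $(P_1,P_2)$ intersecting at exactly $t\geqslant 1$ vertices, I would apply \Cref{lemma:global-to-local-paths} to obtain a component $B\in\B$ with $V(P_1)\cap V(P_2)\subseteq B$. Let $x_i,y_i$ denote the first and last $B$-vertices on $P_i$; the lemma preceding the definition of projection ensures $P_i[x_i,y_i]$ lives in $H_B$, so $Q_i$ is a valid $(s_{B,i},t_{B,i})$-path in $H_B$. By the contiguity observation, $V(P_i)\cap B = V(P_i[x_i,y_i])$, and since the four dummy terminals are pairwise distinct and confined to their own path, I would conclude $V(Q_1)\cap V(Q_2) = V(P_1[x_1,y_1])\cap V(P_2[x_2,y_2]) = V(P_1)\cap V(P_2)$.

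For the converse, given $Q_1,Q_2$ in $H_B$ with exactly $t$ shared vertices, I would extend each $Q_i$ to a full $(s_i,t_i)$-shortest path in $G$. Writing the first and last edges of $Q_i$ as $(s_{B,i},x_i)$ and $(y_i,t_{B,i})$, the construction of $H_B$ supplies edges $(u_i,x_i),(y_i,v_i)\in E(s_i,t_i)$ with $u_i,v_i\notin B$. I define $P_i$ by prepending an $(s_i,u_i)$-shortest path, concatenating $(u_i,x_i)\cdot Q_i[x_i,y_i]\cdot (y_i,v_i)$, and appending a $(v_i,t_i)$-shortest path. Applying the contiguity observation to this completed path together with $u_i,v_i\notin B$ forces the prefix and suffix extensions to avoid $B$ entirely, so $x_i,y_i$ are genuinely the first and last $B$-vertices of $P_i$ and the projection of $P_i$ is exactly $Q_i$. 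Finally, \Cref{lemma:global-to-local-paths} localizes $V(P_1)\cap V(P_2)$ inside a single component of $\B$; since the projections already share a vertex in $B$, that component is $B$, and combining with the contiguity observation yields $V(P_1)\cap V(P_2)=V(Q_1)\cap V(Q_2)$, giving exactly $t$ common vertices.

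The main obstacle is this backward direction, specifically ensuring the extensions do not sneak back into $B$ and corrupt both the projection identity and the intersection count. Once the contiguity observation is in hand the issue evaporates: any $B$-vertex appearing in the prefix or suffix extension would, together with $u_i\notin B$ or $v_i\notin B$, immediately contradict contiguity along the completed $(s_i,t_i)$-shortest path. The remainder of the argument is careful bookkeeping about the dummy terminals of $H_B$.
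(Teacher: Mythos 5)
Your proof has the right structure, and the forward and backward directions are both essentially sound. The one issue is in how you justify the contiguity observation.

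You claim that contiguity of $V(P)\cap B$ along a single $(s_i,t_i)$-shortest path $P$ "follows immediately from \Cref{lemma:global-to-local-helper}." It does not: that lemma takes as input a \emph{pair} $(P_1,P_2)\in\Pi(s_1,t_1)\times\Pi(s_2,t_2)$ and two vertices $x,y$ that precede each other on \emph{both} $P_1$ and $P_2$ — its proof is a swap argument $\phi_{x,y}$ that needs both paths to be present. In your setting you only have one path $P$ and two of its vertices $v,v'\in B$; there is no companion path on which $v,v'$ are also guaranteed to appear in order, so the swap argument cannot be run. The fact you actually need (that all edges of $P[v,v']$ lie in $E_\cap$ whenever $v,v'\in B$ lie on a single $(s_i,t_i)$-shortest path $P$) is exactly what the unnumbered lemma preceding the projection definition establishes; its proof goes through the distance identity $\dist(s_1,y)-\dist(s_1,x)=\dist(s_2,y)-\dist(s_2,x)$ for $x,y$ in the same component of $\G^{un}$, which is a different mechanism. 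You already cite that unnumbered lemma in the forward direction to show $P_i[x_i,y_i]$ lives in $H_B$, so the fix is just to cite it here as well, noting that it directly yields contiguity since all vertices of $P[v,v']$ are then connected to $v\in B$ in $\G^{un}$.

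With that correction, the rest goes through: the forward direction correctly combines \Cref{lemma:global-to-local-paths} (localizing $V(P_1)\cap V(P_2)$ to one block), contiguity ($V(P_i)\cap B = V(P_i[x_i,y_i])$), and the disjointness of the dummy terminals to get $V(Q_1)\cap V(Q_2)=V(P_1)\cap V(P_2)$. In the backward direction, your observation that every edge of the assembled $P_i$ lies in $E(s_i,t_i)$ (prefix/suffix shortest paths, the two boundary edges supplied by the $H_B$ construction, and the interior edges in $E_\cap\subseteq E(s_i,t_i)$) is the right way to certify $P_i\in\Pi(s_i,t_i)$, and your use of contiguity together with $u_i,v_i\notin B$ to rule out the extensions re-entering $B$ is precisely the argument needed to make the projection identity and the intersection count work out.
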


The above lemma helps us to focus on {\em paths} in graph $H_B$, for $B\in \B$, instead of the {\em shortest paths} in $G$.

\paragraph{Reduction from Minimum Disjoint Shortest Paths to Minimum Disjoint Paths.}
We now describe how the problem of finding two shortest paths in $G$ with minimum intersection reduces to the Minimum \emph{Disjoint Paths} (Min-2-DP) problem, assuming no two vertex-disjoint $(s_1, t_1)$ and $(s_2, t_2)$ shortest paths exist in~$G$. See Algorithm~\ref{algo:min-2-DSP-DAG}.

For each block $B \in \B$, we construct the auxiliary graph $H_B$ and compute the minimum number of intersections between any $(s_{B,1}, t_{B,1})$ and $(s_{B,2}, t_{B,2})$ paths (the intersections will be at least one in count due to our assumption on $G$).
For each vertex $v$ in $H_B$, we set $\alpha(v)$ to the total number of cut vertices for the pairs $(s_B, v)$ and $(v, t_B)$, excluding the endpoints $s_B$ and $t_B$. This value over all $v\in V(H_B)$ can be computed using the dominator trees of $s_B$ and $t_B$ in $O(|V(H_B)|+|E(H_B)|)$ time~\cite{LT79,GeorgiadisT05}. The quantity $\alpha(v)$ represents the minimum number of intersections between any pair of $(s_{B,1}, t_{B,1})$ and $(s_{B,2}, t_{B,2})$ paths, that pass through vertex $v$.
To solve the 2-DP problem for the graph $H_B$, we take the minimum $\alpha(v)$ over all vertices $v \in H_B$. Finally, by selecting the smallest intersection value across all auxiliary graphs $\{H_B \mid B \in \B\}$, we obtain the optimal solution to the \textsc{Min-2-DSP} problem in $G$, due to \Cref{lemma:reduction-to-local}.

\begin{algorithm}[!ht]
\setstretch{1.25}
\lIf{$\textsc{2-DSP}(s_1,t_1,s_2,t_2)$}{Return $0$}
Compute $E_\cap = E(s_1, t_1)\cap E(s_2, t_2)$ and define $\G=(V,E_\cap)$\;
Let $\B$ be a partitioning of $V$ induced by the connected components of $\G^{un}$\;
\ForEach{$B\in \B$}{
    Initialize $H_B$ to $\G[B]$, and 
    add 6 additional vertices $s_B, s_{B,1}, s_{B,2},t_B,t_{B,1}, t_{B,2}$ to $H_B$\;
    \ForEach{$i \in \{1,2\}$}{
        \ForEach{$(x,y) \in E(s_i, t_i)$ incoming to set $B$}{
            Add the edge $(s_{B,i}, y)$ to $H_B$\; 
        }

        \ForEach{$(x,y) \in E(s_i, t_i)$ outgoing from set $B$}{
            Add the edge $(x, t_{B,i})$ to $H_B$\; 
        }

        Add the edges $(s_B, s_{B,i}), (t_{B,i}, t_B)$ to $H_B$\;
    }

    Compute $T_{out}=\domtree(s_B,H_B)$\;
    Compute $T_{in}=\domtree(t_B,(H_B)^{rev})$\;
    \lForEach{$v \in B$}{
        Set $\alpha(v) = \depth(v,T_{in}) + \depth(v,T_{out})-1$
        }
}

Return $\min_{v\in V\setminus\{s_1,t_1,s_2,t_2\}} \alpha(v)$
\caption{\textsc{Min-2-DSP-DAG}($G, s_1, t_1, s_2, t_2$)}
\label{algo:min-2-DSP-DAG}
\end{algorithm}

\footnotetext{Depth of nodes not lying in the dominator trees $T_{out},T_{in}$ can be taken as infinity.}


The running time of Algorithm~\ref{algo:min-2-DSP-DAG} is $\sum_{B\in \B}O(|V(H_B)|+|E(H_B)|)$ which is just $O(m + n)$.


\subsection{Reporting the Paths} 
For any $v \in V(s_1, t_1) \cap V(s_2, t_2)$, suppose we define \textsc{subproblem}$(v)$ as the following: Find the $(s_1, t_1), (s_2, t_2)$ shortest paths $(P_1, P_2)$ respectively, such that $v \in V(P_1) \cap V(P_2)$ and $|V(P_1) \cap V(P_2)|$ is minimised. 

Now, we show how to solve \textsc{subproblem}$(v)$ in $O(m+n)$ time. For any vertex $v \in V$, define two subgraphs of $G$ as below:
\begin{enumerate}
    \item $G^{in}_v$ contains an edge $(x,y)$ if and only if $(x,y) \in E(x,v)$, i.e., $(x,y)$ lies on some shortest path from $x$ to $v$.  
    \item $G^{out}_v$ contains an edge $(x,y)$ if and only if $(x,y) \in E(v,y)$, i.e., $(x,y)$ lies on some shortest path from $v$ to $y$.  
\end{enumerate}

The computation of these subgraphs for each $v$ takes $O(m+n)$ time, since single-source shortest path distances can be computed in $O(m+n)$ time in DAGs. The following lemma follows directly from definitions above.  

\begin{lemma}
Any path from $x$ to $v$ in $G^{in}_v$ is a shortest $(x,v)$ path in $G$, and  vice-versa. Similarly, any path from $v$ to $y$ in $G^{out}_v$ is a shortest $(v,y)$ path in $G$, and vice-versa.
\end{lemma}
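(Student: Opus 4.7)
The plan is to prove both directions of the first claim (the one for $G^{in}_v$) by exploiting the standard distance-characterization of shortest-path edges; the second claim (for $G^{out}_v$) follows by an entirely symmetric argument, so I would just remark on this at the end rather than rewriting it.

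For the ``path in $G^{in}_v$ implies shortest $(x,v)$-path in $G$'' direction, I would first unpack the definition: an edge $(a,b)$ is in $G^{in}_v$ iff $(a,b) \in E(a,v)$, which is equivalent to the distance identity $\dist(a,v) = wt(a,b) + \dist(b,v)$. Given any $x$-to-$v$ path $P = (u_0=x, u_1, \dots, u_k=v)$ in $G^{in}_v$, I would telescope:
\[
wt(P) = \sum_{i=0}^{k-1} wt(u_i,u_{i+1}) = \sum_{i=0}^{k-1}\bigl(\dist(u_i,v)-\dist(u_{i+1},v)\bigr) = \dist(x,v) - \dist(v,v) = \dist(x,v),
\]
so $P$ is a shortest $(x,v)$-path.

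For the converse direction, I would take any shortest $(x,v)$-path $P = (u_0=x, u_1, \dots, u_k=v)$ in $G$ and invoke the optimal substructure property of shortest paths: each suffix $P[u_i, v]$ is itself a shortest path from $u_i$ to $v$, and hence each edge $(u_i, u_{i+1})$ lies on a shortest $u_i$-to-$v$ path, i.e.\ belongs to $E(u_i, v)$. By definition, this places $(u_i, u_{i+1}) \in E(G^{in}_v)$ for every $i$, so the whole path $P$ lies in $G^{in}_v$.

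The argument for $G^{out}_v$ is completely symmetric: an edge $(a,b)$ is in $G^{out}_v$ iff $\dist(v,b) = \dist(v,a) + wt(a,b)$, and the same telescoping and optimal-substructure arguments then yield the claim for $v$-to-$y$ paths. There is no real obstacle here --- the whole statement is a routine consequence of the distance-preservation characterization of edges on shortest paths, and the DAG assumption is not even strictly needed for this particular lemma (it is used only to ensure the $G^{in}_v, G^{out}_v$ themselves can be built in $O(m+n)$ time, as noted before the lemma).
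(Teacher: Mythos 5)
Your proof is correct and fills in the telescoping argument that the paper leaves implicit (the paper simply asserts the lemma ``follows directly from definitions above''). The key observation — that $(a,b)\in E(G^{in}_v)$ is equivalent to $\dist(a,v)=wt(a,b)+\dist(b,v)$, combined with telescoping over the edges of the path and optimal substructure for the converse — is exactly the standard characterization one would invoke here, and your side remark that the DAG hypothesis is not essential for this particular lemma (only for the $O(m+n)$ construction time) is accurate.
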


\paragraph{Solving \textsc{subproblem}$(v)$.}  
To solve \textsc{subproblem}$(v)$, we must address two tasks:  
\begin{enumerate}
    \item Find $(v, t_1), (v, t_2)$ shortest paths $Q_1, Q_2$ in $G^{out}_v$ that intersect in the minimum possible number of vertices.
    \item Find $(s_1, v), (s_2, v)$ shortest paths $Q_1', Q_2'$ in $G^{in}_v$ that intersect in the minimum possible number of vertices.      
\end{enumerate}

It can be seen that the paths $Q_1' \cdot Q_1$ and  $Q_2' \cdot Q_2$ will be the desired answer to \textsc{subproblem}$(v)$.

We now discuss how to find the paths $Q_1, Q_2$ in $O(m+n)$ time (the paths $Q_1', Q_2'$ can be found by similar procedure).
To compute $Q_1$ and $Q_2$, we add a dummy sink $t^*$ and include edges $(t_1,t^*)$ and $(t_2,t^*)$. The task reduces to finding two $v$ to $t^*$ paths that intersect at the smallest possible number of vertices. This can be solved in $O(m+n)$ time using the dominator-tree algorithm of Lengauer and Tarjan~\cite{LT79,GeorgiadisT05}, which identifies all cut vertices between two vertex-pairs in $O(m+n)$ time. Given the cut vertices, we can construct $Q_1$ and $Q_2$ by computing two vertex-disjoint paths between each pair of consecutive cut vertices.
Hence, \textsc{subproblem}$(v)$ can be solved in $O(m+n)$ time.

After identifying the vertex $v^* = \arg \min_{v \in V} \alpha(v)$, one can report the explicit paths intersecting at minimum number of vertices by solving \textsc{subproblem}($v^*$) which takes $O(m+n)$ time as discussed previously. This results in the following theorem.

\mintwoDSPDAGs*

\section{Min-2-DSP in Undirected Graphs}
\label{sec:Min-2DSP-undirected}

We present in this section an $O(m + n)$-time algorithm for the Minimum-2-Disjoint Shortest Paths (Min-2-DSP) problem in weighted (strictly positive weight) undirected graphs.  Throughout this section, all instances of single-source shortest path computation are handled via Thorup's algorithm~\cite{Thorup97} for undirected graphs, so each such subroutine runs in $O(m+n)$ time.
For simplicity, we assume that the vertices $s_1, s_2, t_1, t_2$ are all distinct, and that there do not exist two vertex-disjoint shortest paths for the pairs $(s_1, t_1)$ and $(s_2, t_2)$. This assumption can be verified in $O(m + n)$ time, as shown in~\cite{AkmalVW24}. 


We begin by defining the concepts of agreeing and disagreeing paths, similar to that in~\cite{AkmalVW24}.

\begin{definition}
Let $P_1$ and $P_2$ be shortest paths from $s_1$ to $t_1$ and from $s_2$ to $t_2$, respectively. Let $W$ be the set of vertices common to both $P_1$ and $P_2$.
\begin{itemize}
    \item If $|W| \geq 2$, we say that $(P_1, P_2)$ is an {agreeing pair} if the vertices of $W$ appear in the same order on both $P_1$ and $P_2$, and a {disagreeing pair} if the order of $W$ is reversed on $P_1$ and $P_2$.
    \item If $|W| = 1$, we say that $P_1$ and $P_2$ have a \emph{singular intersection}.
\end{itemize}
\end{definition}

\subsection{Handling Agreeing/Disagreeing Pairs}
In this subsection, we note that the Min-2-DSP idea of DAGs extends to undirected graphs with positive edge weights, as long as minimum number of intersections between any pair of shortest paths is at least two. We describe the idea in brief:


Suppose $E_\cap = E_1 \cap E_2$. Now, we partition $E_\cap$ into $E_\cap^+$ and $E_\cap^-$ as following:~\footnote{This is the same decomposition as used by Lochet~\cite{Lochet21}.} 
\begin{itemize}
    \item $E_\cap^+ = \{(u,v) \in E_\cap | \ (\dist(s_1, u) < \dist(s_1, v)) \land (\dist(s_1, u) < \dist(s_1, v)) \}$

    \item $E_\cap^- = \{(u,v) \in E_\cap | \ (\dist(s_1, u) < \dist(s_1, v)) \land (\dist(s_1, u) > \dist(s_1, v)) \}$
\end{itemize} 

\paragraph{Handling Agreeing Paths}
We first address the case of agreeing paths; disagreeing paths can be analyzed analogously.
Consider the subgraph $\G^+ = (V, E_\cap^+)$ of the input graph $G$ and let $\B^+$ represent the set of connected components of this undirected graph $\G^+$.
Using this, we present the Algorithm~\ref{algo:min-2-DSP-Undirected-agreeing}, which finds the agreeing shortest paths with minimum number of intersections. For brevity's sake, we skip the proof of correctness of the algorithm, as it is almost identical to the proof of the algorithm for DAGs (Algorithm~\ref{algo:min-2-DSP-DAG}).

\begin{algorithm}[!ht]
\setstretch{1.25}
\tcc{Assuming minimum number of intersections is at least two}

Compute $\G^+=(V,E_\cap^+)$\;
Let $\B^+$ be a partitioning of $V$ induced by the connected components of $\G^+$\;
\ForEach{$B\in \B$}{
    Set $\tau$ as ordering of vertices in $B$ in increasing distance from $s_1$\;
    Initialize $H_B$ to $\G[B]$, and direct the edges of $H_B$ according to the ordering $\tau$\; 
    Add 6 additional vertices $s_B, s_{B,1}, s_{B,2},t_B,t_{B,1}, t_{B,2}$ to $H_B$\;
    \ForEach{$i \in \{1,2\}$}{
        \ForEach{$(x,y) \in E(s_i, t_i),x \notin B, y \in B$ and $\dist(s_i, x) + \dist(x, y) = \dist(s_i, y)$}{
            Add the directed edge $(s_{B,i}, y)$ to $H_B$\; 
        }

        \ForEach{$(x,y) \in E(s_i, t_i),x \in B, y \notin B$ and $\dist(s_i, x) + \dist(x, y) = \dist(s_i, y)$}{
            Add the directed edge $(x, t_{B,i})$ to $H_B$\; 
        }

        Add the directed edges $(s_B, s_{B,i}), (t_{B,i}, t_B)$ to $H_B$\;
    }

    Compute $T_{out}=\domtree(s_B,H_B)$\;
    Compute $T_{in}=\domtree(t_B,(H_B)^{rev})$\;
    \lForEach{$v \in B$}{
        Set $\alpha(v) = \depth(v,T_{in}) + \depth(v,T_{out})-1$
        }
}

Return $\min_{v\in V\setminus\{s_1,t_1,s_2,t_2\}} \alpha(v)$
\caption{\textsc{Min-2-DSP-Undirected-Agreeing}($G, s_1, t_1, s_2, t_2$)}
\label{algo:min-2-DSP-Undirected-agreeing}
\end{algorithm}

\paragraph{Time Complexity} We solve single-source shortest path problem in $O(m + n)$ time for undirected graphs~\cite{Thorup97}. Rest of the algorithm takes $O(m + n)$ time. Hence, the overall procedure takes $O(m + n)$ time.

This results in the following lemma.

\begin{lemma}   
\label{lemma:twoDSPUndirected-1}
    Given that minimum number of intersections between any pair of shortest paths is at least two, there exists an algorithm which finds agreeing paths with minimum number of intersections and reports the explicit paths in $O(m + n)$ time. 
\end{lemma}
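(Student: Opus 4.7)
The plan is to reduce the agreeing case of Min-2-DSP in undirected graphs to the DAG setting already handled by Algorithm~\ref{algo:min-2-DSP-DAG}, by working inside each connected component of $\G^+ = (V, E_\cap^+)$. The overall strategy mirrors the proof for DAGs: first show that the intersection pattern of any agreeing pair is ``localized'' inside a single component $B \in \B^+$; then build an auxiliary DAG $H_B$ whose paths correspond bijectively to the relevant projections; and finally solve Min-2-DP on each $H_B$ via dominator trees.

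The main structural claim I would establish first is the analog of Lemma~\ref{lemma:global-to-local-helper} for agreeing pairs: if $(P_1, P_2) \in \Pi(s_1,t_1) \times \Pi(s_2,t_2)$ is an agreeing pair and $x, y$ are two common vertices with $x$ preceding $y$ on both paths, then every edge of $P_1[x,y]$ and $P_2[x,y]$ lies in $E_\cap^+$. The proof would swap segments: both $P_1'=P_1[s_1,x]\cdot P_2[x,y]\cdot P_1[y,t_1]$ and $P_2'$ defined symmetrically are still shortest paths (by the agreeing property, the distances add up correctly in both directions), so every edge on these sub-segments is shared by some $(s_1,t_1)$- and $(s_2,t_2)$-shortest path. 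Orientability in $E_\cap^+$ then follows because, on an agreeing pair, walking from $x$ to $y$ increases both $\dist(s_1,\cdot)$ and $\dist(s_2,\cdot)$. From this, the analog of Lemma~\ref{lemma:global-to-local-paths} is immediate: the intersection set $V(P_1)\cap V(P_2)$ of any agreeing pair lies inside one component $B\in \B^+$.

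Next I would show that $H_B$ is a DAG, since its edges are oriented by strictly increasing $\dist(s_1,\cdot)$, together with the terminal edges $(s_B,s_{B,i})$, $(s_{B,i},\cdot)$, $(\cdot,t_{B,i})$, $(t_{B,i},t_B)$ that point outward from $s_B$ toward $t_B$. I would then define the projection of an $(s_i,t_i)$-shortest path $P$ passing through $B$ exactly as in the DAG case, and prove the undirected analog of Lemma~\ref{lemma:reduction-to-local}: agreeing pairs in $G$ that intersect in $B$ with exactly $t\geqslant 2$ common vertices correspond bijectively to pairs of $(s_{B,1},t_{B,1})$ and $(s_{B,2},t_{B,2})$ paths in $H_B$ with the same intersection set. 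The ``$t\geqslant 2$'' hypothesis is critical: with the lemma's standing assumption that no singular-intersection or disjoint solution exists, the optimal agreeing intersection count is realized entirely within some single $B$, and we can read it off as $\min_{v\in B}\alpha(v)$, where $\alpha(v)=\depth(v,T_{in})+\depth(v,T_{out})-1$ measures the total number of dominators of $v$ on $s_B$--$t_B$ paths in $H_B$.

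Finally, for running time, Thorup's algorithm computes the distance labels from $s_1, s_2, t_1, t_2$ in $O(m+n)$ time, which suffices to build $E_\cap^+$, its components $\B^+$, and each $H_B$; the dominator trees $T_{in}, T_{out}$ are built in $O(|V(H_B)|+|E(H_B)|)$ time by Lengauer--Tarjan, summing to $O(m+n)$ overall since the $H_B$'s have disjoint interiors. Reporting the actual paths then reduces, exactly as in the DAG section, to solving the corresponding \textsc{subproblem}$(v^*)$ at the optimal dominator $v^*$: split along the cut vertices in $T_{in}$ and $T_{out}$, find two internally vertex-disjoint paths between consecutive cut vertices by a standard $O(m+n)$ flow-free routine on the induced subgraph of $H_{B^*}$, and lift back to $G$ via the projection bijection. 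The only place I expect non-trivial care is in the structural claim that agreeing intersections stay in one component and that edges between consecutive common vertices lie in $E_\cap^+$; once this is in hand, the rest of the argument is a direct translation of the DAG proof.
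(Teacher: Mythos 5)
Your proposal matches the paper's approach exactly: partition $E_\cap^+$ into connected components, build the auxiliary DAGs $H_B$ with edges oriented by increasing $\dist(s_1,\cdot)$, and reduce to Min-2-DP via dominator trees on each $H_B$. The paper deliberately omits the correctness argument (deferring to Algorithm~\ref{algo:min-2-DSP-DAG}), and your structural claim --- the $E_\cap^+$-analogue of \Cref{lemma:global-to-local-helper} established via the segment swap, together with the orientation observation that agreeing pairs make both $\dist(s_1,\cdot)$ and $\dist(s_2,\cdot)$ increase along the common segment --- is precisely the bridge the paper implicitly invokes.
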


\paragraph{Handling Disagreeing Paths} Similarly, the minimum number of intersections among disagreeing paths can be found using the subgraph $\G^- = (V, E_\cap^-)$ and the partition $\B^-$ induced by the set of connected components of this undirected graph $\G^-$. An alternate easier way of thinking about disagreeing paths is switching the terminals $s_2, t_2$ with each other, i.e., $(s_2', t_2') = (t_2, s_2)$. Now, all disagreeing $(s_1, t_1), (s_2, t_2)$ shortest paths become agreeing paths instead, and hence, can be handled easily. This results in the following lemma:

\begin{lemma}
\label{lemma:twoDSPUndirected-2}
    Given that minimum number of intersections between any pair of shortest paths is at least two, there exists an algorithm which finds disagreeing paths with minimum number of intersections and reports the explicit paths in $O(m + n)$ time. 
\end{lemma}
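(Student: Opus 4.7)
\medskip
\noindent
\textbf{Proof Proposal.} The plan is to reduce the disagreeing case to the agreeing case via a terminal swap, exactly as hinted in the text, and then invoke \Cref{lemma:twoDSPUndirected-1} as a black box. Concretely, define a new instance with source--target pairs $(s_1,t_1)$ and $(s_2',t_2'):=(t_2,s_2)$. Since $G$ is undirected, the set of shortest paths from $s_2$ to $t_2$ is in bijection with the set of shortest paths from $t_2$ to $s_2$ via reversal, and the reversal preserves both the vertex set and the length of any path. Hence the map $(P_1,P_2)\mapsto (P_1,P_2^{\text{rev}})$ is a bijection between pairs in $\Pi(s_1,t_1)\times \Pi(s_2,t_2)$ and pairs in $\Pi(s_1,t_1)\times \Pi(s_2',t_2')$ that preserves the size of the vertex intersection $|V(P_1)\cap V(P_2)|$.

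The key observation is that the bijection swaps the disagreeing/agreeing labels: if $W=V(P_1)\cap V(P_2)$ has size at least two and appears in reversed order along $P_1$ and $P_2$, then it appears in the same order along $P_1$ and $P_2^{\text{rev}}$, making $(P_1,P_2^{\text{rev}})$ an agreeing pair for $(s_1,t_1),(s_2',t_2')$. Conversely, every agreeing pair for $(s_1,t_1),(s_2',t_2')$ corresponds under reversal to a disagreeing pair for the original terminal pairs. Therefore, the minimum number of intersections over disagreeing pairs for $(s_1,t_1),(s_2,t_2)$ equals the minimum number of intersections over agreeing pairs for $(s_1,t_1),(t_2,s_2)$.

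Given this equivalence, the algorithm simply invokes \textsc{Min-2-DSP-Undirected-Agreeing}$(G,s_1,t_1,t_2,s_2)$ (Algorithm~\ref{algo:min-2-DSP-Undirected-agreeing}), which by \Cref{lemma:twoDSPUndirected-1} runs in $O(m+n)$ time under the hypothesis that the minimum intersection count is at least two; note that this hypothesis transfers verbatim across the bijection, so the precondition of \Cref{lemma:twoDSPUndirected-1} is satisfied on the swapped instance whenever it is satisfied on the original one. To report the explicit paths, we take the returned pair $(P_1,Q)$ in the swapped instance and output $(P_1, Q^{\text{rev}})$; by the bijection, $Q^{\text{rev}}$ is a valid $(s_2,t_2)$ shortest path and the intersection $V(P_1)\cap V(Q^{\text{rev}})$ has exactly the optimal disagreeing size. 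Path reversal is $O(n)$, so the overall running time remains $O(m+n)$.

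The only mild obstacle is verifying the order-reversal claim cleanly, and in particular ruling out the degenerate case $|W|\le 1$ where the agreeing/disagreeing distinction is undefined. The hypothesis $k^*\ge 2$ guarantees $|W|\ge 2$ for every optimal pair, and for $|W|\ge 2$ the reversal argument is immediate from the definition of disagreeing pairs, so no further work is needed. This completes the proof outline.
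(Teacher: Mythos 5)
Your proposal is correct and follows the same route the paper itself suggests: it explicitly mentions that "an alternate easier way of thinking about disagreeing paths is switching the terminals $s_2, t_2$ with each other," so all disagreeing pairs become agreeing pairs for the new instance, which can then be handled by Lemma~\ref{lemma:twoDSPUndirected-1}. You flesh out the bijection argument and the transfer of the precondition in more detail than the paper's terse remark, but the underlying idea and algorithm invocation are the same.
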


Since any pair of shortest paths with at least two intersections are either agreeing or disagreeing, from \Cref{lemma:twoDSPUndirected-1} and \Cref{lemma:twoDSPUndirected-2} we get the following result:

\begin{lemma}
\label{lemma:twoDSPUndirected-3}
    Given that minimum number of intersections between any pair of shortest paths is at least two, there exists an algorithm which finds pair of paths with minimum number of intersections and reports the explicit paths in $O(m + n)$ time.
\end{lemma}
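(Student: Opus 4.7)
My plan is to derive this statement as a direct corollary of \Cref{lemma:twoDSPUndirected-1} and \Cref{lemma:twoDSPUndirected-2}, once I pin down why the agreeing/disagreeing dichotomy is exhaustive whenever the minimum overlap is at least two. The algorithm I would describe is trivial given the two preceding lemmas: run both sub-routines in parallel and output the better of the two answers together with the corresponding path pair.

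The only non-trivial content is the structural claim that for any $(P_1,P_2)\in\Pi(s_1,t_1)\times\Pi(s_2,t_2)$ whose common vertex set $W$ satisfies $|W|\geqslant 2$, the vertices of $W$ either appear in identical order on both paths or in exactly reversed order. I would prove this by contradiction on three common vertices $u,v,w$ whose relative order on $P_2$ is neither the same as nor the exact reverse of their order on $P_1$. Since any sub-path of a shortest path between two vertices of $W$ is itself a shortest path between those vertices, each of $P_1$ and $P_2$ supplies an equation relating $\dist(u,v)$, $\dist(v,w)$, and $\dist(u,w)$. Using $\dist(x,y)=\dist(y,x)$ in the undirected setting and combining the two equations algebraically, any inconsistent ordering forces one of these pairwise distances to equal zero, contradicting the strict positivity of edge weights. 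Hence the dichotomy holds under the hypothesis.

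With this in hand, the algorithm is: invoke \Cref{lemma:twoDSPUndirected-1} to obtain the optimum agreeing pair $(P_1^{a},P_2^{a})$ with intersection count $k_a$, invoke \Cref{lemma:twoDSPUndirected-2} to obtain the optimum disagreeing pair $(P_1^{d},P_2^{d})$ with intersection count $k_d$, and return the smaller of $k_a,k_d$ along with its realising pair. Correctness is immediate from the dichotomy plus the optimality guarantees of the two sub-routines; the total running time is $O(m+n)$, dominated by two undirected SSSP computations via Thorup's algorithm and two dominator-tree constructions. I expect the main obstacle to be precisely the dichotomy argument above; everything else is black-box invocation of the preceding lemmas.
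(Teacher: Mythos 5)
Your proposal is correct and follows exactly the route the paper takes: invoke \Cref{lemma:twoDSPUndirected-1} and \Cref{lemma:twoDSPUndirected-2} and return the better of the two answers, relying on the fact that every pair of shortest paths with at least two common vertices is either agreeing or disagreeing. The paper states this dichotomy without proof (the definitions are borrowed from Akmal et al.), whereas you go further and sketch a proof of it via the three-vertex distance argument; that argument is sound, though to make it complete you should also note the short combinatorial step that a permutation of $W$ which is neither the identity nor the full reversal must restrict to a non-identity, non-reversal ordering on \emph{some} triple, which is what licenses your reduction to three vertices $u,v,w$.
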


\subsection{Handling Pairs with Singular Intersection}

We now study the scenario where the intersection between $(s_1, t_1)$-shortest path and $(s_2, t_2)$-shortest path is a single node. For any vertex $v$, let $F_v$ be enumerating polynomial for pairs $(P_1, P_2) \in \Pi(s_1, t_1) \times \Pi(s_2, t_2)$ for which $v$ is unique intersection point.


\cite{AkmalVW24} implicitly showed how to evaluate $F_v$, for each $v$, in just $O(m+n\log n)$ time. For completeness, we present the details below.

\smallskip

For any $x,y,w$, let $R_w(x,y)$ be enumerating polynomial of all $(x,w)$ and $(y,w)$ shortest paths that are internally vertex-disjoint.
We can express $R_w(x,y)$ as follows:

\begin{equation*}
\label{Eq:Rwxy}
R_w(x,y) ~=~ F(x,w) F(y,w) ~~- \sum_{a \in N(w) \cap V(x, w) \cap V(y, w)} z_{aw}^2 F(x, a) F(y, a).
\end{equation*}

Now consider a vertex $v$.
Let $\A_v$ be collection of all pair $(P_1, P_2) \in \Pi(s_1, t_1) \times \Pi(s_2, t_2)$ passing through $v$ that satisfies the condition that edges appearing before $v$ in $P_1,P_2$ are distinct, and likewise, the edges appearing after $v$ in $P_1,P_2$ are distinct.\footnote{We say that an edge $e$ appears before $v$ in $P_i$ if  $e$ lie in the subpath $P_i[s_i,v]$. Similarly, we define to notion of an edge $e$ appearing after $v$ in $P_i$.}
Note that $\A_v$ will consists of dis-agreeing pairs as well as pairs with single intersection.

\medskip

For any $v\in V(s_1,t_1)\cap V(s_2,t_2)$,
the enumerating polynomial $A_v$ for pairs in $\A_v$ is given by

\begin{equation}
\label{Eq:Av}
A_v ~=~ R_v(s_1,s_2) R_v(t_1,t_2).
\end{equation}

as the subpaths $P_1[s_1,v]$, $P_2[s_2,v]$ are internally vertex-disjoint, as well as the subpaths $P_2[v,t_1]$, $P_2[v,t_2]$ are internally vertex-disjoint.

\paragraph{Prefix-Suffix overlap}
Let $\X_v\subseteq \A_v$ be set of those pairs $P_1,P_2$ that satisfy 
that the edges incident to $v$ in the subpaths $P_1[s_1,v]$ and $P_2[v,t_2]$ are identical.
Then, for any $v\in V(s_1,t_1)\cap V(s_2,t_2)$,
the enumerating polynomial $X_v$ for pairs in $\X_v$ is given by

\begin{equation}
\label{Eq:Xv}
X_v ~=~ \sum_{(a,v)\in E(s_1,v)\cap E(v,t_2)}
z_{av}^2 \ F(s_1,a)F(v,t_1) \ F(s_2,v)F(a,t_2).
\end{equation}

\paragraph{Prefix-Suffix Disjointness with Suffix-Prefix overlap}
Let $\Y_v\subseteq \A_v$ be set of those pairs $P_1,P_2$ that satisfy 
that (i) the subpaths $P_1[s_1,v]$ and $P_2[v,t_2]$ are internally vertex-disjoint, and
(ii) the edges incident to $v$ in the subpaths $P_1[v,t_1]$ and $P_2[s_2,v]$ are identical.
Then, for any $v\in V(s_1,t_1)\cap V(s_2,t_2)$,
the enumerating polynomial $Y_v$ for pairs in $\Y_v$ is given by

\begin{equation}
\label{Eq:Yv}
Y_v ~=~ \sum_{(v,a)\in E(v,t_1)\cap E(s_2,v)}
R_v(s_1,t_2) \ 
z_{va}^2 \ F(a,t_1) \ F(s_2,a).
\end{equation}

\paragraph{Paths with Single Intersection}
Observe that the non-twin crossing pairs in $\A_v\setminus (\X_v\cup \Y_v)$ is precisely the collection of those pairs for which $v$ is the unique intersection point.
Thus, the enumerating polynomial $F_v$ for pairs $(P_1, P_2) \in \Pi(s_1, t_1) \times \Pi(s_2, t_2)$ with $v$ as unique intersection is given by
$$F_v~=~A_v - (X_v+Y_v).$$

\bigskip

In order to check if $F_v$ is non-zero it suffices to evaluate this polynomial  on a random assignment of the variables $\{z_e\}_{e \in E}$ drawn from a sufficiently large finite field $\mathbb{F}$.

\medskip

Observe that the polynomials $\bar R_w(x,y)$ can be evaluated, for all distinct $x,y\in \{s_1,t_1,s_2,t_2\}$, in $O(m + n)$ total time. 
This is possible because we can evaluate $\barF(x, y)$, for all $(x,y)$ in $\{s_1, s_2\}\times V$ and $V\times \{t_1, t_2\}$, in $O(m+n)$ total time using Lemma~\ref{lem:Fxy} along with the $O(m+n)$ time single-source shortest-path algorithm~\cite{Thorup97} for the terminals. 

\medskip

Given evaluations $\bar R_w(x,y)$ and $\barF(x,y)$ for appropriate $x,y,w$, Eq. \ref{Eq:Av}, \ref{Eq:Xv}, \ref{Eq:Yv}, imply that one can evaluate $\barF_v$, for each $v\in V$, in $O(m+n)$ total time.
By~Schwartz-Zippel lemma, with high probability, Min-2-DSP for input pair $(s_1,t_1)$ and $(s_2,t_2)$ is 1 if and only if $\barF_v\neq 0$, for some $v\in V$. Furthermore, this procedure only takes $O(m+n)$ time overall.

\paragraph{Reporting the paths from the singular point of intersection} After finding the vertex $v$ (if it exists) such that $\barF_v \neq 0$, we describe how to report $(P_1, P_2) \in \Pi(s_1, t_1) \times \Pi(s_2, t_2)$ with $V(P_1) \cap V(P_2) = \{v\}$. It can be easily observed that this is equivalent to reporting the $(s_1, v), (v, t_1), (s_2, v), (v, t_2)$ shortest paths intersecting only at $v$. 

This can be solved by finding shortest paths from $v$ to $s_1, t_1, s_2, t_2$ that are all vertex-disjoint.
Construct a directed graph $G_v$ from $G$ by including for each $(a,b) \in E(G)$ a corresponding directed edge if
$$
\dist(v,a) + wt(a,b) = \dist(v,b).
$$
Observe that there is a one-to-one correspondence between $v$ to $x$ paths in $G_v$ and  $v$ to $x$ shortest paths in $G$, for each $x\in V$.
Finally, add a sink $t$ to $G_v$ along with edges $(s_1, t)$, $(s_2, t)$, $(t_1, t)$, $(t_2, t)$. Now, finding four vertex-disjoint paths from $v$ to $t$ (which can be done in $O(m + n)$ time using the Ford-Fulkerson algorithm) yields $(s_1,t_1)$ and $(s_2,t_2)$ shortest paths in $G$ that intersect only at $v$.

This results in the following lemma.

\begin{lemma}
\label{lemma:twoDSPUndirected-4}
    There exists an algorithm which verifies in $O(m+n)$ time if the minimum number of intersections over all pairs of $(s_1,t_1)$ and $(s_2,t_2)$ shortest paths is exactly 1, and reports the explicit paths which intersect exactly once (if they exist) in $O(m + n)$ time.
\end{lemma}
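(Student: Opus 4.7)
The plan is to follow exactly the algorithmic blueprint laid out in the paragraphs above the lemma, and to formalize the two claims (correctness of the polynomial identity $F_v = A_v - (X_v + Y_v)$, and the time bound). The main structural insight is to detect, for each candidate meeting vertex $v$, whether there exists a pair $(P_1,P_2) \in \Pi(s_1,t_1)\times\Pi(s_2,t_2)$ whose unique common vertex is $v$. To that end I would verify the following four identities in turn:
\begin{equation*}
R_w(x,y) = F(x,w)F(y,w) - \sum_{a\in N(w)\cap V(x,w)\cap V(y,w)} z_{aw}^2\,F(x,a)F(y,a),
\end{equation*}
which enumerates internally vertex-disjoint shortest $(x,w)$–$(y,w)$ path pairs via inclusion–exclusion on the edge into $w$, with twin-crossing contributions cancelling mod $2$; and the three expressions for $A_v$, $X_v$, $Y_v$ given in equations~\eqref{Eq:Av}–\eqref{Eq:Yv}.

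Next I would prove $F_v = A_v - (X_v + Y_v)$. The strategy is to show that the pairs in $\A_v$ decompose, up to pairs whose contributions cancel in characteristic $2$, into three disjoint collections: the pairs with singular intersection at $v$, the pairs in $\X_v$ (where the prefix of $P_1$ ending at $v$ and the suffix of $P_2$ starting at $v$ share the edge at $v$), and the pairs in $\Y_v$ (the symmetric case). Any other pair in $\A_v$ shares some other common vertex $w\neq v$ with $v$; I would argue via the swap operation $\phi_{\cdot,\cdot}$ and \Cref{cl:crossing} that such pairs are matched into pairs with identical monomial contribution and thus vanish. The main obstacle here is bookkeeping: carefully isolating the twin-crossing pair that triggers the cancellation, and ruling out double-counting between $\X_v$ and $\Y_v$ (handled by the disjointness clause in the definition of $\Y_v$, which forces the shared edge at $v$ to arise only on the ``opposite side'').

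For the running time of the detection step, I would combine \Cref{lem:Fxy} with Thorup's undirected SSSP algorithm~\cite{Thorup97} to compute, in $O(m+n)$ total time, values $\bar F(s_1,\cdot), \bar F(\cdot,t_1), \bar F(s_2,\cdot), \bar F(\cdot,t_2)$ for random edge assignments $\bar z_e \in \mathbb{F}$. The quantities $\bar R_w(x,y)$ for $(x,y)\in\{s_1,t_1,s_2,t_2\}^2$ then follow from the above display by a single pass over the incoming edges of each $w$, again $O(m+n)$ in total. Evaluating $\bar A_v, \bar X_v, \bar Y_v$ and hence $\bar F_v$ for every $v$ is then $O(\indeg(v)+\outdeg(v))$ per vertex, summing to $O(m+n)$. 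By the Schwartz–Zippel lemma (\Cref{lem:SZ}) applied over a field of size $|\mathbb{F}| = 2^{\Theta(\log n)}$, with high probability the algorithm declares ``minimum intersection is $1$'' precisely when some $\bar F_v \neq 0$, and in that case the witness vertex $v^\star$ is exposed.

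For reporting, given $v^\star$ I would construct the directed graph $G_{v^\star}$ containing the edge $(a,b)$ whenever $\dist(v^\star,a) + wt(a,b) = \dist(v^\star,b)$, so that directed $v^\star$-to-$x$ paths in $G_{v^\star}$ correspond bijectively to shortest $v^\star$-to-$x$ paths in $G$. Adding a super-sink $t$ with edges from $s_1,t_1,s_2,t_2$, the task reduces to finding four internally vertex-disjoint $v^\star$–$t$ paths, which I would solve in $O(m+n)$ time by four augmenting-path iterations of Ford–Fulkerson on the vertex-split of $G_{v^\star}$. The resulting four paths, re-interpreted in $G$, yield the desired $(s_1,t_1)$ and $(s_2,t_2)$ shortest paths intersecting only at $v^\star$. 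The only subtle point I would flag is that the four disjoint paths exist because $\bar F_{v^\star}\neq 0$ implies some monomial in $F_{v^\star}$ is non-zero, which in turn witnesses a shortest-path pair with singular intersection at $v^\star$, i.e.\ exactly four vertex-disjoint branches emanating from $v^\star$ to the terminals in $G_{v^\star}$ (after collapsing them to $t$).
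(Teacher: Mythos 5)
Your proposal follows the paper's own argument essentially verbatim: the same helper polynomials $R_w$, $A_v$, $X_v$, $Y_v$ with inclusion–exclusion over the edges at $v$ and cancellation of twin-crossing contributions in characteristic two, the same $O(m+n)$ evaluation via Thorup SSSP plus per-vertex aggregation, the same Schwartz–Zippel detection of a witness $v^\star$, and the same reporting step via the shortest-path DAG $G_{v^\star}$, a super-sink, and four vertex-disjoint augmenting paths by Ford–Fulkerson. The one point you flag as needing care (that the twin-crossing pairs in $\A_v\setminus(\X_v\cup\Y_v)$ pair off and cancel, leaving exactly the singular-intersection pairs) is also where the paper defers to the structure inherited from Akmal et al., so your treatment is at the same level of rigor.
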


On combining \Cref{lemma:twoDSPUndirected-3} and \Cref{lemma:twoDSPUndirected-4}, we get the following result.

\mintwoDSPundirected*

\bibliographystyle{alpha}
\bibliography{ref}

\appendix
\section*{Appendix}

\section{Omitted Proofs}
\label{sec:deferred-proofs}

\lemorderintersect*
\begin{proof}
Suppose for the sake of contradiction that there are vertices $w_1, w_2 \in S$ such that $w_1$ appears before $w_2$ in $P_1$, but after $w_2$ in $P_2$. Since each sub-path of $P_1$ is also a shortest path, we see that $\dist(w_1, v) > \dist(w_2, v)$, but the reverse inequality is implied by the position of $w_1$ and $w_2$ in $P_2$. This leads to a contradiction. 
\end{proof}

\smallskip
\lemFxy*
\begin{proof}
We first compute the all-pairs distance matrix using Johnson's algorithm~\cite{Johnson77} in $O(mn \log n)$ time. Using this matrix, we evaluate $F(x,y)$ for all $x,y \in V$ as follows: 

\begin{equation}
F(x,x) = 1, \quad
F(x,y) = \sum_{\substack{ (a,y) \in E(x,y)}} F(x,a) \cdot z_{ay}, 
\label{eq:Fxy}
\end{equation}

where the membership $(a,y) \in E(x,y)$ can be verified in constant time using the distance matrix. To compute $F(x, y)$ efficiently for all vertex pairs, we fix a source vertex $x$ and construct the DAG $G_x = (V, \bigcup_{v \in V} E(x, v))$. We then process the vertices of $G_x$ in topological order, ensuring that for each vertex $y$, all required values $F(x, a)$ in Eq.~\ref{eq:Fxy} have already been computed prior to evaluating $F(x, y)$. Since each edge in $E$ is processed once per source $x$, the computation of $F(x, y)$ values for all pairs can be performed in $O(mn)$ time, given the distance matrix of $G$. Therefore, the overall computation takes $O(mn \log n)$ time.
\end{proof}

\begin{lemma}
\label{lemma:concordant-pairs-0}
Let $(P_1,P_2)\in \Pi(s_1,t_1)\times \Pi(s_2,t_2)$ be a pair of paths, and $(v,w)$ be a pair of vertices such that $v$ precedes $w$ on $P_1,P_2$. Then, any internal vertex of $P_1[v,w]$ lying on $P_2$ must be contained in the subpath $P_2[v,w]$.
\end{lemma}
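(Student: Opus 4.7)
The plan is to use a case analysis on the position of $u$ along $P_2$. Since $u$ is assumed to lie on $P_2$ and $v$ precedes $w$ on $P_2$, there are exactly three possibilities: $u$ appears strictly before $v$ on $P_2$, $u$ lies in $P_2[v,w]$ (the desired conclusion), or $u$ appears strictly after $w$ on $P_2$. I would rule out the two extreme cases by showing that each forces a directed closed walk of weight $0$, contradicting the hypothesis that $G$ has no cycle of negative or zero weight.

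The driving identities all come from the optimal substructure of shortest paths: every subpath of a shortest path is itself a shortest path. Since $u$ is an internal vertex of the $(v,w)$-shortest path $P_1[v,w]$, we obtain
\[
\dist(v,u) + \dist(u,w) = \dist(v,w).
\]
Suppose for contradiction that $u$ lies strictly before $v$ on $P_2$. Then $P_2[u,v]$ is a shortest path, and concatenating with $P_2[v,w]$ gives a shortest $(u,w)$-path, so $\dist(u,w) = \dist(u,v) + \dist(v,w)$. Substituting into the previous identity yields $\dist(v,u) + \dist(u,v) = 0$, meaning that the closed walk $P_1[v,u] \cdot P_2[u,v]$ (which is well-defined because $u \neq v$) has total weight $0$. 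The symmetric case, where $u$ lies strictly after $w$ on $P_2$, uses the subpath $P_2[w,u]$ together with $P_2[v,u] = P_2[v,w] \cdot P_2[w,u]$ to derive $\dist(u,w) + \dist(w,u) = 0$, again producing a closed walk of weight $0$.

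The only subtle step is converting the conclusion ``closed walk of weight $0$'' into a genuine contradiction with the no-zero-weight-cycle hypothesis: one must observe that any finite-weight directed closed walk in $G$ decomposes (by iteratively stripping off simple cycles) into a multiset of simple directed cycles, whose total weight is the weight of the walk. Under our hypothesis every simple cycle has strictly positive weight, so any nontrivial directed closed walk has strictly positive total weight. This folklore decomposition is the main obstacle to a clean proof, and it is the sole place where the ``no negative or zero weight cycle'' assumption enters. Once this is in hand, both extreme cases are eliminated and we conclude that $u \in P_2[v,w]$, completing the argument.
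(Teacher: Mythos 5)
Your proof is correct and follows essentially the same route as the paper: both assume $u$ falls outside $P_2[v,w]$, use subpath optimality of shortest paths to force a zero-weight closed walk through $u$ and one of $v,w$, and contradict the no-zero-weight-cycle hypothesis. You are slightly more careful than the paper (which treats only one case ``without loss of generality'' and calls the offending closed walk a cycle outright) in observing that the zero-weight object is a priori a closed walk that must first be decomposed into simple cycles, but the underlying argument is identical.
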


\begin{proof}
Assume, for the sake of contradiction, that there exists an internal vertex $x$ in $P_1[v,w]$ that lies on $P_2$ but is not contained in the subpath $P_2[v,w]$. Without loss of generality, suppose that $x$ appears before $v$ on $P_2$. 

Let $d_1$ and $d_2$ respectively denote the lengths of the subpaths $P_2[x,v]$ and $P_2[v,w]$ (see \Cref{fig:fig-concordant-pair}). Observe that both $P_1[x,w]$ and $P_2[x,w]$ are shortest paths from $x$ to $w$, and hence each has length $d_1 + d_2$. Similarly, $P_1[v,w]$ and $P_2[v,w]$ are shortest paths from $v$ to $w$, so each has length $d_2$. Therefore, the subpath $P_1[v,x]$ would have length $-d_1$, implying that the concatenation $P_1[v,x] \circ P_2[x,v]$ forms a cycle of zero length. This contradicts the assumption that $G$ contains no zero-length cycles.
\end{proof}

\begin{figure}[!ht]
\centering
\includegraphics[height=40mm]{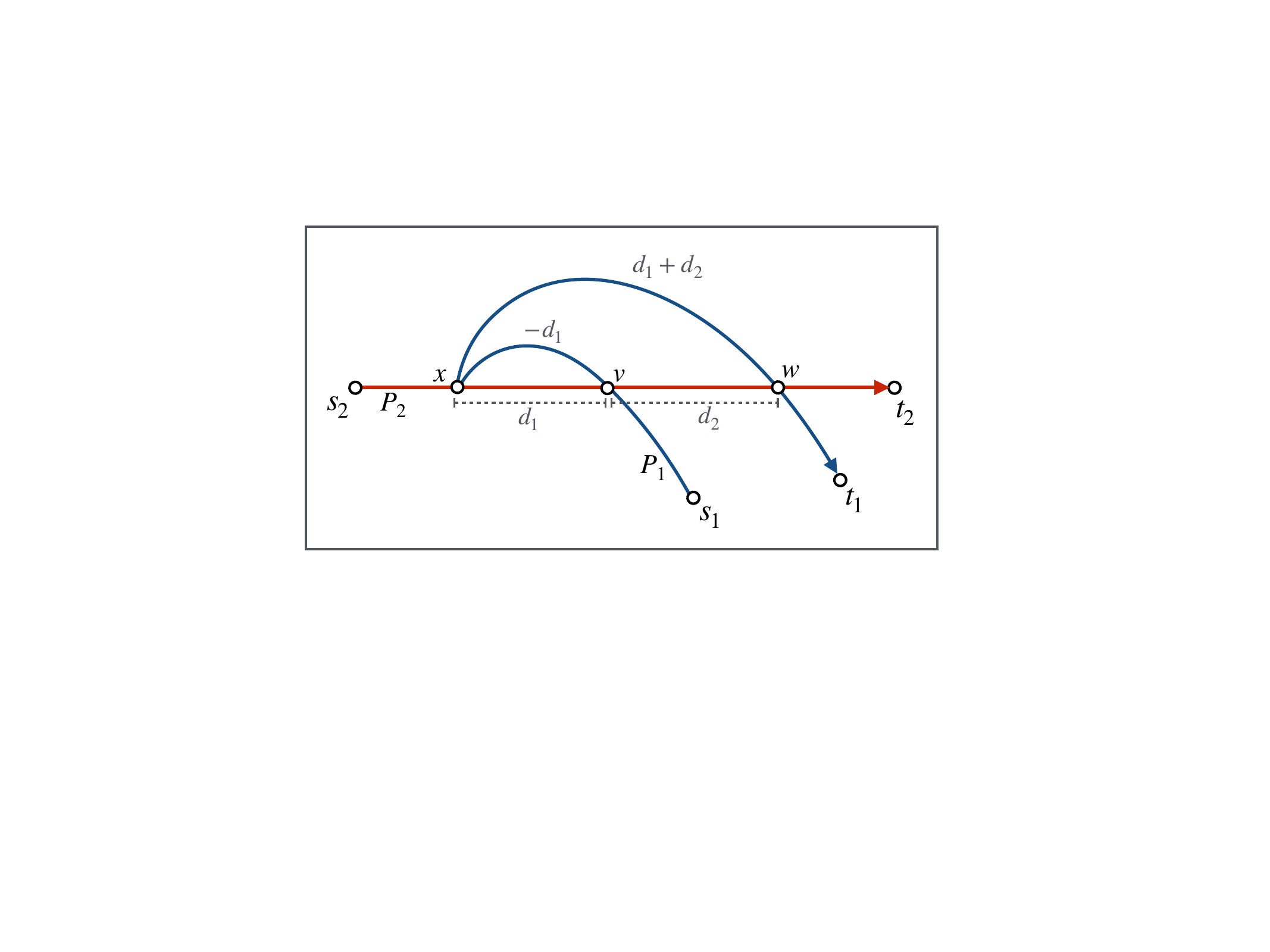}
\caption{Illustration of counter-example.}
\label{fig:fig-concordant-pair}
\end{figure}

\lemconcordantpairs*
\begin{proof}
Let $(v,w),(v',w') \in \mathcal{C}(P_1,P_2)$ be two distinct concordant pairs.  
We first prove claim~1. 
Assume, for contradiction, that the subpaths $P_2[v,w]$ and $P_2[v',w']$ are not vertex-disjoint.  
Without loss of generality, suppose that $v' \in V(P_2[v,w])$ and $v' \neq v$.  
Along $P_1$, there are three possible orderings:
$$
v \prec_{P_1} v' \preceq_{P_1} w 
\quad\text{or}\quad
v' \prec_{P_1} v \prec_{P_1} w
\quad\text{or}\quad
v \prec_{P_1} w \prec_{P_1} v'.
$$
The first ordering cannot be possible, since if $v$ precedes $v'$ on both $P_1$ and $P_2$, then $(v', w')$ cannot be a concordant pair: by definition, no vertex common to $P_1$ and $P_2$ can simultaneously precede $v'$ on both paths. The second and third orderings are ruled out as they contradict \Cref{lemma:concordant-pairs-0}. See~\Cref{fig:vertex-ordering}. Therefore, the subpaths $P_2[v, w]$ and $P_2[v', w']$ (and by symmetry, $P_1[v, w]$ and $P_1[v', w']$) are necessarily vertex-disjoint. This proves the first claim.

\begin{figure}[!ht]
\centering
\includegraphics[height=26.5mm]{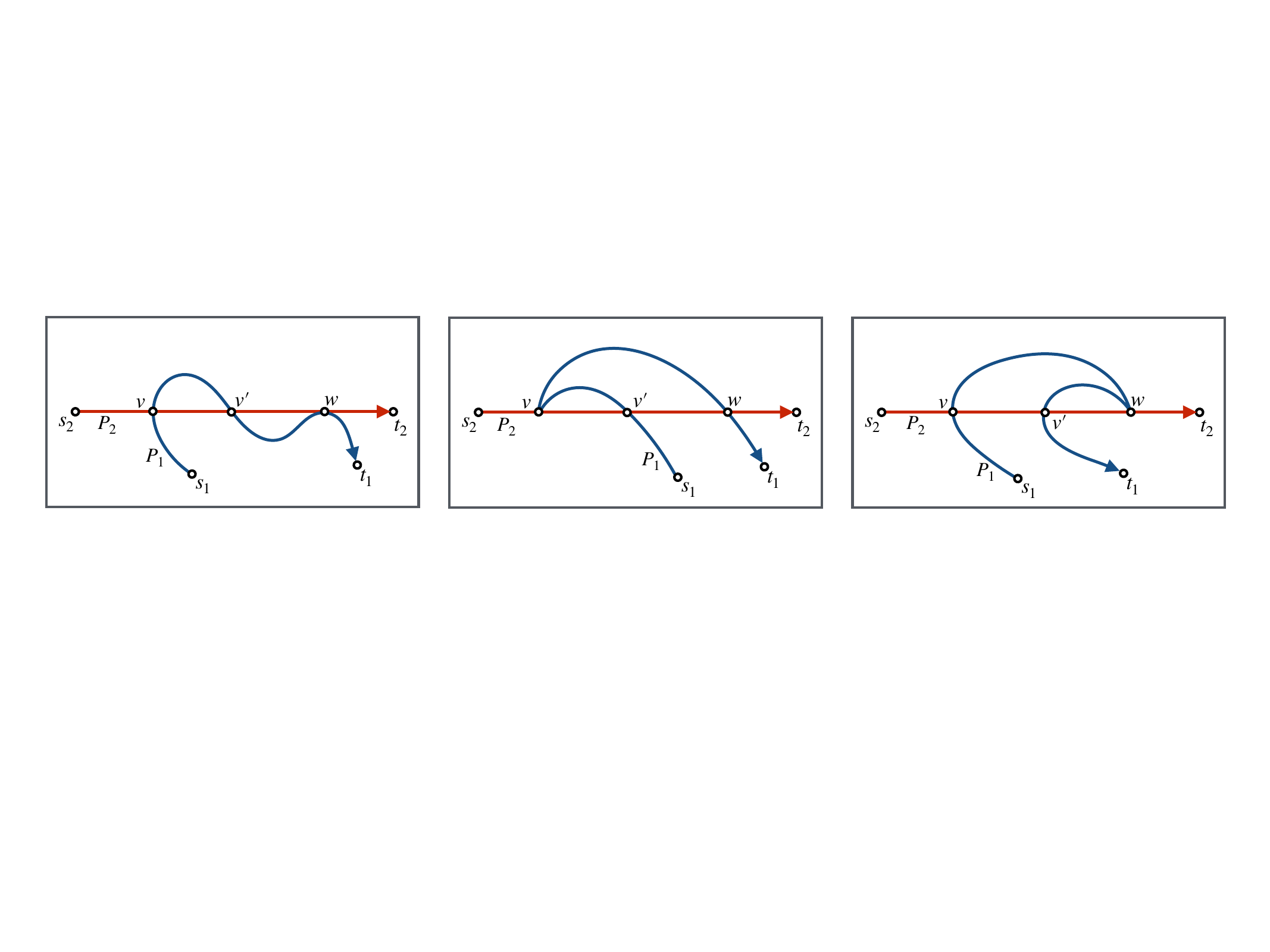}
\caption{Illustration of possible ordering of vertices $v',v,w$ along path $P_1$.}
\label{fig:vertex-ordering}
\end{figure}

We next prove the second claim.
By the definition of concordant pairs, the prefixes $P_1[s_1,v]$ and $P_2[s_2,v]$ as well as the suffixes $P_1[w,t_1]$ and $P_2[w,t_2]$ are internally vertex-disjoint.  
Together with claim~1, this implies that $(v',w')$ must lie entirely within one of the following pairs of segments:
\begin{enumerate}
\item $P_1[s_1,v]$ and $P_2[w,t_2]$, or
\item $P_2[w,t_1]$ and $P_2[s_2,v]$.
\end{enumerate}
Consequently, if $(v',w')$ appears before $(v,w)$ along $P_1$, it must appear after $(v,w)$ along $P_2$, and vice-versa. This establishes the second claim.
\end{proof}

\smallskip
\obspathCritical*
\begin{proof}
Define a subgraph $G^{out}_v$ of $G$ by including only those edges $(a,b) \in E(G)$ that satisfy
$$
\dist(v,a) + wt(a,b) = \dist(v,b).
$$

That is, $G^{out}_v$ contains exactly those edges that participate in some shortest path from $v$ to $b$, for $b\in V$. It follows that $G^{out}_v$ is acyclic, and for any $w\in V$, there is a bijection between $(v,w)$-shortest-paths in $G$ and $(v,w)$-paths in $G^{out}_v$. 

This implies that the cut vertices for pair $(v,w)$ in $G^{out}_v$ correspond exactly to the $(v,w)$-distance critical vertices in $G$.

By applying the vertex-capacitated version of the Ford-Fulkerson algorithm to $G^{out}_v$, we obtain two paths from $v$ to $w$ in $G^{out}_v$, intersecting at only $(v,w)$-cut vertices. These paths correspond to two $(v,w)$-shortest paths in $G$ that intersect only at $(v,w)$-distance critical vertices.
\end{proof}

\smallskip
\lemHv*
\begin{proof}
Consider a pair of shortest paths $( P_1, P_2 ) \in \Pi(x_1,y_1) \times \Pi(x_2,y_2) $ contributing to $H_v(x_1, y_1, x_2, y_2)$. For $i=1,2$, decompose $P_i = Q_i \cdot R_i$, where $Q_i = P_i[x_i, v]$ and $R_i = P_i[v, y_i]$ are shortest paths.

Then,
$$f(P_1)f(P_2) = \prod_{i=1}^2 \big(f(Q_i) \cdot f(R_i)\big).$$

Thus, $H_v(x_1, y_1, x_2, y_2)$ enumerates all such tuples $(Q_1, R_1, Q_2, R_2)$ of shortest paths satisfying the following.
\begin{itemize}
\item For each $i = 1, 2$, $Q_i$ is an $(x_i, v)$-shortest path and $R_i$ is a $(v, y_i)$-shortest path.
\item The paths $Q_1$ and $R_2$ are internally vertex-disjoint.
\end{itemize}

The contribution of all such internally disjoint path pairs $(Q_1, R_2)$ is captured by the polynomial $F_{\disj}(x_1, v, v, y_2)$. Meanwhile, the contributions of the unconstrained paths $Q_2$ and $R_1$, which are only required to be shortest paths between their endpoints, are given by $F(x_2, v)$ and $F(v, y_1)$, respectively. Multiplying these three components yields the desired expression.
\end{proof}

\smallskip
\lemHav*
\begin{proof}
Consider a pair of shortest paths $( P_1, P_2 )$ contributing to $H_{av}(x_1, y_1, x_2, y_2)$. For $i = 1, 2$, decompose each $P_i = Q_i \cdot e \cdot R_i$, where $Q_i = P_i[x_i, a]$ and $R_i = P_i[v, y_i]$ are shortest paths.
Then,

$$
f(P_1) f(P_2) = \prod_{i=1}^2 \big(f(Q_i) \cdot z_{av} \cdot f(R_i)\big) = z_{av}^2 \cdot \prod_{i=1}^2 f(Q_i) f(R_i).
$$

Thus, $H_{av}(x_1, y_1, x_2, y_2)/z_{av}^2$ enumerates all such tuples $(Q_1, R_1, Q_2, R_2)$ of shortest paths satisfying the following.
\begin{itemize}
\item For each $i = 1, 2$, $Q_i$ is an $(x_i, a)$-shortest path and $R_i$ is a $(v, y_i)$-shortest path.
\item The paths $Q_1$ and $R_2$ are internally vertex-disjoint.
\end{itemize}

The contribution of all such internally disjoint path pairs $(Q_1, R_2)$ is captured by the polynomial $F_{\disj}(x_1, a, v, y_2)$. Meanwhile, the contributions of the unconstrained paths $Q_2$ and $R_1$, which are only required to be shortest paths between their endpoints, are given by $F(x_2, a)$ and $F(v, y_1)$, respectively. Multiplying these components together, along with the $z_{av}^2$ term, yields the desired expression.
\end{proof}

\smallskip
\lemDv*
\begin{proof}
Let $\A$ be a collection of all pairs of paths $(P_1,P_2) \in \Pi(x_1, y_1) \times \Pi(x_2, y_2)$ such that
\begin{enumerate}
\item $v\in V(P_1)\cap V(P_2)$,  
\item prefix $P_1[x_1,v]$ and suffix $P_2[v,y_2]$ are internally vertex-disjoint, and
\item  if $a_1$ and $a_2$ are the vertices preceding $v$ in $P_1$ and $P_2$ respectively (assuming these vertices exist, otherwise this condition holds vacuously), then $a_1 \neq a_2$. 
\end{enumerate}

The definitions of the polynomials $H_v$ and $H_{av}$ imply that the enumerating polynomial for the pairs in $\A$ is given by
\[
H_v(x_1, y_1, x_2, y_2)
-
\sum_{a \in \IN(v)} H_{av}(x_1, y_1, x_2, y_2).
\]

Let $\B$ be the set of all pairs of paths $(P_1, P_2)$ contributing to $D_v(x_1, y_1, x_2, y_2)$. In order to prove the claim it suffices to prove that the enumerating polynomials of $\A$ and $\B$ are identical (mod 2).
First, observe that $\B \subseteq \A$. Indeed, if $(P_1, P_2) \in \B$, then $P_1[x_1, v]$ is internally vertex-disjoint from both $P_2[x_2, v]$ and $P_2[v, y_2]$, which in particular ensures that the in-edges of $v$ in $P_1$ and $P_2$ (if they exist) must be distinct.

Now consider any pair $(P_1, P_2) \in \A \setminus \B$. The definitions of $\A$ and $\B$ imply that the following two conditions must hold: (i) the sub-paths $P_1[x_1,v]$ and $P_2[x_2,v]$ have a common internal vertex, (ii) the edges preceding $v$ in these two sub-paths respectively are distinct. 
Let $r$ be the {\bf immediate} predecessor of $v$ common to both the paths. 
It follows that $(r,v)$ is a twin-crossing pair for $(P_1, P_2)$. Let $(Q_1, Q_2) = \phi_{r,v}(P_1,P_2)$.~\Cref{cl:crossing} shows that $(Q_1, Q_2)$ has $(r,v)$ as twin-crossing pair and hence lie in $\A \setminus \B$ (see \Cref{figure:swap-2-DSP}).
Further, it follows that $r$ must be the immediate predecessor of $v$ on $Q_1,Q_2$ and $(P_1, P_2) = \phi_{r,v}(Q_1, Q_2)$.


Thus, the set $\A \setminus \B$ can be partitioned into pairs 
\[
\{(P_1, P_2), (Q_1, Q_2)\},
\]
where $(Q_1, Q_2) = \phi_{r,v}(P_1, P_2)$ and $r$ is the immediate predecessor of $v$ on $P_1,P_2$ (as well as, on $Q_1,Q_2$).  
Since $f(P_1) f(P_2) = f(Q_1) f(Q_2)$, we see that each such pair contributes two identical monomials to the enumerating polynomial of $\A \setminus \B$. Thus, the total contribution of $\A \setminus \B$ is a multiple of two, and hence zero in a field of characteristic two. This establishes the claim. 
\end{proof}

\begin{lemma}
We can compute in $O(mn)$ time an ordered sequence $\sigma$ of the vertices in $V$ that respects the partial order $\prec$, that is, for all $u, v \in V$, if $u \prec v$ then $u$ appears before $v$ in $\sigma$.
\label{lem:partialorder}
\end{lemma}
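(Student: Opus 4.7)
The plan is to build the ``shortest-path DAG'' rooted at $s_1$ and then take a topological ordering of its vertices. First, compute the single-source shortest-path distances $\dist(s_1, v)$ for every $v \in V$ using the Bellman-Ford algorithm; this runs in $O(mn)$ time, and is well-defined since $G$ has no negative-weight cycle. Next, construct the subgraph $G_{s_1} = (V, E_{s_1})$ where
\[
E_{s_1} = \{(u,v) \in E \mid \dist(s_1, u) + wt(u,v) = \dist(s_1, v)\},
\]
which can be obtained in $O(m)$ time by scanning all edges once.

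The key structural claim is that $G_{s_1}$ is acyclic. Suppose for contradiction that $(u_1, u_2), (u_2, u_3), \ldots, (u_k, u_1)$ is a directed cycle in $G_{s_1}$. Summing the defining equation $\dist(s_1,u_i) + wt(u_i, u_{i+1}) = \dist(s_1, u_{i+1})$ cyclically over $i = 1,\ldots,k$ (indices modulo $k$) gives $\sum_{i=1}^{k} wt(u_i, u_{i+1}) = 0$, contradicting the hypothesis that $G$ contains no cycle of zero or negative weight. Hence $G_{s_1}$ is a DAG, and we can compute a topological ordering $\sigma$ of $V$ with respect to $G_{s_1}$ in $O(m + n)$ additional time, placing any vertex not reachable from $s_1$ arbitrarily (such vertices cannot lie on any $s_1$-to-$y$ shortest path and so do not participate in any nontrivial $\prec$ relation).

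It remains to verify that $\sigma$ respects $\prec$. Suppose $u \prec v$ for distinct $u,v \in V$. By definition of $\prec$, there exists an $s_1$-to-$v$ shortest path $P$ in $G$ containing $u$ as an internal vertex. Every edge of $P$ satisfies the distance-preserving equation by optimality of subpaths, so $P$ is in fact a directed path in $G_{s_1}$, and in particular its subpath from $u$ to $v$ witnesses a directed $u$-to-$v$ path in $G_{s_1}$. Since $\sigma$ is a topological ordering of $G_{s_1}$, $u$ must precede $v$ in $\sigma$, as required. The total running time is $O(mn) + O(m+n) = O(mn)$, dominated by Bellman-Ford.

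The only conceptual obstacle is ensuring that the candidate subgraph $G_{s_1}$ is indeed acyclic; this is exactly the point where the hypothesis that $G$ contains no negative or zero weight cycle is used in an essential way, and without it the ``order along a shortest path'' would not induce a genuine partial order on $V$.
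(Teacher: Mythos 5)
Your proof is correct and follows essentially the same approach as the paper: build the shortest-path DAG rooted at $s_1$ (edges $(u,v)$ with $\dist(s_1,u)+wt(u,v)=\dist(s_1,v)$), observe that the no-negative-or-zero-weight-cycle hypothesis makes it acyclic, and output a topological order. You fill in two details the paper leaves implicit, namely using Bellman--Ford for the $O(mn)$ distance computation and giving the explicit cyclic-sum argument for acyclicity, but the underlying construction and correctness argument are the same.
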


\begin{proof}
Construct a subgraph $H$ of $G$ as follows: for each edge $(a, b) \in E(G)$, include the edge $(a, b)$ in $H$ if
$$
\dist(s_1,a)+wt(a,b)=\dist(s_1,b)
$$

The computation of graph $H$ takes $O(mn)$ time.
For each vertex $x \in V$, there is a one-to-one correspondence between $s_1$-to-$x$ paths in $H$ and $s_1$-to-$x$ shortest paths in $G$. Furthermore, since $G$ contains no cycles of zero weight, $H$ must be acyclic.
A topological ordering $\sigma$ of $H$ is a total order that extends the partial order $\preceq$.
For any $x, y \in V$ with $x \preceq y$, we have that $x$ lies on an $s_1$-to-$y$ shortest path; thus, $x$ will precede $y$ in the topological ordering. Thus, the ordering of $V$ respecting $\preceq$ can be found in $O(m+n)$ time.
\end{proof}

\lemlstar*
\begin{proof}
By establishing a total order $\sigma$ extending $\prec$ (see \Cref{lem:partialorder}), we can sort $\cL$ into $\cL^*$ in $O(m+n)$ time.
The recurrence relations in the dynamic programming, specifically those for $\bar{F}_d$, $\bar{D}_v$, $\bar{H}_v$, and $\bar{H}_{av}$ require, for any tuple $(s_1, a, v, t_2) \in \cL$, subproblems of the form $(s_1, u, u, t_2)$ for nodes $u \in V(s_1, v)$, or $(s_1, b, u, t_2)$ for edges $(b, u) \in E(s_1, v)$. By definition, all such tuples are elements of $\cL$ and satisfy $u \preceq v$. Consequently, every subproblem required by the recurrence for $(s_1, a, v, t_2)$ appears earlier in $\cL^*$, ensuring that all needed values are available when computing each DP entry.
Therefore, $\cL^*$ can be constructed in $O(mn)$ time.
\end{proof}

\subsection{Success Probability of Algorithms}
\label{sec:success-prob}
In Algorithm~\ref{algo:2-DSP-directed} as well as Algorithm~\ref{algo:min-2-DSP-directed}, we check whether a certain enumerating polynomial is non-zero or not. Note that each monomial term in the final polynomials of both Algorithm~\ref{algo:min-2-DSP-directed} and Algorithm~\ref{algo:2-DSP-directed} is a product of $(s_1, t_1)$ and $(s_2, t_2)$ shortest path terms and hence, has degree at most $2n$. Hence, the degree of the polynomials to be checked is at most $2n$. So, if we take our underlying field of order $2^q$, where $q = 4 \log_2 n$, then, by Schwartz-Zippel lemma, we get that the probability with which our algorithms are correct is at least $1 - \frac{2n}{2^{4 \log_2 n}} \geq 1 - \frac{1}{n^2}$. Also, if we work in Word-RAM model with words of size $O(\log n)$, our basic field operations run in $O(1)$ time.





\end{document}